\definecolor{DarkRed}{rgb}{0.5,0.1,0.1}
\definecolor{DarkBlue}{rgb}{0.1,0.1,0.5}
\def\BState{\State\hskip-\ALG@thistlm}
\newtheorem{theorem}{Theorem}
\newtheorem{lemma}{Lemma}[section]
\newtheorem{proposition}[lemma]{Proposition}
\newtheorem{corollary}[theorem]{Corollary}
\newtheorem{claim}[lemma]{Claim}
\newtheorem{fact}[lemma]{Fact}
\theoremstyle{definition}
\newtheorem{remark}[lemma]{Remark}
\newtheorem{observation}[lemma]{Observation}
\newtheorem*{claim*}{Claim}
\newtheorem*{proposition*}{Proposition}
\newtheorem*{lemma*}{Lemma}
\newtheorem*{problem*}{Problem}
\newtheorem{mdresult}{Result}
\newenvironment{result}{\begin{mdframed}[backgroundcolor=lightgray!40,topline=false,rightline=false,leftline=false,bottomline=false,innertopmargin=2pt]\begin{mdresult}}{\end{mdresult}\end{mdframed}}
\newtheorem{mdinvariant}{Invariant}
\newcommand{\II}{\mathbb{I}}
\newcommand{\HH}{\mathbb{H}}
\newcommand{\DD}{\mathbb{D}}
\newcommand{\mi}[2]{\ensuremath{\II(#1 \,; #2)}}
\newcommand{\en}[1]{\ensuremath{\HH(#1)}}
\newcommand{\kl}[2]{\ensuremath{\DD(#1~||~#2)}}
\renewcommand{\qed}{\nobreak \ifvmode \relax \else
      \ifdim\lastskip<1.5em \hskip-\lastskip
      \hskip1.5em plus0em minus0.5em \fi \nobreak
      \vrule height0.75em width0.5em depth0.25em\fi}
\newcommand{\REM}[1]{}
\newcommand{\itfacts}[1]{Fact~\ref{fact:it-facts}-(\ref{part:#1})\xspace}
\newcommand{\event}{\ensuremath{\mathcal{E}}}
\newcommand{\tvd}[2]{\ensuremath{\norm{#1 - #2}_{tvd}}}
\newcommand{\Ot}{\ensuremath{\widetilde{O}}}
\newcommand{\eps}{\ensuremath{\varepsilon}}
\newcommand{\Paren}[1]{\Big(#1\Big)}
\newcommand{\Bracket}[1]{\Big[#1\Big]}
\newcommand{\bracket}[1]{\left[#1\right]}
\newcommand{\paren}[1]{\ensuremath{\left(#1\right)}\xspace}
\newcommand{\card}[1]{\left\vert{#1}\right\vert}
\newcommand{\IR}{\ensuremath{\mathbb{R}}}
\newcommand{\IQ}{\ensuremath{\mathbb{Q}}}
\newcommand{\inner}[2]{\langle #1, #2 \rangle}
\newcommand{\expect}[1]{\Exp\bracket{#1}}
\newcommand{\set}[1]{\ensuremath{\left\{ #1 \right\}}}
\newcommand{\poly}{\mbox{\rm poly}}
\newcommand{\polylog}{\mbox{\rm  polylog}}
\DeclareMathOperator*{\Exp}{\ensuremath{{\mathbb{E}}}}
\DeclareMathOperator*{\Prob}{\ensuremath{\textnormal{\textbf{Pr}}}}
\renewcommand{\Pr}{\Prob}
\newcommand{\Ex}{\Exp}
\newcommand{\etal}{{\it et al.\,}}
\newcommand{\supp}[1]{\ensuremath{\textnormal{supp}(#1)}}
\newcommand{\Prot}{\ensuremath{\Pi}}
\newcommand{\prot}{\ensuremath{\pi}}
\newenvironment{tbox}{\begin{tcolorbox}[
		enlarge top by=5pt,
		enlarge bottom by=5pt,
		 breakable,
		 boxsep=0pt,
                  left=4pt,
                  right=4pt,
                  top=10pt,
                  arc=0pt,
                  boxrule=1pt,toprule=1pt,
                  colback=white
                  ]%%
	}
{\end{tcolorbox}}
\newcommand{\abs}[1]{\left| #1 \right|}
\newcommand{\norm}[1]{\left\lVert #1 \right\rVert}
\newcommand{\vc}{\lambda}
\newcommand{\dist}{\operatorname{dist}}
\newcommand{\A}{\mathcal{A}}
\newcommand{\X}{\mathcal{X}}
\newcommand{\Y}{\mathcal{Y}}
\newcommand{\N}{\mathcal{N}}
\newcommand{\T}{\mathcal{T}}
\renewcommand{\H}{\mathcal{H}}
\renewcommand{\S}{\mathcal{S}}
\renewcommand{\A}{\mathcal{A}}
\newcommand{\B}{\mathcal{B}}
\newcommand{\V}{\mathcal{V}}
\newcommand{\istar}{\ensuremath{i^{*}}}
\newcommand{\zstar}{\ensuremath{z^{\star}}}
\renewcommand{\dist}[1]{\ensuremath{\mathcal{D}_{#1}}}
\newcommand{\rv}[1]{\ensuremath{\mathsf{#1}}}
\newcommand{\rProt}{\rv{\Prot}}
\newcommand{\Line}{\ensuremath{\textnormal{\textsf{LineSegment}}}\xspace}
\newcommand{\Stepcurve}{\ensuremath{\textnormal{\textsf{StepCurve}}}\xspace}
\newcommand{\distribution}[1]{\ensuremath{\textnormal{dist}(#1)}}
\renewcommand{\dist}{\ensuremath{\mathcal{D}}}
\newcommand{\CC}[1]{\ensuremath{\textnormal{\textsf{CC}}(#1)}\xspace}
\newcommand{\CCr}[2]{\ensuremath{\textnormal{\textsf{CC}}^{#2}(#1)}\xspace}
\newcommand{\AugIndex}{\ensuremath{\textnormal{\textsf{Aug-Index}}}\xspace}
\newcommand{\EvenI}{\ensuremath{\textnormal{\textsf{EvenInstance}}\xspace}}
\newcommand{\OddI}{\ensuremath{\textnormal{\textsf{OddInstance}}}\xspace}
\newcommand{\Instance}{\ensuremath{\textnormal{\textsf{Instance}}}\xspace}
\newcommand{\rZ}{\rv{Z}}
\newcommand{\rA}{\rv{A}}
\newcommand{\rB}{\rv{B}}
\newcommand{\rC}{\rv{C}}
\newcommand{\rD}{\rv{D}}
\newcommand{\errs}{\textnormal{errs}\xspace}
\title{Distributed and Streaming Linear Programming in Low Dimensions}
\author{Sepehr Assadi\footnote{Department of Computer Science, Princeton University. Supported in part by the Simons foundation Algorithms and Geometry collaboration. Majority of the work done while the author was a graduate student at University of Pennsylvania. Email: \texttt{sassadi@princeton.edu}.} \and
Nikolai Karpov\footnote{Department of Computer Science, Indiana University Bloomington. Supported in part by NSF CCF-1525024, IIS-1633215 and CCF-1844234. Email: \texttt{\{nkarpov,qzhangcs\}@indiana.edu}.} \and 
Qin Zhang\footnotemark[2]
}
\date{}
\begin{document}
\maketitle

\begin{abstract}
We study linear programming and  general LP-type problems in several big data (streaming and distributed) models.  We mainly focus on low dimensional problems in which the number of constraints is much larger than the number of variables. Low dimensional 
LP-type problems appear frequently in various machine learning tasks such as robust regression, 
support vector machines, and core vector machines.  As supporting large-scale machine learning queries in database systems has become an important direction for 
database research, obtaining efficient algorithms for low dimensional LP-type problems on massive datasets is of great value. In this paper we give both upper and lower bounds for  LP-type problems in distributed and streaming models.  Our bounds are 
almost tight when the dimensionality of the problem is a fixed constant. 
\end{abstract}

\section{Introduction}
\label{sec:intro}

As machine learning becomes pervasive, how to effectively support machine learning tasks in database systems has become an imminent question.  
In a recent paper~\cite{MVPV17}, Makrynioti \etal observed that many machine learning problems can be expressed by {\em linear programs} (LP). They designed a level of abstraction called \emph{SolverBlox} on top of a declarative language \emph{LogiQL}\footnote{An extended version of Datalog~\cite{ACGKOPVW15}.} as a framework for expressing linear program formulations.  The query in the format of 
SolverBlox will then be translated to a format supported by an LP solver for computing the solution.  
In this paper we consider the algorithmic side of this research direction, that is, we focus on the design of efficient LP solvers for large-scale datasets. In particular, we propose algorithms for linear programming in three popular ``big data'' models, namely, the 
\emph{coordinator} model~\cite{PVZ12}, the  \emph{streaming} model~\cite{MP80,AMS99}, and \emph{massively parallel computation} (MPC)~\cite{KSV10,GSZ11,BKS17}. We also provide almost matching lower bounds when the dimensionality of the linear 
program is a fixed constant.  

In the rest of the introduction we will start with the definition of the problem and the description of the computation models, and then present our results and discuss previous work.

\paragraph{Problem Definition.}  The basic linear programming problem can be described as follows: we have a set of $d$ variables $(x_1,\ldots,x_d)$ and a set of $n$ linear constraints each of which (indexed by $j$) is in the form of $\sum_{i=1}^d a_i^j x_i \le b^j$, 
where $a_i^j, b^j$ are coefficients and $d$ is the {\em dimension} of the problem. We also have an objective function $\sum_{i=1}^d c_i x_i$.  The goal is to find an assignment for variables that minimizes the objective function while satisfying all the constraints.  
%A linear program is concisely represented as: 
%\begin{align*}
%	\min_{x} \sum_{i=1}^{d} c_i x_i \quad \textnormal{subject to} \quad \sum_{i=1}^d a_i^j x_i \le b^j \quad \textnormal{for all $j \in \set{1,\ldots,n}$}. 
%\end{align*} 

Linear programming is a special case of a more general problem called \emph{LP-type problem}~\cite{MSW96}, which we will discuss in details in Section~\ref{sec:lp-type}. 
Besides linear programming, LP-type problems also include several other important problems in machine learning, such as \emph{Linear Support Vector Machines (SVM)}~\cite{BGV92}, which is widely used in classification
and regression analysis~\cite{GJ09,Burges98,CB99}), and \emph{Core Vector Machines}~\cite{TKC05}, which is used to speed up general SVM computation (or, Linear SVM augmented by the kernel trick~\cite{BGV92}).  We will give the formal definitions of these 
problems in Section~\ref{sec:app}. The algorithms we propose in this paper work for general LP-type problems. 

In this paper we are interested in the scenario when the dimension of the linear program (and LP-type problem in general) is small compared to the number of constraints. 
%This setting is typically referred to \emph{low dimensional}. 
Various examples of linear programming and LP-type problems in machine learning are of this type: SVMs 
and regression problems (in particular, least absolute error regression that can be modeled by linear programming) 
are often over-constrained; in the problems of Chebyshev approximation and linear separability, the number of variables are typically small.

\paragraph{Computational Models.} We study linear programming and LP-type problems in the following big data models. 
%%Since we are considering low-dimensional problems, we assume that the variables of the problem, in particular, any potential solution, can fit
%%the memory of the processor, while the whole sets of constraints cannot. As a result, in our formulation, we assume the objective function of the problem is known but the constraints are not and can be partitioned a

\begin{itemize}[leftmargin=12pt]

\item \emph{The (multi-pass) streaming model.} \ \  In this model, we have a single machine which can make linear scans of the input data sequence. The task is to compute some function defined on the input data sequence. The goal is to minimize the {\em 
memory space usage} and the {\em number of passes} needed. This model captures data that cannot fit the memory, and on which sequential scan is much more efficient than random access.

\item \emph{The coordinator model.} \ \   In this model, we have $k$ sites and a central coordinator.  Each site is connected by a two-way communication channel with the coordinator.  The input is initially partitioned among the $k$ sites.  The task is for 
the sites and coordinator to jointly compute some function defined on the union of the $k$ datasets.  The computation proceeds in rounds: At the beginning of each round, the coordinator sends a message to each site, and then each site replies with a message back 
to the coordinator. At the end of the computation, the coordinator outputs the answer. The goal is to minimize the {\em total bits of the communication} and the {\em rounds of the computation}.
This model fits data that is inherently distributed or cannot fit the storage of a single machine

\item \emph{Massively parallel computation (MPC).} \ \ In this model, we  have $k$ machines interconnected in a network that allows communication between any pairs of machines. Similar to the coordinator model, the input is partitioned among the $k$ machines, 
and the task is for them to compute some function defined on the union of the $k$ datasets.  The computation is again in terms of rounds. At each round, the machines communicate with each other over the network by sending and receiving messages.  The 
message sent by a machine at each round is a function of its input data and all messages it has received in previous rounds.  Our goal is to minimize the {\em number of rounds of the computation}, and the \emph{maximum bits of information sent or received by a 
machine at any round} (often called the {\em load} in the literature).  MPC has already become the model of choice for studying parallel computation in computer clusters.
\end{itemize}

%We note that in addition to the aforementioned resources that are the focus of each model separately, a common goal in  these models is to minimize the  {\em total computation time} of the algorithms. 

\emph{Description of the input.} Since we are dealing with low-dimensional problems, we assume that the memory on each site/machine in each model is at least proportional to $d$, the dimension of the problem, but is significantly smaller than $n$, the number of constraints. As a result, the input is presented by giving the constraints one by one to the algorithm in the  streaming model, or partitioning them across different sites/machines in the coordinator and MPC models. 
% so that each site/machine knows a disjoint subset of constraints. % that can fit its local memory.
%\nicksays{Maybe we should move this paragraph to the end of the section 3.1 before implementation of the algorithm in the streaming model. or just delete it because for each model we specify the input.}

\subsection{Our Contributions and Related Work}\label{sec:results}

In the following, we present our results for linear programming in the three big data models described above, and postpone the specifics of their generalization to LP-type problems to later sections. 
Our main upper bound result is the following. 

\smallskip

\begin{result}\label{res:upper}
	We give the following \emph{polynomial time} algorithms for $d$-dimensional linear programming with $n$ constraints. For any integer $r \geq 1$ and parameter $\delta \in (0,1)$:
	\begin{itemize}[leftmargin=12pt]
		\item \emph{Streaming:} An $O(d \cdot r)$-pass streaming algorithm with $O(n^{1/r}) \cdot \poly(d,\log{n})$ space. 
		\item \emph{Coordinator:} An $O(d \cdot r)$-round distributed algorithm with $O(n^{1/r}+k) \cdot \poly(d,\log{n})$ total communication.% across $k$ sites and the coordinator.
		\item \emph{MPC:} An $O(d/\delta^2)$-round  algorithm with $O(n^{\delta}) \cdot \poly(d,\log{n})$ load per machine. 
	\end{itemize}
	Our algorithms are randomized and output the correct answer with probability $1-1/n^{c}$ for any desired constant $c \geq 1$. 
\end{result}

\smallskip

\noindent
By Result~\ref{res:upper} for $r=\log{n}$ and $\delta=1/\sqrt{\log{n}}$, we obtain linear programming algorithms that use  $O(d \log{n})$ passes or rounds, and have space, communication, or load requirements in each model 
that is almost independent of the number of constraints. For low-dimensional instances, this results in a dramatic saving compared to direct implementations of 
standard LP algorithms in these models.   

Previously, Chan and Chen~\cite{CC07} proposed an $O(r^{d-1})$-pass streaming algorithm for linear programming that uses
$O(n^{1/r}) \cdot \poly(d,\log{n})$ space. Result~\ref{res:upper} improves upon this result by achieving an {exponentially} smaller pass-complexity in terms of $d$.

In the coordinator model, Daum{\'{e}} et al.~\cite{DPSV12} gave an algorithm 
using $O(r^{d + O(1)} \cdot k \cdot n^{1/r})$ communication based on an adaptation of the algorithm of~\cite{CC07}. The round-complexity and communication cost of this algorithm again depends {exponentially} on $d$. 
%Hence, Result~\ref{res:upper} significantly improves upon this algorithm. 

In the MPC model, very recently Tao~\cite{Tao18} gave a $d^{O(\log{(1/\delta)})}$-round MPC algorithm with load $O(n^{\delta})$ when $d =\polylog{(n)}$ (for any $\delta \in (0,1)$). 
This algorithm is then used as a building block for an 
interesting database application called {\em entity matching with linear classification}. The round complexity  of our MPC algorithm in Result~\ref{res:upper} improves that of \cite{Tao18} by an {exponential} factor.

To summarize, Result~\ref{res:upper} \emph{exponentially} improves upon the pass/round complexities of the state-of-the-art, while
using the same or smaller space, communication, or load, in the considered big data models. 

\smallskip

We complement our algorithms by giving almost \emph{tight} lower bounds for any fixed dimension (even $d = 2$) in the streaming and coordinator model.  
\smallskip

%This is the major technical contribution of this paper.

\begin{result}\label{res:lower}
	We give the following lower bounds for $2$-dimensional linear programming with $n$ constraints. For any integer $r \geq 1$:
	\begin{itemize}[leftmargin=12pt]
		\item \emph{Streaming:} Any $r$-pass algorithm requires $\Omega(n^{1/2r})$ space. 
		\item \emph{Coordinator:} Any $r$-round algorithm requires $\Omega(n^{1/2r})$ communication even when number of sites is only $k=2$. %between the sites and the coordinator. 
	\end{itemize}
	Our lower bounds hold even for randomized algorithms that output the correct answer with probability at least $2/3$. 
\end{result}

\smallskip

%A  corollary of Result~\ref{res:lower} is that even when the dimension is as small as two, $\Omega\left(\frac{\log{n}}{\log\log{n}}\right)$ passes/rounds are needed to obtain streaming and distributed algorithms that have almost no dependence on the number of constraints $n$ in their space/communication complexity. By Result~\ref{res:upper}, this bound is tight up to an $O(\log\log{n})$ factor for any dimension $d$. 
%\nicksays{I would like to remove this paragraph. I'm confused with words about $\log\log n$ and words "almost no dependece on the number of constraints" (we depend logarithmically). I would prefer don't specify the word "almost tight".}

A few remarks about Result~\ref{res:lower}: Firstly, it is easy to see that linear programming in one dimension in the models we consider is a trivial task.
Result~\ref{res:lower} thus proves the lower bound for the smallest non-trivial dimension. We note that unlike Result~\ref{res:upper} that worked in all the three models, Result~\ref{res:lower} does \emph{not} prove any lower 
bound for MPC algorithms. Proving lower bounds for MPC algorithms is considered to be a challenging task as it has serious implications 
for long standing open problems in complexity theory~\cite{RoughgardenVW16}. Hence, no \emph{unconditional} lower bounds are known so far in the literature for \emph{any} MPC problem and Result~\ref{res:lower}
is of no exception. 

Prior to our work, Chan and Chen~\cite{CC07} gave a lower bound for $2$-dimensional linear programming for a \emph{restricted} family of \emph{deterministic} streaming algorithms in the \emph{decision tree} model (the only permitted operation of these 
streaming algorithms is testing the sign of a function evaluated at the coefficients of a subset of stored hyperplanes). Their lower bound states that this type of algorithms require $\Omega(n^{1/r})$ space to compute the solution in $r$ passes. Our lower bound
in Result~\ref{res:lower} is much stronger in that it proves a similar pass-space tradeoff for \emph{all} streaming algorithms (even randomized).  Finally, Guha and McGregor~\cite{GM08} showed that there is a fixed dimensional optimization problem for which any 
$r$-pass streaming algorithm requires $\Omega(n^{1/r})$ space. However, it is not clear how to adapt their proof to linear programming since their optimization problem involves quadratic constraints~\cite{McGregor18}.

\paragraph{Further Related Work.} Special cases of linear programming have been studied previously in the big data models. In particular, 
Ahn and Guha gave multi-pass streaming algorithms for $(1+\eps)$-approximation of \emph{packing} LPs~\cite{AG11} and Indyk~\etal~\cite{IMRUVY17} gave similar algorithms
for \emph{covering} LPs (see also~\cite{AssadiKL16}). These results focus on high-dimensional linear programs (non-constant $d$) and only packing/covering LPs, and are hence quite different from our approach in this paper.  

%%Special variants of (typically high dimensional) linear programming have been previously studied in the literature for so
%%
%%For larger (non-constant) $d$, linear programming has higher complexities in big data models. The best algorithms for $(1 + \eps)$-approximating some special cases of linear programs require at least $2^{{1}/{\eps}}$ space or  $\text{poly}(1/\eps)$ number of passes/rounds ~\cite{AG11, BGM14, IMRUVY17}.

Unlike the case for big data models, low-dimensional linear programming has been studied extensively in the RAM model since the 1980s.  Megiddo~\cite{Megiddo84} gave an algorithm for $d$-dimensional linear programming with time complexity $O(2^{2^d} n)$, which is linear in terms of the number 
of constraints $n$.  This bound was consequently improved by a series of papers~\cite{Clarkson86,Dyer86,DF89,Kalai92,Clarkson95,ClarksonS89,MSW96,BCM99,Chan16}.

%Plugging the algorithm of Kalai~\cite{Kalai92} or Matou{\v{s}}ek et al.~\cite{MSW96} into  Clarkson's algorithm~\cite{Clarkson95}, we get the state-of-the-art randomized algorithm with expected running time $O(d^2 n + e^{O(\sqrt{d \log d})})$. We note that the bound is exponential in $d$; this is because the algorithm can only use certain restricted operations.

%\nicksays{Our bounds are polynomial because we are not restricted in operations which we can use when works for RAM model used only oracle for solving problem on instances of size $O(d)$.}

\section{Preliminaries}\label{sec:prelim}

\paragraph{Notations.} For integers $1 \leq a \leq b$, we define $[a] := \set{1,\ldots,a}$, $[a : b] := \set{a,a+1,\ldots,b}$, and $(a:b] := [a:b] \setminus \set{a}$ (we define $[a: b)$ and $(a : b)$ analogously). 
We use capital letters for sets and random variables and calligraphic letters for set families. 
We use the notation ${\Ot}(f)$ to denote a function of the form $O(f \cdot \polylog{(f)})$.

Throughout the paper, we say an event happens ``with high probability'' if its probability can be lower bounded by $1-1/n^{c}$ for any desired constant $c \geq 1$ ($n$ is the number of constraints). 

We use the following standard variant of Chernoff bound.

\begin{proposition}[Chernoff bound]\label{prop:chernoff}
	Suppose $X_1,\ldots,X_t$ are $t$ independent random variables taking value in $[0,1]$ and $X := \sum_{i=1}^{t} X_i$. Then, for any $\eps > 0$, 
	\begin{align*}
		\Pr\Paren{\card{X - \expect{X}} > \eps \cdot \expect{X}} \leq 2 \cdot \exp\paren{\frac{-\eps^2 \cdot \expect{X}}{3}}.
	\end{align*}
\end{proposition}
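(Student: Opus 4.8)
The plan is to prove Proposition~\ref{prop:chernoff} by the standard exponential-moment (Bernstein--Chernoff) method: bound the upper deviation $\Pr(X > (1+\eps)\mu)$ and the lower deviation $\Pr(X < (1-\eps)\mu)$ separately, where $\mu := \expect{X}$, and then combine the two estimates by a union bound --- which is exactly what produces the factor of $2$.

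For the upper tail, fix a parameter $\lambda > 0$ and apply Markov's inequality to the nonnegative random variable $e^{\lambda X}$, so that $\Pr(X > (1+\eps)\mu) \le e^{-\lambda(1+\eps)\mu}\cdot\expect{e^{\lambda X}}$. By independence of the $X_i$'s, $\expect{e^{\lambda X}} = \prod_{i=1}^t \expect{e^{\lambda X_i}}$. Since each $X_i$ takes values in $[0,1]$ and $x \mapsto e^{\lambda x}$ is convex, $e^{\lambda X_i} \le 1 + (e^{\lambda}-1)X_i$ pointwise; taking expectations and using $1+z \le e^{z}$ gives $\expect{e^{\lambda X_i}} \le \exp\paren{(e^{\lambda}-1)\expect{X_i}}$, hence $\expect{e^{\lambda X}} \le \exp\paren{(e^{\lambda}-1)\mu}$. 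Plugging this in and choosing $\lambda = \ln(1+\eps)$ to optimize the resulting exponent yields the familiar form $\Pr(X > (1+\eps)\mu) \le \bracket{\, e^{\eps}\,/\,(1+\eps)^{1+\eps}\,}^{\mu}$.

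It remains to check the elementary one-variable inequality $e^{\eps}/(1+\eps)^{1+\eps} \le e^{-\eps^2/3}$, i.e.\ $(1+\eps)\ln(1+\eps) \ge \eps + \eps^2/3$, which holds for $\eps \in (0,1]$ --- the regime relevant throughout the paper --- and is verified by a short calculus argument (comparing values and first derivatives at $\eps = 0$). For the lower tail, the symmetric computation with $\lambda < 0$ (equivalently, applying the above to the variables $1 - X_i$) gives $\Pr(X < (1-\eps)\mu) \le \bracket{\, e^{-\eps}\,/\,(1-\eps)^{1-\eps}\,}^{\mu} \le e^{-\eps^2\mu/2} \le e^{-\eps^2\mu/3}$ for $\eps \in (0,1)$, using the analogous fact $(1-\eps)\ln(1-\eps) \ge -\eps + \eps^2/2$; and for $\eps \ge 1$ the event $\set{X < (1-\eps)\mu}$ is empty since $X \ge 0$, so that bound is trivial. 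Summing the two one-sided estimates via a union bound gives $\Pr\paren{\card{X-\mu} > \eps\mu} \le 2\exp\paren{-\eps^2\mu/3}$, as claimed.

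The only genuine content is the two scalar calculus facts bounding $e^{\eps}/(1+\eps)^{1+\eps}$ and $e^{-\eps}/(1-\eps)^{1-\eps}$ by $e^{-\eps^2/3}$ and $e^{-\eps^2/2}$ respectively; everything else --- Markov's inequality, factoring the moment generating function over the independent terms, the per-variable convexity estimate, and the optimization over $\lambda$ --- is routine. Since this is a textbook concentration bound, an equally acceptable alternative is simply to cite a standard reference rather than reproduce the computation.
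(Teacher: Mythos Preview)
Your argument is the standard exponential-moment proof and is correct. The paper itself does not prove this proposition at all; it is stated in the preliminaries as a ``standard variant of Chernoff bound'' and used as a black box (specifically in the proof of Lemma~\ref{lem:main-iteration}). So there is nothing to compare against: your write-up supplies a proof where the paper simply cites the result.

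One small point worth being explicit about: the proposition as stated claims the bound for \emph{any} $\eps > 0$, whereas the scalar inequality $(1+\eps)\ln(1+\eps) \ge \eps + \eps^2/3$ that you invoke for the upper tail genuinely fails once $\eps$ is a bit larger than $1$ (for example at $\eps = 2$). You already flagged this by restricting to $\eps \in (0,1]$ and noting that this is the only regime used in the paper, which is true --- the sole application is to argue that at least half of $t$ iterations are successful when each succeeds independently with probability $\ge 2/3$, i.e.\ a constant $\eps$ well below $1$. So the discrepancy is with the proposition's wording, not with your proof, and your hedge is appropriate.
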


\subsection{LP-type Problems} 
\label{sec:lp-type}

We consider a generalization of linear programming referred to as LP-type problems\footnote{The class of LP-type problems is also known as \emph{abstract linear programming}~\cite{Bland78}.}.  An LP-type problem consists of a pair $(S, f)$, where $S$ is a finite set of elements, and $f : 2^S \to R$ is a set function with a range $R$ 
which is assumed to have a total order. The function $f$ satisfies two properties:
\begin{itemize}
\item {\em Monotonicity}:  for any two sets $X \subseteq Y \subseteq S$, $f(X) \le f(Y) \le f(S)$.

\item {\em Locality}: for any two sets $X \subseteq Y \subseteq S$, and any elements $e \in S$, if $f(X) = f(Y) = f(X \cup \{e\})$, then $f(Y) = f(Y \cup \{e\})$.
\end{itemize}
For an LP-type problem $(S, f)$, we call a set $B \subseteq S$ a {\em basis} of $S$ if $f(B) = f(S)$, and for all $B' \subset B$ we have $f(B') < f(B)$.
The goal is to compute a basis $B_S \subseteq S$ such that $f(B_S) = f(S)$. We say an element $e \in S$ {\em violates} $X \subseteq S$ if $f(X \cup \{e\}) > f(X)$.
It helps to think of an LP-type problem $(S,f)$ as an optimization problem in which elements of $S$ are the constraints, and $f(A)$ 
computes the best {\em feasible} solution on the set of constraints $A$. In the case when the optimal solution is not unique, we just break the tie arbitrarily.
Computing $f(B_S) = f(S)$ hence amounts to computing the optimal solution subject to all the constraints (we will make this connection 
explicit in the context of linear programming and other problems in Section~\ref{sec:app}).  

\paragraph{Combinatorial Dimension.} Note that an LP-type problem may have several bases which are of different sizes.  We define the {\em combinatorial dimension} of an LP-type problem to be the {\em maximum} cardinality of a basis for $S$, denoted by $\nu_{S, f}
$ ($\nu$ for short when $S$ and $f$ are clear from the context).

%\paragraph{Geometric LP-Type Problems.} We are interested in a particular subset of LP-type problems that have a geometric interpretation, which includes the most natural LP-type problems we are
%aware of (including linear programming, linear SVMs, and core SVMs mentioned earlier). 
 
%We say an LP-type problem $(\S,f)$\footnote{We use $\S$ instead of our previous notation $S$, since each element of $\S$ is now a set and hence $\S$ forms a set family.} is \emph{geometric} iff: $(i)$ 
%the set family $\S$ consists subsets of the range of $f$, i.e., $R$, and $(ii)$ the function $f$ is such that for any $\A \subseteq \S$, $f(\A) \in R$ belongs to \emph{all} sets $A \in \A$. 
%It is useful to think of $R$ as the set of possible solutions, say points in $\IR^d$, and each set $S \in \S$, as the set of solutions in $R$ that satisfy a particular constraint, say points in half-space; this way, having condition $(ii)$ above
%on $f(\A)$ implies that $f(\A)$ returns a solution that satisfies all constraints in $\A$. 

\subsection{$\eps$-Nets and VC Dimension}\label{sec:eps-net}

%%A linear program can be modeled as an LP-type problem as follows: for any constraint in the LP, we have an element in $S$ and define $f: 2^{S} \to \IR$ such that for all 
%%$A \subseteq S$, $f(A)$ is the \emph{minimum} value of the LP objective function given \emph{only} the constraints in $A$. One can easily verify that this choice of $(S,f)$ satisfies the monotonicity and locality properties above. 
%%One can also model optimization problems that are \emph{not} linear programs as LP-type problems. An example of such problem is linear SVM which we describe in more details later in the paper. 

We now define another important notion that we use in designing our algorithms.  

\paragraph{VC Dimension.} A set-system is a tuple $(\H,U)$ consists of a universe $U$ and a set family $\H \subseteq 2^U$. 
Let $C \subseteq U$ be a set.  Define the intersection between a set family and a set to be the set family
\[
\H \cap C := \{H \cap C \ |\ H \in \H \}.
\]
We say that a set $C$ is {\em shattered} by $\H$ if $\H \cap C$ contains all the subsets of $C$, i.e.,
$
\abs{\H \cap C} = 2^{|C|}.
$
The \emph{VC dimension} of set-system $(\H,U)$, denoted by $\lambda_\H$ (or $\lambda$ for short when $\H$ is clear in the context), is then the cardinality of the \emph{largest} set $C$ that is shattered by $\H$.

\paragraph{$\eps$-Net.}  Given a set-system $(\X,U)$, and a weight function $w : \X \to \mathbb{R}$, for any $\Y \subseteq \X$,  let $w(\Y) := \sum_{Y \in \Y} w(Y)$.  We say a set $\N \subseteq \X$ is an $\eps$-net of $\X$ with respect to $w$ 
for a parameter $\eps \in (0,1)$, iff for any point $u \in U$ such that
$\sum_{X \in \X : u \notin X} w(X) \ge \eps \cdot  w(\X),
$
it holds that $\{X \in \N \ |\ u \notin X\} \neq \emptyset$.
%%\footnote{Our definition of $\eps$-net is slightly non-standard as we consider $u \notin \X$ in the above two terms instead of $u \in \X$ which is typically used. Nevertheless, since the VC dimensions of 
%%the set-systems $(\X,U)$ and $(\X^*,U)$ for $\X^{*}$ defined as $\X^* := \{(U \setminus X) | X \in \X\}$ are the same, we can simply obtain the desired result in Lemma~\ref{lem:eps-net} by applying the same argument to the standard definition of $\eps$-net and
%%set-system $(\X^*,U)$ instead.}

The notion of $\eps$-net is well-studied in the literature (particularly in the computational geometry community~\cite{HW87, BG95, Mulmuley94}), and has been used in the algorithm design for many problems.  We use the following simple randomized construction of $\eps$-net for designing a distributed version of Clarkson's algorithm for LP-type problems. 

\begin{lemma}[\!\!\cite{HW87}]
\label{lem:eps-net}
For any set-system $(\X, U)$ of VC dimension $\lambda$, any weight function $w : \X \to \mathbb{R}$, and $\eps \in (0, 1)$, a set family $\N \subseteq \X$ obtained by randomly sampling
\begin{equation}
\label{eq:a-1}
m_{\eps, \vc, \delta} =   \max\left(\frac{8\vc}{\eps}\log{\frac{8\vc}{\eps}},\frac{4}{\eps}\log\frac{2}{\delta} \right)
\end{equation}
sets  with probability proportional to their weights is an $\eps$-net of $\X$ with probability at least $1 - \delta$.
\end{lemma}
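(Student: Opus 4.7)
The plan is to prove this via the classical double-sampling (``symmetrization'') argument of Haussler and Welzl. I would first let $\mu(X) := w(X)/w(\X)$ denote the probability distribution induced by the weights, so that each of the $m := m_{\eps, \vc, \delta}$ samples in $\N$ is drawn i.i.d.\ from $\mu$. For each $u \in U$, set $\X_u := \{X \in \X : u \notin X\}$ and call $u$ \emph{heavy} if $\mu(\X_u) \geq \eps$. Let $E$ be the failure event that some heavy $u$ is missed (i.e., $\N \cap \X_u = \emptyset$); the goal is to show $\Pr[E] \leq \delta$.

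Next, I would introduce an independent ``ghost'' sample $\N'$ of the same size $m$, drawn i.i.d.\ from $\mu$, and define $E'$ as the event that some heavy $u$ satisfies both $\N \cap \X_u = \emptyset$ and $|\N' \cap \X_u| \geq \eps m / 2$. Conditioned on any specific $u$ witnessing $E$, the count $|\N' \cap \X_u|$ is binomial with mean $\geq \eps m$, so a Chebyshev tail bound gives $|\N' \cap \X_u| \geq \eps m/2$ with probability $\geq 1/2$, provided $\eps m$ is a sufficiently large constant (guaranteed by the $(4/\eps)\log(2/\delta)$ term in $m_{\eps,\vc,\delta}$). This yields $\Pr[E'] \geq \tfrac{1}{2}\Pr[E]$, reducing the task to bounding $\Pr[E'] \leq \delta/2$.

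For $\Pr[E']$, I would condition on the combined multiset $T := \N \cup \N'$ of $2m$ samples. The conditional distribution of the partition $(\N, \N')$ is uniform over the $\binom{2m}{m}$ balanced partitions of $T$. For any witness $u$ to $E'$, the trace $\X_u \cap T$ has $k \geq \eps m/2$ elements, all of which must land in $\N'$ and none in $\N$, an event of conditional probability at most $\binom{2m-k}{m}/\binom{2m}{m} \leq 2^{-k} \leq 2^{-\eps m/2}$. As $u$ varies over $U$, the number of distinct traces $\X_u \cap T \subseteq T$ is bounded by the growth function (at size $2m$) of the range system whose ranges are indexed by points; by Sauer's lemma this is at most $(2m)^{O(\vc)}$. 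A union bound then yields $\Pr[E'] \leq (2m)^{O(\vc)} \cdot 2^{-\eps m/2}$.

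Finally, I would substitute $m = m_{\eps,\vc,\delta}$ from~\eqref{eq:a-1} into this bound and check algebraically that it is at most $\delta/2$: the $(8\vc/\eps)\log(8\vc/\eps)$ term ensures the exponential decay $2^{-\eps m/2}$ dominates the $(2m)^{O(\vc)}$ polynomial, while the $(4/\eps)\log(2/\delta)$ term contributes the required factor of $\delta$. The main obstacle is tuning the constants so that the union bound closes cleanly, with the Chebyshev step in paragraph two being especially delicate since it needs to hold uniformly over all possible witnesses; a secondary subtlety is selecting the appropriate form of Sauer's lemma, since we are sampling sets and hitting points rather than the reverse.
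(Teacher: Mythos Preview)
The paper does not give its own proof of this lemma; it is stated as a cited result from~\cite{HW87} and used as a black box. Your proposal is the classical double-sampling argument of Haussler and Welzl from that very reference, so in that sense it matches the ``paper's proof'' by being the proof of the cited source. The outline is correct, and you have already flagged the one genuine subtlety: the $\eps$-net here is defined with the roles of points and sets swapped relative to the textbook statement (you sample sets from $\X$ and must hit every heavy point $u$), so the Sauer--Shelah bound must be applied to the dual system $\{\X_u : u \in U\}$ over the ground set $\X$, and one needs that its VC dimension is controlled by $\lambda$.
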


%\paragraph{VC Dimension of LP-Type Problems.} Recall the definition of geometric LP-type problems from the previous section. 
%We define the VC dimension of a geometric LP-type problem $(\S,f)$ as the VC dimension of the set-system $(\S,R)$. For notational convenience, we define
%the VC dimension of any LP-type problem which is not geometric simply as $+\infty$. 

\newcommand{\bit}[1]{\ensuremath{\textnormal{\texttt{\emph{bit}}}(#1)}\xspace}

\section{Algorithms}
\label{sec:algo}

In this section we present our algorithms for Result~\ref{res:upper}. We will work with a special class of LP-type problems that contains the most natural LP-type problems that we are aware of, including linear programming, Linear SVMs, and Core SVMs mentioned 
earlier. In particular, we require the LP-type problem $(\S,f)$ to satisfy the following properties: 
\begin{enumerate}[label=(P\arabic*)]
	\item\label{p:LP1} Each constraint $X \in \S$ is associated with a set of elements $S_X \subseteq R$ ($R$ is the range of $f$). %; we think of $S_X$ as the points that satisfy the constraint $X$.
	\item\label{p:LP2} For any $\A \subseteq \S$, $f(\A)$ is the minimal element of $\bigcap\limits_{X \in \A}X$. % with respect to the total order on $R$.
\end{enumerate}
It is useful to think of $R$ as the set of feasible solutions. For example, in the case of linear programming, $R = \IR^d$ with the natural ordering induced by scalar product with the vector $c$ in the objective function. 
Each constraint (inequality) $X \in \S$ corresponds to the subset of points $S_X$ which satisfy the constraint, and $f(\A)$ is equal to the point which satisfies all constraints in $\A$ and has a minimal scalar product with $c$. For convenience, we use $X$ and $S_X$ interchangeably. 
%\nicksays{Notice that in the case of linear programming the set of feasible solution can be empty; in other words, the system of linear inequalities is inconsistent. However, in this case, $R$ can be extended by one additional element which satisfies all constraints and is greater than any other element from $R$. In all applications on which we aware such modification does not increase VC and combinatorial dimensions.}

For this special class of LP-type problems, we define the VC dimension of the problem $(\S, f)$ as the VC dimension of the set system $(\S, R)$.

In the following, we first give a general meta-algorithm for solving LP-type problems with Properties~\ref{p:LP1} and~\ref{p:LP2}, and then show how to implement this meta-algorithm {efficiently} in each  model.

\subsection{The Meta Algorithm for LP-Type Problems}
\label{sec:meta}

Our meta-algorithm follows Clarkson's algorithm \cite{Clarkson95} for linear programming, but we use a different sampling procedure (by using $\eps$-net) which enables us to work with 
general LP-type problems with bounded VC dimension; it also significantly simplifies the analysis and facilitates the implementation of our algorithm in the big data models we consider. We further use a different weight increase rate after each iteration, which is essential for reducing the number of passes
in the streaming, and the number of rounds in the coordinator and MPC models.

The algorithm proceeds in iterations. We maintain a weight function $w : \S \rightarrow \IR$ throughout the algorithm which is initialized by setting $w(S) = 1$ for all $S \in \S$. 
In each iteration, we first sample a set family $\N$ of $m:=m_{\eps, \lambda_\S, \frac{2}{3}}$ sets from $\S$ with probability proportional to their weights so as to obtain an $\eps$-net $\N$ of $\S$ (according to Lemma~\ref{lem:eps-net}). We then compute a basis $\B$ of $\N$, and the set $\V$ of constraints which violate the basis $\B$.
If $w(\V) \leq \eps \cdot w(\S)$, then we say this iteration ``succeeds'', and update the weights of all sets
$S \in \V$ by setting $w(S) \leftarrow (n^{1/r}) \cdot w(S)$. Otherwise, we say this iteration ``fails'', and continue to the next one without modifying the weights. A pseudo-code  is provided in Algorithm~\ref{alg:meta}.

\begin{algorithm}[t]
	\caption{A Meta-Algorithm for LP-Type Problems}
	\label{alg:meta}

	\KwIn{An LP-type problem  $(\S, f)$ satisfying Properties~\ref{p:LP1} and~\ref{p:LP2} and integer $r \leq \ln{n}$.} % is a tradeoff parameter.}
	\KwOut{$f(\S)$.} 
	
	\medskip
	
	 Let $\eps := \frac{1}{10 \cdot \nu_{\S, f}  \cdot n ^{1/r}}$, and $\lambda$ as the VC dimension of the LP-type problem $(\S,f)$. \\
	 Set $w(S) = 1$ for every $S \in \S$. \\ 
	\Repeat{$\V = \emptyset$}{
		Sample a family $\N \subseteq \S$ of size $m := m_{\eps, \lambda, \frac{2}{3}}$ by picking each set in $\S$ with probability proportional to $w$ for the parameter $m_{\eps,\lambda,\frac{2}{3}}$ in Lemma~\ref{lem:eps-net}.  \label{ln:a-1} \\
		Compute a basis $\B$ of $\N$. \label{ln:a-2} \\
		Let $\V = \{S \in \S \ |\  f(\B \cup \{S\}) > f(\B)\}$ be the family of sets in $\S$ that violate $\B$. \label{ln:a-3} \\
		\If{$w(\V) \leq \eps \cdot w(\S)$ \label{ln:a-4}}{
			Set $w(S) = (n^{1/r}) \cdot w(S)$ for every set $S \in \V$. \label{ln:a-5}
		}
	}
	\KwRet{$f(\B)$}.
\end{algorithm}

In the following, we first establish the correctness of the meta-algorithm and then bound the number of iterations it needs. 

\begin{lemma}
\label{lem:main-correct}
%Algorithm~\ref{alg:meta} correctly computes $f(\S)$. 
When Algorithm~\ref{alg:meta} stops, it correctly computes $f(\S)$.
\end{lemma}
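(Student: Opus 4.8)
The plan is to observe that the termination condition of Algorithm~\ref{alg:meta} is precisely $\V = \emptyset$, i.e., the loop exits only when the basis $\B$ of the current $\eps$-net $\N$ is violated by \emph{no} constraint in $\S$. So it suffices to show that if no element of $\S$ violates $\B$, then $f(\B) = f(\S)$. First I would unpack the definition of ``violates'': $\V = \emptyset$ means that for every $S \in \S$ we have $f(\B \cup \{S\}) = f(\B)$ (the inequality $f(\B \cup \{S\}) \geq f(\B)$ being automatic from \emph{monotonicity}, since $\B \subseteq \B \cup \{S\}$, so $\V = \emptyset$ forces equality). Also note $\B \subseteq \N \subseteq \S$, so by monotonicity $f(\B) \leq f(\S)$, and the goal reduces to proving the reverse inequality $f(\S) \leq f(\B)$, equivalently $f(\S) = f(\B)$.

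The core of the argument is a standard induction that ``adding constraints one at a time never increases $f$ once $\B$ is already stable,'' powered by the \emph{locality} axiom. Concretely, enumerate $\S \setminus \B = \{e_1, \dots, e_t\}$ and define $\B_0 := \B$ and $\B_i := \B_{i-1} \cup \{e_i\}$ for $i = 1, \dots, t$, so that $\B_t = \S$. I would prove by induction on $i$ that $f(\B_i) = f(\B)$. The base case $i = 0$ is trivial. For the inductive step, assume $f(\B_{i-1}) = f(\B)$. We want $f(\B_{i-1} \cup \{e_i\}) = f(\B_{i-1})$. Apply locality with $X := \B$, $Y := \B_{i-1}$, and $e := e_i$: we have $X \subseteq Y \subseteq \S$; we know $f(X) = f(\B_{i-1}) = f(Y)$ by the induction hypothesis; and we know $f(X \cup \{e_i\}) = f(\B \cup \{e_i\}) = f(\B) = f(X)$ because $\V = \emptyset$ (this is exactly the statement that $e_i$ does not violate $\B$). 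Hence the hypotheses of locality are met, so $f(Y \cup \{e_i\}) = f(Y)$, i.e., $f(\B_i) = f(\B_{i-1}) = f(\B)$, completing the induction. Taking $i = t$ gives $f(\S) = f(\B_t) = f(\B)$, and since the algorithm returns $f(\B)$, it returns $f(\S)$.

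There is essentially no serious obstacle here; the only point requiring a little care is handling the order of enumeration and making sure locality is being invoked with the correct triple $(X, Y, e)$ — in particular keeping $X = \B$ fixed across all applications while $Y$ grows, rather than trying to apply locality incrementally with $X = \B_{i-1}$, which would require knowing $f(\B_{i-1} \cup \{e_i\}) = f(\B_{i-1})$ in advance (exactly what we are trying to prove). A secondary minor subtlety is that Properties~\ref{p:LP1} and~\ref{p:LP2} are not even needed for this lemma — only the generic LP-type axioms of monotonicity and locality are used — so I would state the proof purely in those terms. If desired, one can also phrase the induction more compactly as: the set $\{S \in \S : f(\B \cup \{S\}) = f(\B)\}$, when it equals all of $\S$, is ``closed'' in the sense that $f$ is constant on everything between $\B$ and $\S$; but the explicit one-element-at-a-time induction via locality is the cleanest and is what I would write.
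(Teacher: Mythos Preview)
Your proof is correct and follows essentially the same approach as the paper: the paper observes that $\V=\emptyset$ forces $f(\B\cup\{S\})=f(\B)$ for every $S\in\S\setminus\B$ by monotonicity, and then says ``by the locality property and induction'' one gets $f(\B)=f(\B\cup(\S\setminus\B))=f(\S)$. Your write-up simply spells out that induction explicitly, including the correct choice of $X=\B$, $Y=\B_{i-1}$, $e=e_i$ in each application of locality; this is exactly the argument the paper is abbreviating.
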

\begin{proof}
 At the end of the algorithm, we have $\V = \emptyset$. This means that for any $S \in \S \setminus \B$, we have $f(\B \cup \{S\}) = f(\B)$ by the monotonicity property of $f$.  By the locality property and induction we obtain that
$f(\B) = f(\B \cup (\S \backslash \B)) = f(\S)$, finalizing the proof. 
\end{proof}

We now bound the number of iterations. 
%Without loss of generality, we are going to assume that $(\S,f)$ is a geometric LP-type  as otherwise its VC dimension is $+\infty$ and hence in Line~\ref{ln:a-1} we simply sample all of $\S$ and trivially solve the problem in one iteration.  
%To continue, we need a basic definition. 
We say that an iteration of Algorithm~\ref{alg:meta} (at Lines~\ref{ln:a-1} to~\ref{ln:a-5}) is \emph{successful} iff $w(\V) \le \eps \cdot w(\S)$ in this iteration. %The following claim ensures that each iteration is successful with constant probability.

\begin{claim}\label{clm:is-successful}
	Each iteration of Algorithm~\ref{alg:meta} is successful with probability at least $2/3$. 
\end{claim}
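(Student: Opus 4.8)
The plan is to show that with probability at least $2/3$ over the random sample $\N$, the weight of the violators $\V$ of the computed basis $\B$ is small, i.e.\ $w(\V) \le \eps \cdot w(\S)$. The key observation is that $\N$ is an $\eps$-net of the set system $(\S, R)$ with probability at least $2/3$ by Lemma~\ref{lem:eps-net}, since we sampled exactly $m = m_{\eps, \lambda_\S, 2/3}$ sets with probability proportional to their weights. So it suffices to argue that: \emph{whenever $\N$ is an $\eps$-net of $\S$ with respect to $w$, the iteration is successful}. In other words, we want to rule out the bad event that $\N$ is an $\eps$-net yet $w(\V) > \eps \cdot w(\S)$.

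The heart of the argument connects ``violating the basis $\B$ of $\N$'' with ``being uncovered'' in the set-system sense. Recall from Properties~\ref{p:LP1} and~\ref{p:LP2} that each $S \in \S$ is identified with the subset $S_X \subseteq R$ of feasible points it allows, and $f(\A)$ is the minimal point in $\bigcap_{X \in \A} X$. Let $u^\ast := f(\B) \in R$ be the optimal point under the constraints in $\B$; note $u^\ast \in \bigcap_{X \in \N} X$ since $\B$ is a basis of $\N$, so $u^\ast$ is not uncovered by any set in $\N$, i.e.\ $\{X \in \N : u^\ast \notin X\} = \emptyset$. On the other hand, I claim every $S \in \V$ satisfies $u^\ast \notin S$: if $S$ violates $\B$ then $f(\B \cup \{S\}) > f(\B) = u^\ast$, which (using that $f(\B \cup \{S\})$ is the minimal point of $(\bigcap_{X \in \B} X) \cap S$ and $u^\ast$ is the minimal point of $\bigcap_{X \in \B} X$) forces $u^\ast \notin S$ — otherwise $u^\ast$ would itself be feasible for $\B \cup \{S\}$ and hence $f(\B \cup \{S\}) \le u^\ast$. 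Therefore $\V \subseteq \{X \in \S : u^\ast \notin X\}$, so $w(\V) \le \sum_{X \in \S : u^\ast \notin X} w(X)$.

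Now suppose toward contradiction that $\N$ is an $\eps$-net but $w(\V) > \eps \cdot w(\S)$. Then $\sum_{X \in \S : u^\ast \notin X} w(X) \ge w(\V) > \eps \cdot w(\S)$, so by the defining property of an $\eps$-net applied to the point $u^\ast$, we must have $\{X \in \N : u^\ast \notin X\} \neq \emptyset$ — contradicting the fact established above that this set is empty. Hence whenever $\N$ is an $\eps$-net, the iteration is successful, and since $\N$ is an $\eps$-net with probability at least $2/3$, the claim follows.

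The main obstacle I anticipate is the step identifying $\V$ with uncovered sets: one must be careful about the tie-breaking convention when the minimal point of $\bigcap_{X\in\B}X$ and of $(\bigcap_{X\in\B}X)\cap S$ could ``coincide in value but differ as optimizers.'' Because Property~\ref{p:LP2} defines $f(\A)$ literally as \emph{the} minimal element of $\bigcap_{X\in\A} X$ (with ties broken arbitrarily but consistently), the argument ``$u^\ast \in S \implies u^\ast \in \bigcap_{X \in \B\cup\{S\}} X \implies f(\B\cup\{S\}) \le u^\ast$'' is clean, and monotonicity gives the reverse inequality, so actually $f(\B\cup\{S\})=f(\B)$, i.e.\ $S\notin\V$. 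This is the one place where Properties~\ref{p:LP1}–\ref{p:LP2} (rather than just the abstract LP-type axioms) are genuinely used, and it is worth stating explicitly as a small sub-claim before the $\eps$-net contradiction.
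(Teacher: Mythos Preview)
Your proof is correct and follows essentially the same approach as the paper: condition on $\N$ being an $\eps$-net (probability $\geq 2/3$ by Lemma~\ref{lem:eps-net}), observe that the point $u^\ast = f(\B)$ lies in every set of $\N$ while every violator $S \in \V$ must satisfy $u^\ast \notin S$, and derive a contradiction from the $\eps$-net property if $w(\V) > \eps \cdot w(\S)$. Your extra care about tie-breaking in the sub-claim is a nice touch but not a departure from the paper's argument.
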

\begin{proof}
	Since the VC dimension of $(\S,R)$ is $\lambda$, by Lemma~\ref{lem:eps-net}, with probability at least $2/3$, the
	family $\N$ sampled in Line~\ref{ln:a-1} is an $\eps$-net for $(\S, R)$ with respect to the weight function $w$. In the following, we condition on this event. 
	
	%Since $\B$ is a basis of $\N$, we have $f(\B) = f(\N)$. 
	Let $x := f(\B)$. By Property~\ref{p:LP2} of the LP-type problems we consider, we know that $x$ is the minimal element in the intersection of all sets in $\B$ according to the ordering of $R$. 
	For any set $S \in \S$ to violate $\B$, we need to have $x \notin S$; otherwise $f(\B \cup \set{S}) = x$ which is in contradiction with $f(\B \cup \set{S}) > f(\B)$. 
	Recall that $\V$ is the family of all sets in $\S$ that violate $\B$. Suppose
	towards a contradiction that $w(\V) > \eps \cdot w(\S)$. Since none of the sets in $\V$ contain $x$, and $\N$ is an $\eps$-net, by definition there is a set $S' \in \N$ where $S'$ does not contain $x$. 
	But this is in contradiction with $\B$ being a basis. To see this, if $f(\B) = f(\N)$, then $x$ belongs to all sets in $\N$, and consequently it should also be in $S'$. We thus have $w(\V) \leq \eps \cdot w(\S)$, finalizing the proof. 
	\end{proof}

\begin{lemma}
\label{lem:main-iteration}
The number of iterations in Algorithm~\ref{alg:meta} is $O(\nu \cdot r)$ with probability at least $1 - e^{-\Omega(\nu \cdot r)}$, where $\nu$ denotes the combinatorial dimension of $(\S,f)$.
\end{lemma}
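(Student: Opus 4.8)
The plan is to use a potential-function / weight-growth argument, analogous to the analysis of Clarkson's algorithm. The key object is the total weight $w(\S)$, together with the weight of any fixed basis $B_\S$ of $\S$ (of size $\le \nu$). I would first bound how fast the total weight can grow: in any successful iteration the weight of the violator set $\V$ satisfies $w(\V) \le \eps \cdot w(\S)$, and only those sets have their weights multiplied by $n^{1/r}$. Hence after a successful iteration $w(\S)$ increases by a factor of at most $1 + \eps(n^{1/r} - 1) \le 1 + \eps n^{1/r}$. With $\eps = \tfrac{1}{10 \nu n^{1/r}}$ this factor is at most $1 + \tfrac{1}{10\nu}$, so after $t$ successful iterations $w(\S) \le n \cdot (1 + \tfrac{1}{10\nu})^{t} \le n \cdot \exp(t/(10\nu))$, using the initial value $w(\S) = |\S| \le n$.

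Next I would show that successful iterations force rapid weight growth on a fixed basis $B_\S$. The point is that whenever $\V \neq \emptyset$, the current basis $\B$ does not equal $f(\S)$, so some element of $B_\S$ must violate $\B$ — indeed, if no element of $B_\S$ violated $\B$ then by locality (exactly as in Lemma~\ref{lem:main-correct}) we would get $f(\B) = f(\B \cup B_\S) = f(\S)$, contradicting $\V \neq \emptyset$. So in every successful iteration at least one of the $\le \nu$ elements of $B_\S$ has its weight multiplied by $n^{1/r}$. By pigeonhole, after $t$ successful iterations some fixed element $S^\star \in B_\S$ has been multiplied at least $t/\nu$ times, so $w(S^\star) \ge n^{t/(\nu r)}$, and therefore $w(\S) \ge w(S^\star) \ge n^{t/(\nu r)} = \exp\!\big(\tfrac{t \ln n}{\nu r}\big)$.

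Combining the two bounds, after $t$ successful iterations we need $\exp(\tfrac{t \ln n}{\nu r}) \le n \cdot \exp(\tfrac{t}{10\nu})$, i.e. $\tfrac{t \ln n}{\nu r} \le \ln n + \tfrac{t}{10\nu}$. Since $r \le \ln n$ we have $\tfrac{\ln n}{r} - \tfrac{1}{10} \ge \tfrac{\ln n}{r} - \tfrac{1}{10} \ge \tfrac{9}{10}\cdot\tfrac{\ln n}{r}$ (as $\ln n / r \ge 1$), giving $t = O(\nu r)$. Thus the number of \emph{successful} iterations is $O(\nu r)$. Finally I would pass from successful iterations to total iterations using Claim~\ref{clm:is-successful}: each iteration is independently successful with probability $\ge 2/3$, so by a Chernoff bound (Proposition~\ref{prop:chernoff}), among $C \nu r$ iterations for a suitable constant $C$, at least $C\nu r/3 \ge$ (the required number of successful iterations) occur with probability $1 - e^{-\Omega(\nu r)}$; since the algorithm terminates once $\V = \emptyset$, which happens no later than the point where enough successful iterations have accumulated, the total number of iterations is $O(\nu r)$ with probability $1 - e^{-\Omega(\nu r)}$.

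The main obstacle I anticipate is making the "fixed basis $B_\S$" argument fully rigorous despite randomness: the set $B_\S$ is fixed in advance (pick any one basis of $\S$), but whether a given element of it is a violator in iteration $i$ depends on the random $\N$. This is fine because the weight-growth inequalities hold \emph{deterministically} on every execution path conditioned only on which iterations are labeled successful — the $\eps$-net property is only invoked through Claim~\ref{clm:is-successful} to argue that the "success" label appears often. A secondary point to handle carefully is the edge condition $r \le \ln n$ (guaranteed by the input spec of Algorithm~\ref{alg:meta}), which is exactly what keeps the gap between the two exponential bounds bounded below by a constant factor.
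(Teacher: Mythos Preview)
Your proposal is correct and follows essentially the same approach as the paper's proof: both bound $w(\S)$ from above by $n\cdot e^{t/(10\nu)}$ and from below by $n^{t/(\nu r)}$ after $t$ successful iterations, then combine these with the hypothesis $r\le\ln n$ to get $t\le \tfrac{10}{9}\nu r$, and finish with Chernoff on the success indicators. The only cosmetic difference is that for the lower bound the paper sums the basis-element weights and applies Jensen's inequality to $\sum_j (n^{1/r})^{a_j}$, whereas you use pigeonhole on a single element; both yield the same bound $n^{t/(\nu r)}$.
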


\begin{proof}

Recall that the weight function $w(\cdot)$ is updated only when an iteration is successful, and each iteration succeeds with probability at least $2/3$ by Claim~\ref{clm:is-successful}. By Chernoff bound (Proposition~\ref{prop:chernoff}), we have that if the algorithm terminates in $t$ iterations, then with probability at least $1 - e^{-\Omega(t)}$, at least $t/2$ of these iterations are successful.

We now focus on successful iterations. Let $w_i(\cdot)$ be the weight function $w(\cdot)$ after the $i$-th successful iteration. Initially, for any $S \in \S$ we have $w_0(S) = 1$ (and thus $w_0(\S) = n$).  
We claim that for any integer $t \ge 1$, if Algorithm~\ref{alg:meta} reaches the $t$-th successful iteration, then
\begin{equation}
\label{eq:a-1}
n^{t/\nu r} \le w_{t}(\S) \le e^{t/10\nu} \cdot n.
\end{equation}
We establish Eq~(\ref{eq:a-1}) in the following two claims.

\begin{claim}
\label{cla:upper}
For any integer $t \ge 1$, we have $n^{t/\nu r} \le w_{t}(\S)$.
\end{claim}
\begin{proof}
Fix an arbitrary basis $\B^* = \set{B_1,\ldots,B_k}$ of $\S$ for some $k \leq \nu$ (recall that by definition, $\nu$ is size of the largest basis). 
Since $\B^* \subseteq \S$, we have $w_t(\B^*) \le w_t(\S)$ for any $t > 0$. We thus only need to show $n^{t/\nu r} \le w_{t}(\B^*)$.  

The first observation is that in any iteration, if $\V \neq \emptyset$ then we must have $\V \cap \B^* \neq \emptyset$.  Indeed, if $\V \cap \B^* = \emptyset$, then 
$
	f(\B) = f(\B \cup \B^*) = f(\S),
$
where the first equality is by the locality property of $f$ and induction, and the second equality holds since $\B^*$ is a basis for $\S$. However, this is in contradiction with the fact that $\V \neq \emptyset$. 

Let us now define $\B_{i}$ as the basis of the $\eps$-net computed in the $i$-th successful iteration.  For any $j \in [k]$, 
let $a_j$ be the number of iterations $i$ such that $B_j \in \B^*$ violates $\B_i$. That is, 
\begin{equation*}
a_j = \abs{\{i \in [t] \ |\ f(\B_{i}) < f(\B_{i} \cup \{B_j\})\}}.
\end{equation*}
Since $\V \cap \B^* \neq \emptyset$ in each of the first $t$ successful iterations, there must exist at least one $B_j$ which violates $\B_{i}$ for each $j \in [t]$. We thus have 
$
\sum_{j=1}^k a_j \ge t.
$
Moreover, by the weight update rule of the algorithm, we can write the weight of $\B^*$ as
$w_{t}(\B^*) = \sum_{j=1}^k \left(n^{1/r}\right)^{a_j}.$ By combining these and Jensen's inequality we have
\begin{equation*}
w_{t}(\B^*) \ge k\paren{n^{1/r}}^{\sum_{j=1}^{k}a_j/k}  \geq \paren{n^{1/r}}^{t/k}  \ge n^{t/\nu r},
\end{equation*}
since $k \leq \nu$. This concludes the proof of Claim~\ref{cla:upper}.
\end{proof}

\begin{claim}
\label{cla:lower}
For any integer $t \ge 1$, we have $w_{t}(\S) \le e^{t/10\nu} \cdot n$.
\end{claim}
\begin{proof}
For any iteration $t \geq 1$, the weight update procedure at Line~\ref{ln:a-5} of Algorithm~\ref{alg:meta} gives 
\begin{equation}
\label{eq:a-5}
w_{t+1}(\S) = w_t(\S) + (n^{1/r}-1) \cdot w_t(\V) \leq w_t(\S) + (n^{1/r}) \cdot w_t(\V).
\end{equation}
Moreover, by the condition at Line~\ref{ln:a-4} of the algorithm, we have,
\begin{equation}
\label{eq:a-6}
w_{t}(\V) \le \eps \cdot w_t(\S) = \frac{1}{10 \nu \cdot n ^{1/r}} \cdot w_t(\S),
\end{equation}
by the choice of $\eps$ in the algorithm. Combining (\ref{eq:a-5}) and (\ref{eq:a-6}) we have
\begin{align*}
w_{t}(\S) \le \left(1 + \frac{1}{10\nu}\right)^{t} \cdot w_0(\S) \le e^{t/10\nu} \cdot n. \qed
\end{align*}

\end{proof}
\noindent
We get back to the analysis of the number of iterations. By Eq~(\ref{eq:a-1}) we have $n^{t/\nu r} \le e^{t/10\nu} n$, hence,
%\begin{equation*}
$\frac{t}{\nu} \le \frac{10 r \ln n}{10 \ln n - r}.$
%\end{equation*}
Since $r \le \ln n$, we have $\frac{t}{\nu} \le \frac{10}{9} r$.  Therefore the number of successful iterations cannot exceed $\frac{10}{9} \nu r$, and hence the total number of iterations is bounded by $\frac{20}{9} \nu r$ with probability $1 - e^{-\Omega(\nu r)}$.
\end{proof}

\begin{remark}\label{rem:monte-carlo}
We can easily turn our Las-Vegas algorithm in this section (Algorithm~\ref{alg:meta}) into a Monte-Carlo algorithm by the following modifications: First we pick an $\eps$-net of size $m_{\eps, \lambda_\S, 1/(n \nu)}$, and second, the algorithm return ``FAIL'' whenever $w(\V) > \eps w(\S)$, which will not happen in the first $O(\nu r) = O(\nu \log n)$ iterations with probability at least $1 - \nu \log n \cdot 1/(n \nu) \ge 1 - o(1)$.
\end{remark}

\subsection{Implementation in the Streaming Model}
\label{sec:streaming}

Starting from this section, we show how to implement Algorithm~\ref{alg:meta} in the three big data models considered in the paper. We start with the streaming algorithm.   
In the multi-pass streaming model the elements of $\S$ arrive one by one, and $f(\cdot)$ is known to the algorithm at the beginning. We allow the algorithm to make multiple linear scans of the input.

The main challenge in the streaming implementation of Algorithm~\ref{alg:meta} is that we cannot afford to store the weights of all elements in $\S$ which are needed in the $\eps$-net sampling.  To resolve this issue, we instead store the set of bases computed at 
all the \emph{successful} iterations -- these are the only iterations that we change the weight function -- in a collection $\mathscr{B}$, using which we can compute the weight of each element of $\S$ {\em on the fly}. In particular, the weight of a set $S_i \in \S$ in 
iteration $j$ of the algorithm, namely, $w_j(S_i)$, is computed as $w_j(S_i) := (n^{1/r})^{a_i}$ where $a_i := \card{\set{\B \in \mathscr{B} \mid f(\B \cup \set{S_i}) > f(\B)}}$. It is immediate to verify that this indeed implements the same weight
function in Algorithm~\ref{alg:meta}. It is also easy to see that having access to these weights, we can sample each set with probability proportional to its weight using the weighted version of reservoir sampling~\cite{Chao82},
and hence implement each iteration of Algorithm~\ref{alg:meta} in one pass over the stream. 

The rest of Algorithm~\ref{alg:meta} can be implemented in the streaming model in a straightforward way. 
Let $T_b(m)$ be the time complexity of computing a basis for a set of size $m$, and $T_v(t, b)$ be the time complexity of finding all elements in a set $\T \subseteq \S$ of size $t$ which violate a set $\B$ of 
size $b$, i.e., all $S \in \S$ such that $f(\B \cup S) > f(\B)$. This allows us to prove the following theorem.  %The proof can be found in Appendix~\ref{app:sec-streaming}.

\begin{theorem}
\label{thm:streaming}
Suppose $(\S,f)$ is an LP-type problem with combinatorial dimension $\nu$, VC dimension $\lambda$, and bit-complexity $\bit{\S}$ for each element of $\S$. 
For any integer $r \leq \ln{n}$, we can compute $f(\S)$ with high probability in the streaming model, using $O(\nu r)$ passes, and
$\Ot(\lambda n^{1/r} \cdot \nu + \nu^2) \cdot \bit{\S}$ space. The total running time of the algorithm is also $O(\nu r \cdot T_v(n, \nu) + \nu  r \cdot T_b(\lambda n^{1/r} \cdot \nu))$. 
\end{theorem}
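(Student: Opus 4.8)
The plan is to show that Algorithm~\ref{alg:meta}, when run on a stream, satisfies all three claimed bounds: correctness with high probability, pass-complexity $O(\nu r)$, and space $\Ot(\lambda n^{1/r} \cdot \nu + \nu^2) \cdot \bit{\S}$, together with the stated running time. Correctness and iteration count are essentially handed to us: Lemma~\ref{lem:main-correct} says that whenever the algorithm halts it outputs $f(\S)$, and Lemma~\ref{lem:main-iteration} says it halts within $O(\nu r)$ iterations except with probability $e^{-\Omega(\nu r)}$; invoking Remark~\ref{rem:monte-carlo} we may instead run the Monte-Carlo variant so that the algorithm never silently fails, and a union bound over the $O(\nu r)$ iterations against the failure probability of each $\eps$-net sample ($1/(n\nu)$ per iteration) keeps the total error below $1/n^c$. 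Since each iteration is implemented in a single pass (as described in the paragraph preceding the theorem, using weighted reservoir sampling to draw the $\eps$-net with probabilities proportional to the on-the-fly weights $w_j(S_i) = (n^{1/r})^{a_i}$), the pass bound $O(\nu r)$ is immediate.

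The substance of the proof is the space accounting, which I would break into the two summands. First, the collection $\mathscr{B}$ of bases from all successful iterations: there are $O(\nu r) = O(\nu \log n)$ iterations, each contributing a basis of size at most $\nu$, so $\mathscr{B}$ has $O(\nu^2 \log n)$ elements of $\S$, costing $O(\nu^2 \log n) \cdot \bit{\S} = \Ot(\nu^2) \cdot \bit{\S}$ space; these are exactly the data needed to recompute any weight $w_j(S_i)$ on the fly. Second, within a single pass we must hold the sampled $\eps$-net $\N$ of size $m := m_{\eps,\lambda,\cdot}$. Plugging $\eps = \frac{1}{10\nu n^{1/r}}$ into the formula \eqref{eq:a-1} for $m_{\eps,\lambda,\delta}$ in Lemma~\ref{lem:eps-net} (with $\delta$ a constant, or $1/(n\nu)$ for the Monte-Carlo version — only affecting logarithmic factors) gives $m = O\!\big(\frac{\lambda}{\eps}\log\frac{\lambda}{\eps}\big) = O(\lambda \nu n^{1/r} \log(\lambda \nu n))= \Ot(\lambda \nu n^{1/r})$, so storing $\N$ costs $\Ot(\lambda \nu n^{1/r}) \cdot \bit{\S}$. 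Adding the two summands, and noting that the reservoir-sampling pointers, the basis $\B$ of $\N$, and the running count of $w(\V)$ all fit within these bounds, yields the claimed $\Ot(\lambda n^{1/r}\cdot\nu + \nu^2)\cdot\bit{\S}$.

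For the running time I would charge, per iteration: one scan computing each element's weight — for $S_i$ this means testing membership against every $\B \in \mathscr{B}$, i.e.\ an $f(\B\cup\{S_i\})>f(\B)$ query — but more cleanly, the violation tests across the whole stream against the single current basis $\B$ (of size at most $\nu$) cost $T_v(n,\nu)$, which dominates; and computing the basis of the sampled net $\N$ of size $m = \Ot(\lambda \nu n^{1/r})$ costs $T_b(\lambda n^{1/r}\cdot\nu)$ (absorbing log factors into $T_b$). Multiplying by the $O(\nu r)$ iterations gives $O(\nu r \cdot T_v(n,\nu) + \nu r \cdot T_b(\lambda n^{1/r}\cdot\nu))$. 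The main obstacle, and the part needing the most care, is the bookkeeping that the on-the-fly weight computation via $\mathscr{B}$ genuinely reproduces the weight function of Algorithm~\ref{alg:meta} — i.e.\ that $a_i$ counted against $\mathscr{B}$ equals the exponent the offline algorithm would assign — and that weighted reservoir sampling with these implicitly represented weights yields a sample distributed exactly as required by Lemma~\ref{lem:eps-net}; everything else is routine substitution into the $\eps$-net bound and a union bound over iterations.
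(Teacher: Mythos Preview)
Your proposal is correct and follows essentially the same approach as the paper: correctness via Lemma~\ref{lem:main-correct}, pass count via Lemma~\ref{lem:main-iteration}, space split into the $\Ot(\lambda \nu n^{1/r})\cdot\bit{\S}$ for the $\eps$-net and $\Ot(\nu^2)\cdot\bit{\S}$ for the stored bases, and running time as (violation test $+$ basis computation) times $O(\nu r)$ iterations. Your treatment is in fact slightly more careful than the paper's own proof, which does not explicitly invoke Remark~\ref{rem:monte-carlo} for the high-probability guarantee nor flag the need to verify that the on-the-fly weight computation via $\mathscr{B}$ reproduces the weights of Algorithm~\ref{alg:meta}.
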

\begin{proof}
The correctness of the algorithm follows from Lemma~\ref{lem:main-correct}. As each iteration of Algorithm~\ref{alg:meta} can be implemented in one pass, the total number of passes needed by our streaming 
algorithm is $O(\nu r)$ with high probability by Lemma~\ref{lem:main-iteration}. 

Recall that the size of each $\eps$-net $\N$ sampled in Algorithm~\ref{alg:meta} is
$
m = m_{\eps, \lambda, \frac{2}{3}} = \Ot\paren{\lambda \nu n^{1/r}},
$
by the choice of $\eps$ in the algorithm and $m_{\eps,\lambda,\frac{2}{3}}$ in Lemma~\ref{lem:eps-net}.  
The space needed by the algorithm to store $\N$ in each iteration is $O(m) \cdot \bit{\S}$, which is equal to $\Ot\paren{\lambda \nu n^{1/r}} \cdot \bit{\S}$ bits. 
We also need to store all bases in successful iterations, which requires $O(\nu \cdot r) \cdot O(\nu) \cdot \bit{\S} = \Ot(\nu^2) \cdot \bit{\S}$ (since $r  = O(\log{n})$) 
as each basis requires $O(\nu) \cdot \bit{\S}$ bits to represent and there are total of $O(\nu r)$ such bases. %This proves the bound on the space. % needed by the algorithm. 

Each pass of the algorithm involves performing a violation test over the $n$ elements of $\S$, which takes $O(T_v(n,\nu))$ time. And computing a basis of $m$ elements which takes $O(T_b(m))$ times. 
The run-time follows by multiplying these numbers by the number of passes, and by  choice of $m$.
\end{proof}

\subsection{Implementation in the Coordinator Model}
\label{sec:coordinator}

Recall that in the coordinator model the input set $\S$ is arbitrarily partitioned among $k$ sites $P_1, \ldots, P_k$ such that for any $i \in [k]$, the site $P_i$ receives the elements $\S_i$. The $k$ sites and the coordinator want to jointly
compute $f(\S) = f(\S_1 \cup \cdots \cup \S_k)$ via communication. The function $f$ is a public knowledge, that is, all parties know how to evaluate the function $f(\T)$ for any $\T \in 2^{\S}$ assuming $\T$ resides entirely on that machine.

%Similar to the streaming model, the main step here is also the implementation of the $\eps$-net sampling procedure in Algorithm~\ref{alg:meta}. We show in Appendix~\ref{app:sec-coordinator} how this step can be performed efficiently, 
%leading to the following theorem. 

Similar to the streaming model, the main step here is also the implementation of the $\eps$-net sampling procedure in  Algorithm~\ref{alg:meta}.

\begin{lemma}
	\label{lem:net-coordinator}
	
	The coordinator can sample a subset $\N \subseteq \S$ of size $m$ according to the weight function $w : \S \to \mathbb{R}$ using $2$ rounds and $O(m \cdot \bit{\S} + k(\ell/r+1) \log n)$ bits of communication, where $\ell$ is the number of times the weight function $w(\cdot)$ has been updated when simulating Algorithm~\ref{alg:meta} in the coordinator model.
\end{lemma}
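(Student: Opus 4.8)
The plan is to simulate the weighted sampling of $m$ sets from $\S$ in a two-level fashion: each site locally compresses its contribution to a single number (its total weight) and a sampling primitive, and then the coordinator draws the $m$ samples by first choosing which site each sample comes from, and then asking that site to produce the sampled element. The key observation that makes this cheap is that, just as in the streaming implementation, no party needs to store the explicit weight function: the weight $w(S) = (n^{1/r})^{a_S}$ of any element $S$ is determined entirely by the collection $\mathscr{B}$ of bases produced in the successful iterations so far, where $a_S = \card{\set{\B \in \mathscr{B} \mid f(\B \cup \set{S}) > f(\B)}}$. Since $f$ is public knowledge and every $\B \in \mathscr{B}$ has size $O(\nu)$, the coordinator can broadcast the current $\mathscr{B}$ (or, incrementally, only the newest basis) to all sites, and each site can then recompute $w(S)$ for every $S$ it holds without any further communication.

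Here are the steps, in order. \emph{Round 1 (upward).} The coordinator broadcasts the newest basis to all $k$ sites. Each site $P_i$ locally computes $w(S)$ for each $S \in \S_i$ using $\mathscr{B}$ and sends back the single quantity $W_i := w(\S_i) = \sum_{S \in \S_i} w(S)$; since $a_S \le \ell$ and $\ell/r$ essentially bounds the exponent (each successful iteration multiplies a weight by $n^{1/r}$, and there are $\ell$ updates), each $W_i$ fits in $O((\ell/r + 1)\log n)$ bits, giving $O(k(\ell/r+1)\log n)$ bits for this step. \emph{Local step at the coordinator.} Knowing $(W_1,\dots,W_k)$ and $W := \sum_i W_i = w(\S)$, the coordinator draws $m$ i.i.d. samples from the distribution on $[k]$ that puts mass $W_i/W$ on site $i$; let $m_i$ be the number of samples assigned to site $i$, so $\sum_i m_i = m$. \emph{Round 2 (downward then upward).} The coordinator tells each site $P_i$ the count $m_i$ (total $O(k \log m) = O(k\log n)$ bits, absorbed into the stated bound); site $P_i$ then draws $m_i$ elements from $\S_i$ with probability proportional to $w(\cdot)$ restricted to $\S_i$ — e.g. by weighted reservoir sampling~\cite{Chao82} in a single local pass — and sends these $m_i$ elements back. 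In total the sites return $\sum_i m_i = m$ elements, costing $O(m \cdot \bit{\S})$ bits. Composing the two-level draw yields exactly the distribution "pick each set of $\S$ with probability proportional to $w$", which is the required $\eps$-net sample by Lemma~\ref{lem:eps-net}; this uses $2$ rounds and $O(m \cdot \bit{\S} + k(\ell/r+1)\log n)$ bits, as claimed.

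The one point that needs a little care — and the main (mild) obstacle — is the bit-length bound on the weights and partial sums. Because weights are only ever multiplied (never reset) and there have been $\ell$ updates total, each of which scales some subset of weights by $n^{1/r}$, we have $w(S) \le (n^{1/r})^{\ell} = n^{\ell/r}$, so $\log w(S) = O((\ell/r)\log n)$ and $\log W_i \le \log W \le \log(n \cdot n^{\ell/r}) = O((\ell/r+1)\log n)$; hence each of the $k$ messages in Round 1 is $O((\ell/r+1)\log n)$ bits and the $k$-site total matches the stated term. One should also note that the two-stage sampling is exact: conditioned on $m_i$, the samples from site $i$ are distributed proportionally to the within-site weights, and the site-selection step is proportional to $W_i$, so the overall probability of drawing a particular multiset is proportional to the product of the corresponding $w(S)$'s, i.e. the correct weighted-with-replacement distribution used in Lemma~\ref{lem:eps-net}. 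Finally, all the local computations (recomputing weights from $\mathscr{B}$, the weighted reservoir sampling, and the coordinator's multinomial draw over $[k]$) are polynomial-time, so nothing is lost on the running-time side.
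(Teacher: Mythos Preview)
Your proposal is correct and follows essentially the same two-level sampling scheme as the paper: sites send their total weights $W_i$ to the coordinator, the coordinator draws $m$ site indices proportional to the $W_i$'s and sends each site its count, and each site returns that many locally weight-sampled elements. Your write-up is slightly more detailed than the paper's (you spell out how sites recompute $w(\cdot)$ from the stored bases $\mathscr{B}$ and give a cleaner justification of the $O((\ell/r+1)\log n)$ bit bound on $W_i$), but the argument is the same.
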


\begin{proof}
	The sampling algorithm is as follows.  In the first round each site $P_i$ sends $w(\S_i)$ to the coordinator.  Note that $w(\S_i)$ for any $i \in [k]$ can be described in $\log(1 + n^{1/r})^\ell = O(\ell/r \cdot \log n)$ bits.  
	
	In the second round the coordinator generates $m$ i.i.d.\ random numbers $x_1, \ldots, x_m$ from $[k]$ from the distribution
	$\Pr[i \text{ is sampled}] = \frac{w(\S_i)}{w(\S)}$,
	and sends the $i$-th site the number $y_i = \abs{\{j\ |\ x_j = i\}}$.
	After obtaining $y_i$, site $P_i$ samples $y_i$ elements from its local set $\S_i$ according to the distribution $\Pr[S \text{ is sampled}] = \frac{w(S)}{w(\S_i)}$, and sends the sampled elements to the coordinator. 
	Note that $y_i \le m \le n$ for any $i \in [m]$, and thus the communication cost of this round is bounded by $O(k) \cdot O(\ell/r \cdot \log n) + O(m) \cdot \bit{\S}$ bits.
	
	Finally, the sampling is indeed with respect to the weight function $w(\cdot)$, since 
	\[
	\Pr[S \text{ is sampled}] =  \frac{w(\S_i)}{w(\S)} \cdot \frac{w(S)}{w(\S_i)} =  \frac{w(S)}{w(\S)}. 
	\]
	This concludes the proof. 
\end{proof}

In order to implement Algorithm~\ref{alg:meta}, each site should also be able to determine the set of violating elements in its input. This can be done easily by asking the coordinator to share the basis computed in each iteration with every site. 
The proof of Theorem~\ref{thm:coordinator} follows directly from that of Theorem~\ref{thm:streaming} by plugging in Lemma~\ref{lem:net-coordinator}.

\begin{theorem}
\label{thm:coordinator}
Suppose $(\S,f)$ is an LP-type problem with combinatorial dimension $\nu$, VC dimension $\lambda$, and bit-complexity $\bit{\S}$ for each element of $\S$. 
For any integer $r \leq \ln{n}$, we can compute $f(\S)$ with high probability in the coordinator model with $k \geq 2$ machines, using $O(\nu r)$ rounds, and
$\Ot(\lambda n^{1/r} \cdot \nu^2 + k \cdot \nu^2) \cdot \bit{\S}$ communication in total. The local computation time of the coordinator is $O(\nu r \cdot (T_b(\lambda n^{1/r} \cdot \nu) + k \nu))$ and the local computation time of the $i$-th site
is $O(\nu r \cdot T_v(n_i, \nu))$ where $n_i := \card{\S_i}$. 
\end{theorem}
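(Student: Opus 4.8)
The plan is to derive Theorem~\ref{thm:coordinator} as a direct corollary of the correctness and iteration-count guarantees already established for the meta-algorithm (Lemma~\ref{lem:main-correct} and Lemma~\ref{lem:main-iteration}), together with the coordinator-model implementation of the single nontrivial step, namely $\eps$-net sampling, given by Lemma~\ref{lem:net-coordinator}. Concretely, I would first observe that correctness is inherited verbatim: Algorithm~\ref{alg:meta} terminates with a basis $\B$ such that $f(\B) = f(\S)$ by Lemma~\ref{lem:main-correct}, and since the coordinator-model simulation computes exactly the same bases, violating sets, and weight updates as the centralized algorithm (the weight function is maintained implicitly but faithfully, just as in the streaming case), the output $f(\B)$ is correct. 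The high-probability claim then comes from Lemma~\ref{lem:main-iteration}, which bounds the number of iterations by $O(\nu r)$ with probability $1 - e^{-\Omega(\nu r)}$; since $r \le \ln n$ we fold the failure probability of the $\eps$-net sampling (Lemma~\ref{lem:eps-net} with confidence $2/3$ per iteration, or the Monte-Carlo variant of Remark~\ref{rem:monte-carlo}) into the same ``with high probability'' statement via a union bound over the $O(\nu r) = O(\nu \log n)$ iterations.

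Next I would account for the per-iteration cost. Each iteration consists of: (i) sampling an $\eps$-net $\N$ of size $m = m_{\eps,\lambda,2/3} = \Ot(\lambda \nu n^{1/r})$, which by Lemma~\ref{lem:net-coordinator} costs $2$ rounds and $O(m \cdot \bit{\S} + k(\ell/r + 1)\log n)$ bits; (ii) having the coordinator compute a basis $\B$ of $\N$ locally, in time $T_b(m) = T_b(\lambda n^{1/r}\nu)$; (iii) the coordinator broadcasting $\B$ to all $k$ sites, costing $O(k \cdot \nu) \cdot \bit{\S}$ bits since $|\B| = O(\nu)$; (iv) each site $P_i$ computing the elements of $\S_i$ violating $\B$ in time $T_v(n_i, \nu)$ and reporting $w(\V \cap \S_i)$ to the coordinator so that the coordinator can test the condition $w(\V) \le \eps w(\S)$ at Line~\ref{ln:a-4} and, if it holds, so every party can update its implicit weights. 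The total number of weight updates $\ell$ is at most the number of successful iterations, which is $O(\nu r) = O(\nu \log n)$, so the term $k(\ell/r+1)\log n$ summed over $O(\nu r)$ iterations contributes $O(k \nu r \cdot (\nu \log n)/r \cdot \log n \cdot \ldots)$ — I would need to be slightly careful here, but the bound in the theorem statement, $\Ot(k \nu^2) \cdot \bit{\S}$, is exactly what drops out once one notes $\ell/r \le \nu$ and absorbs the $\log$ factors into $\Ot(\cdot)$. Multiplying the $O(m \cdot \bit{\S})$ per-iteration sampling cost and the $O(k\nu) \cdot \bit{\S}$ broadcast cost by the $O(\nu r)$ iteration bound yields the claimed total communication $\Ot(\lambda n^{1/r}\nu^2 + k\nu^2)\cdot\bit{\S}$, and multiplying $2 = O(1)$ rounds per iteration by $O(\nu r)$ iterations gives the $O(\nu r)$ round bound.

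Finally, the running-time bookkeeping: the coordinator's local work per iteration is $T_b(\lambda n^{1/r}\nu) + O(k\nu)$ (basis computation plus aggregating the $k$ reported weights), and site $P_i$'s local work per iteration is $T_v(n_i, \nu)$; multiplying by $O(\nu r)$ iterations gives the stated bounds $O(\nu r \cdot (T_b(\lambda n^{1/r}\nu) + k\nu))$ and $O(\nu r \cdot T_v(n_i,\nu))$ respectively.

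I do not expect a genuine obstacle here — the lemma is structured precisely so that the theorem is a plug-in. The one place requiring mild care is the accounting for the implicit weight representation: one must check that $w(\S_i)$ can be encoded in $O((\ell/r)\log n)$ bits (it equals a product of at most $\ell$ factors each equal to $n^{1/r}$, times the count of constituent sets, so $\log w(\S_i) = O((\ell/r)\log n + \log n)$), and that $\ell \le O(\nu r)$ so this never exceeds $O(\nu \log n)$ bits; this is the ``main'' point to get right, though it is routine. One should also remark, as in the streaming case, that the sites never store explicit weights — each site recomputes $w(S) = (n^{1/r})^{a_S}$ on the fly from the $O(\nu r)$ broadcast bases, exactly as in the collection $\mathscr{B}$ trick of Section~\ref{sec:streaming} — so no extra per-site space or communication is incurred beyond what is charged above.
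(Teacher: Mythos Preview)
Your proposal is correct and takes essentially the same approach as the paper, which simply states that the theorem ``follows directly from that of Theorem~\ref{thm:streaming} by plugging in Lemma~\ref{lem:net-coordinator}.'' You have in fact supplied more of the bookkeeping detail (the per-iteration cost breakdown, the $\ell/r \le O(\nu)$ observation, and the on-the-fly weight reconstruction from broadcast bases) than the paper itself provides.
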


\subsection{Implementation in the MPC Model}
\label{sec:MPC}

The implementation of Algorithm~\ref{alg:meta} in the MPC model can be done similarly as that in the coordinator model, by choosing one of the machines to play the role of coordinator. The only problem is that when the number of machines is large, 
the machines cannot simply send all the messages to the coordinator directly, as it will blow up the load in the coordinator. 

Our general strategy is to simulate our implementation of the meta-algorithm for the coordinator model in the MPC model for $r=1/\delta$ round protocols. 
The main challenge in implementing this is that once we require the load of roughly $n^{\delta}$ per machine, we need to start with $k=n^{1-\delta}$ 
machines to begin with to fit the whole input across all machines. This means that the number of sites in the simulation is $k$. But then, if all these machines need to send even one bit to the designated coordinator machine (or vice versa), this requires
a load of $n^{1-\delta}$ on the coordinator machine which is prohibitively large for any $\delta < 1/2$. 

In order to fix this, we are going to use the by now standard approach of~\cite{GSZ11}. There are only two steps that the coordinator and the machines need to communicate with each other: (1) when the machines need to send a 
sample of the $\eps$-net, and (2) when the coordinator needs to send the basis to the machines. The latter can be done easily in $O(1/\delta)$ MPC rounds on machines of memory $O(n^{\delta})$: the coordinator first shares this information with 
$n^{\delta}$ other machines in one round; each of these machines next shares this information with another set of $n^{\delta}$ machines (unique to each original machine). In $O(1/\delta)$ rounds all the $n^{1-\delta}$ machines
would receive this information (see~\cite{GSZ11} for more details on this general approach).  

To handle the part when the machines need to send the $\eps$-net $\N$ to the coordinator, we do as follows. Recall that the size of $\N$ is at most $\tilde{O}(\lambda n^\delta \nu^2)$, and thus it will fit the memory of the coordinator. However, we first need to sample this
according to the correct distribution.  In order to do this, we use our approach for implementing the streaming algorithm. Since by the previous part we managed to share the basis computed in each iteration with every machine, as in the case of streaming algorithms, the machines can compute the weights of every constraint they have. The total weight of the constraints can also be computed in $O(1/\delta)$ rounds using the sort and search method of~\cite{GSZ11}. As a result, each machine can locally perform the sampling of $\N$ and send this information to the coordinator. %Since the size of $\N$ is roughly $n^{\delta}$, it fits the memory of the coordinator. 
%
%To summarize, we can implement each round of the algorithm in the coordinator model for $r=1/\delta$ in $O(1/\delta)$ MPC rounds, hence proving Theorem~\ref{thm:MPC}.
%
%The implementation of Algorithm~\ref{alg:meta} in the MPC model can be done similarly as that in the coordinator model, by choosing one of the machines to play the role of coordinator. The only problem is that when the number of machines is large, 
%the machines cannot simply send all the messages to the coordinator directly, as it will blow up the load in the coordinator. 
%In Appendix~\ref{app:sec-MPC} we show how to get around this using standard primitives in MPC model. 
To summarize, we have the following theorem. 

\begin{theorem}
\label{thm:MPC}
Suppose $(\S,f)$ is an LP-type problem with combinatorial dimension $\nu$, VC dimension $\lambda$, and bit-complexity $\bit{\S}$ for each element of $\S$. 
For any $\delta \in (0,1)$, we can compute $f(\S)$ with high probability in the MPC model using $O(\nu/\delta^2)$ rounds with
$\Ot(\lambda n^{\delta} \cdot \nu^2) \cdot \bit{\S}$ load per machine. 
\end{theorem}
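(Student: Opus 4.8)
The plan is to simulate the coordinator-model algorithm of Theorem~\ref{thm:coordinator} (specialized to $r = 1/\delta$, so that $n^{1/r} = n^{\delta}$) inside the MPC model, using the load-balancing primitives of \cite{GSZ11} to route the two pieces of communication that would otherwise overload the designated ``coordinator'' machine. First I would fix the setup: since each machine can hold only $O(n^{\delta}) \cdot \poly(\lambda,\nu)\cdot\bit{\S}$ bits, we must use $k = \tilde{\Theta}(n^{1-\delta})$ machines just to store the input, and one of them is named the coordinator. By Lemma~\ref{lem:main-iteration}, Algorithm~\ref{alg:meta} runs for $O(\nu r) = O(\nu/\delta)$ iterations with high probability, and by Lemma~\ref{lem:main-correct} the output $f(\B)$ is correct when it halts; so it suffices to show each iteration can be executed in $O(1/\delta)$ MPC rounds with the claimed load, which multiplies to $O(\nu/\delta^2)$ rounds overall.

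The second step is to describe the per-iteration simulation. As in the streaming implementation (Section~\ref{sec:streaming}), rather than storing the weight function explicitly we store the collection $\mathscr{B}$ of bases from all successful iterations; since there are $O(\nu r) = O(\nu \log n)$ of them and each has $O(\nu)$ elements, every machine can afford to keep all of $\mathscr{B}$, and can therefore compute locally the weight $w(S) = (n^{\delta})^{a_S}$ of every constraint $S$ it holds. To sample the $\eps$-net $\N$ of size $m = m_{\eps,\lambda,2/3} = \tilde O(\lambda \nu n^{\delta})$ with probability proportional to weight, each machine computes the total weight $w(\S_i)$ of its local constraints; the global total $w(\S)$ is then obtained by an aggregation over all $k$ machines, which is exactly the ``sort and search''/prefix-sum routine of \cite{GSZ11} and runs in $O(\log_k n) = O(1)$ rounds for $k = n^{\Omega(1)}$ — I would instead phrase it as $O(1/\delta)$ rounds to be uniform. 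Given $w(\S)$, each machine can perform its own share of the weighted sampling locally (as in reservoir sampling), and ship the (at most $\tilde O(\lambda \nu n^{\delta})$) sampled constraints to the coordinator; since $\N$ fits in the coordinator's memory, this is fine load-wise once we route it through the tree of \cite{GSZ11} so that no single hop exceeds $n^{\delta}$ messages. The coordinator then computes a basis $\B$ of $\N$ locally, checks whether the iteration is successful using the aggregated $w(\V)$ (which again needs the $\mathscr{B}$ trick plus one aggregation pass to sum weights of violating constraints over all machines), and if so appends $\B$ to $\mathscr{B}$ and broadcasts $\B$ to all machines — the broadcast is the other communication step, handled by the $O(1/\delta)$-round fan-out of \cite{GSZ11} in which the coordinator tells $n^{\delta}$ machines, each of which tells another disjoint $n^{\delta}$, etc., reaching all $n^{1-\delta}$ machines.

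The third step is the bookkeeping on load and rounds. Each iteration costs: $O(1/\delta)$ rounds for the two aggregations (total weight, total violating weight), $O(1/\delta)$ rounds to funnel $\N$ up to the coordinator, and $O(1/\delta)$ rounds to broadcast $\B$ back down, for $O(1/\delta)$ rounds per iteration and $O(\nu/\delta^2)$ rounds in all. The load is dominated by (i) storing $\N$, which is $\tilde O(\lambda n^{\delta} \nu) \cdot \bit{\S}$ on the coordinator and smaller elsewhere, (ii) storing $\mathscr{B}$, which is $\tilde O(\nu^2)\cdot\bit{\S}$ on every machine, and (iii) the per-hop message volume in the \cite{GSZ11} routines, which is $O(n^{\delta})$ messages each of size $O(\bit{\S})$ or $O(\log n)$ bits; combining these gives $\tilde O(\lambda n^{\delta}\nu^2)\cdot\bit{\S}$ per machine, matching the statement. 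Correctness is inherited verbatim from Lemmas~\ref{lem:main-correct}, \ref{lem:main-iteration} and Claim~\ref{clm:is-successful}, since every quantity we compute (weights, the $\eps$-net, the basis, the violator weight) is exactly the one Algorithm~\ref{alg:meta} uses.

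I expect the main obstacle to be the honest verification that the two communication phases never overload any single machine: the naive coordinator model lets all $k = n^{1-\delta}$ sites talk to the coordinator directly, which would put load $n^{1-\delta} \gg n^{\delta}$ on it, so the whole content of the proof is really in checking that routing through the $O(1/\delta)$-depth $n^{\delta}$-ary tree of \cite{GSZ11} both (a) delivers the $\eps$-net sample to the coordinator and the basis back to all machines, and (b) keeps every per-round, per-machine message count at $O(n^{\delta})$, including the aggregation of $w(\S)$ and $w(\V)$ across machines. Everything else — the iteration count, the $\eps$-net guarantee, the weight-recomputation-from-$\mathscr{B}$ idea — is already established earlier, so the write-up will mostly consist of invoking \cite{GSZ11} carefully and tallying the load.
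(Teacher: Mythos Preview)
Your proposal is correct and follows essentially the same approach as the paper: simulate the coordinator-model algorithm with $r=1/\delta$, use the stored bases $\mathscr{B}$ (as in the streaming implementation) so each machine can recompute weights locally, and invoke the $O(1/\delta)$-round broadcast/aggregation primitives of \cite{GSZ11} to handle the two communication bottlenecks (sending the $\eps$-net up and the basis down). Your write-up is in fact more detailed than the paper's own sketch---you explicitly account for the $w(\V)$ aggregation and the per-machine load contributions---but the structure and key ideas are identical.
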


\newcommand{\TLE}{\ensuremath{\textnormal{T}_{\mathtt{LE}}}}

\newcommand{\TLP}{\ensuremath{\textnormal{T}_{\mathtt{LP}}}}

\newcommand{\TMEB}{\ensuremath{\textnormal{T}_{\mathtt{MEB}}}}

\newcommand{\TSVM}{\ensuremath{\textnormal{T}_{\mathtt{SVM}}}}

\section{Examples and Applications}
\label{sec:app}

We now give examples of the application of our algorithms for general LP-type problems.  We will discuss several fundamental optimization problems in machine learning, namely, linear programming, 
Linear SVM, and Core SVM. 
Recall that when implementing our meta algorithm in each model, we have left two functions $T_v(\cdot)$ (the time needed for performing the violation test) and $T_b(\cdot)$ (the time for computing the basis) unspecified.  In this section
we will provide concrete bounds for these functions in the context of the concrete problems we study.
Throughout this section, we assume that the bit-complexity of each number in the input is $O(\log{n})$ bits.

\subsection{Linear Programming}\label{sec:LP}

A linear program is an optimization problem of the type: 
\begin{align}
\min_{x \in \IR^d} \sum_{i=1}^{d} c_i x_i \quad \textnormal{subject to} \quad \sum_{i=1}^d a_i^j x_i \le b^j \quad \textnormal{for all $j \in [n]$}. \label{eq:LP}
\end{align} 

A $d$-dimensional linear program can be modeled as an LP-type problem as follows. Let $\S$ be a set family of size $n$ such that
for every constraint in (\ref{eq:LP}), there exists a unique element $S \in \S$ which is the half-space in the $d$-dimensional Euclidean space $\IR^d$ containing the points that satisfy this single constraint. 
We define the function $f$ over subsets of $\S$ such that for every $\A \subseteq \S$, $f(\A)$ is the lexicographically smallest point that minimizes the objective value of LP while satisfying only the constraints in $\A$.
The linear program (\ref{eq:LP}) now corresponds to the LP-type problem $(\S,f)$ (we use $\S$ as opposed to our previous notation $S$, since each element of $\S$ is now itself a subset of $\IR^d$, and hence $\S$ forms a set family). 
We refer the interested readers to~\cite{MSW96} for more details on connection between linear programming and LP-type problems.  

It is known that the combinatorial dimension $\nu$ of this particular LP-type problem $(\S, f)$ is at most $d+1$~\cite{MSW96}. The VC dimension $\lambda$ is also at most $d+1$~\cite{VC15}.
%We also show how to use mostly known results to implement the basis computation and violation test for linear programs. 
%In the following, let $\TLP(m,t)$ and $\TLE(t)$ denote the time needed to solve a linear program with $\Theta(m)$ constraints and 
%$\Theta(t)$ variables, and the time needed to solve a system of linear equations with $\Theta(t)$ constraints and $\Theta(t)$ variables, respectively. 

In the following, let $\TLP(m,t)$ denotes the time needed to solve a linear program with $\Theta(m)$ constraints and 
$\Theta(t)$ variables.

\begin{proposition}\label{prop:LP-oracles}
	For any linear program with $n$ constraints and dimension $d$:
	\begin{itemize}
		\item The time needed to compute a basis of $m \geq d$ given constraints is 
		$
		T_b(m) = O(d \cdot \TLP(m,d)). 
		$
		\item The time needed to compute all constraints  that violate a given basis of size $b = O(d)$ among $t$ constraints is
		$
		T_{v}(t,b) =  O(t \cdot d + d\cdot\TLP(d, d)). 
		$
	\end{itemize} 
\end{proposition}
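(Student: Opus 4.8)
The plan is to handle each of the two oracles separately, in each case reducing the task to a small number of calls to a standard LP solver plus some cheap bookkeeping.

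For the first part (computing a basis of $m$ given constraints), the idea is to first solve the linear program restricted to the $m$ constraints, obtaining an optimal point $x^* = f(\A)$ where $\A$ is the given family; this costs $\TLP(m,d)$. A basis is then a minimal subfamily $\B \subseteq \A$ with $f(\B) = f(\A)$, and by standard LP-type theory such a basis has size at most $\nu \le d+1$; concretely, a basis can be extracted from the set of constraints that are \emph{tight} (active) at $x^*$. The subtlety is that $x^*$ may lie on more than $d+1$ active hyperplanes (degeneracy), so one cannot simply return all active constraints. Instead, I would peel off constraints one coordinate-direction at a time: having identified that the optimum is pinned in certain directions, repeatedly re-solve lower-dimensional LPs to certify which constraints are genuinely needed, using at most $O(d)$ such re-solves, each on $\Theta(m)$ constraints and $\le d$ variables, for a total of $O(d \cdot \TLP(m,d))$. (Alternatively one invokes the known fact, e.g.\ from~\cite{MSW96}, that a basis of an LP instance can be computed with $O(d)$ LP solves; I would cite this and spell out the degeneracy-handling step.)

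For the second part (finding all constraints among $t$ that violate a given basis $\B$ of size $b = O(d)$), first compute $x^* = f(\B)$ by solving the LP on the $O(d)$ constraints of $\B$; this costs $\TLP(d,d)$, and we multiply by $d$ to also obtain, as in the first part, the data needed to test violation (namely $x^*$ together with the directions along which the objective is already determined, so that ``$f(\B \cup \{S\}) > f(\B)$'' becomes a concrete geometric test). Then, for each of the $t$ candidate constraints $S$, we check in $O(d)$ time whether adding $S$ strictly increases $f$: this amounts to checking whether $x^*$ violates the half-space $S$, or, in the degenerate case where $x^*$ lies on the boundary of $S$, whether moving along one of the still-free improving directions is blocked by $S$ — both are $O(d)$-time arithmetic checks. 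Summing over all $t$ constraints gives $O(t \cdot d)$, and adding the one-time cost $O(d \cdot \TLP(d,d))$ yields the claimed bound $T_v(t,b) = O(t \cdot d + d \cdot \TLP(d,d))$.

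The main obstacle in both parts is handling degeneracy cleanly: when the LP optimum is attained on a face of dimension $\ge 1$ or on more than $d$ hyperplanes, the naive ``read off the active constraints'' recipe does not directly give a basis, and the violation test ``$f(\B\cup\{S\}) > f(\B)$'' is not simply ``$x^*\notin S$''. I expect to resolve this by working with the lexicographic tie-breaking rule already built into the definition of $f$ (Section~\ref{sec:LP}), which makes $f(\A)$ unique and reduces a $d$-dimensional LP with ties to a sequence of $O(d)$ ordinary LPs on the successively constrained objective; this is exactly where the extra factor of $d$ in both bounds comes from. The remaining steps — the per-constraint $O(d)$ arithmetic and the $\Theta(m)$ versus $\Theta(t)$ accounting — are routine.
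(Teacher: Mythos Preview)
Your proposal is correct and matches the paper's argument: both parts rest on the lexicographic tie-breaking rule for $f$, implemented by a cascade of $O(d)$ LP solves (first optimize the objective, then minimize $x_1$ subject to optimality, then $x_2$, and so on), which is precisely where the extra factor $d$ comes from. One simplification you can make in the second part: once $x^* = f(\B)$ is the unique lex-smallest optimum, the violation test for a constraint $S$ is simply the check $x^* \notin S$ with no further direction-analysis needed (by monotonicity, $f(\B\cup\{S\}) = f(\B)$ iff $x^*$ remains feasible for $S$), and this is exactly what the paper does.
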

\begin{proof}
	To find a basis $\B$ of a set $\N$ of $m$ constraints, we first solve the LP only given the constraints in $\N$ to obtain a point $x^*=(x^*_1,\ldots,x^*_d)$ with optimal value $c^*$. Recall that in our mapping of LP to an LP-type problem, we
	need to find a lexicographically smallest optimal solution on constraints in $\N$, which may not be the point $x^*$ even though the objective value is still $c^*$.  Hence, we now write a separate linear program:
	\begin{align*}
	&\min_{x \in \IR^d} x_1  \\ 
	&\textnormal{subject to} \quad \sum_{i=1}^{d} c_i x_i = c^* \quad \textnormal{and} \quad \sum_{i=1}^d a_i^j x_i \le b^j \quad \textnormal{for all $j \in \N$}.
	\end{align*}
	This allows us to find an optimal solution to the LP with the minimum value of $x_1$. Repeating this procedure for $d$ iterations and for $i$-th iteration fixing $x_1,..,x_{i-1}$ computed so far, and finding the minimum value for $x_i$, 
	allows us to find the lexicographically smallest optimal solution. These LPs all are $d$-dimensional with $\Theta(m)$ constraints, and hence can be solved in $O(d \cdot \TLP(m,d))$ time in total, finalizing the first part.

	A basis of size $b$ in a linear program consists of $b$ constraints of the LP that are all tight by the assignment of the variables. Hence, given the basis, we only need to solve the linear program on a system of $b$ linear inequalities to determine a value of $x^*$ that 
	is tight for all the constraints in the basis. This can be done in $O(d\TLP(d, d))$ time (as we do before). After this, we can simply check the $d$-dimensional vector $x^*$ against all the $t$ constraints and add each one as a violating set if $x^*$ does not satisfy the constraint
	in $O(t \cdot d)$ time, finalizing the second part.
\end{proof}

Plugging in the currently best known bound for $\TLP(m,d) = \Ot(\sqrt{d}\paren{dm+d^{2.373}})$ by~\cite{LS14} in Proposition~\ref{prop:LP-oracles}, and the aforementioned bounds on $\nu,\lambda=O(d)$, we can prove 
the following theorem using Theorems~\ref{thm:streaming},~\ref{thm:coordinator}, and~\ref{thm:MPC}. 

\begin{theorem}\label{thm:LP}
	We give the following randomized  algorithms  for $d$-dimensional linear programming with $n$ constraints. For any  $r \geq 1$ and $\delta \in (0,1)$:
	\begin{itemize}[leftmargin=12pt]
		\item \emph{Streaming:} An $O(d \cdot r)$-pass algorithm with $\Ot(d^3 \cdot n^{1/r})$ space in $\Ot(n) \cdot \poly(d)$ time. 
		\item \emph{Coordinator:} An $O(d \cdot r)$-round algorithm with $\Ot(d^4 n^{1/r} + d^3 k)$ total communication in which the coordinator and each site $i \in [k]$ spend $\Ot(n^{1/r}+k) \cdot \poly(d)$ time and $\Ot(n_i) \cdot \poly(d)$ time, 
		respectively, where $n_i$ is the number of constraints on site $i$.  
		\item \emph{MPC:} An $O(d/\delta^2)$-round algorithm with $\Ot(d^3n^{\delta})$ load per machine and $\Ot(n) \cdot \poly(d)$ time in total. 
	\end{itemize}
	%	All our algorithms are randomized and output the correct answer with probability $1-1/n^{c}$ for any desired constant $c \geq 1$. 
\end{theorem}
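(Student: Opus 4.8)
The plan is to obtain Theorem~\ref{thm:LP} by instantiating the three model-specific meta-theorems, Theorems~\ref{thm:streaming}, \ref{thm:coordinator}, and~\ref{thm:MPC}, with the parameters of the LP-type problem $(\S,f)$ that encodes a $d$-dimensional linear program, together with the concrete oracle bounds of Proposition~\ref{prop:LP-oracles}. Correctness and the ``with high probability'' guarantee are inherited directly from those theorems --- the only randomness is the $\eps$-net sampling inside Algorithm~\ref{alg:meta}, while the LP-solving subroutines of Proposition~\ref{prop:LP-oracles} are deterministic --- so the entire content of the proof is a substitution-and-simplification calculation.

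First I would record the parameters. The reduction in Section~\ref{sec:LP} realizes the LP as an LP-type problem whose ground set $\S$ consists of $n$ half-spaces in $\IR^d$; by~\cite{MSW96} its combinatorial dimension satisfies $\nu \le d+1$, and the VC dimension of half-spaces in $\IR^d$ is $d+1$, so $\lambda \le d+1$ (\cite{VC15}). Each half-space is described by its $d+1$ coefficients $a^j_1,\dots,a^j_d,b^j$, each of $O(\log n)$ bits under the standing assumption, hence $\bit{\S} = O(d\log n)$. For the oracles, Proposition~\ref{prop:LP-oracles} gives $T_b(m) = O(d\cdot\TLP(m,d))$ and $T_v(t,b) = O(td + d\cdot\TLP(d,d))$ for $b=O(d)$; plugging in $\TLP(m,d) = \Ot(\sqrt d\,(dm+d^{2.373}))$ from~\cite{LS14} yields $\TLP(d,d) = \Ot(d^{2.873})$ and, evaluated at the $\eps$-net size $m = m_{\eps,\lambda,2/3} = \Ot(\lambda\nu n^{1/r}) = \Ot(d^2 n^{1/r})$, the bound $T_b(m) = \Ot(d^{4.5} n^{1/r} + d^{3.873})$.

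Next I would substitute these into the three theorems and simplify, throughout using the hypothesis $r\le\ln n$ (so that $r$, and also $r\cdot n^{1/r}$ and $r\cdot k$, are all $\Ot$-negligible, i.e. $r=\polylog(n)$) and letting $\poly(d)$ swallow low-order powers such as $d^{2.373}$. Streaming (Theorem~\ref{thm:streaming}): $O(\nu r)=O(dr)$ passes; space $\Ot(\lambda n^{1/r}\nu+\nu^2)\cdot\bit{\S} = \Ot(d^2 n^{1/r})\cdot O(d\log n) = \Ot(d^3 n^{1/r})$; running time $O(\nu r\cdot T_v(n,\nu)+\nu r\cdot T_b(\Ot(d^2 n^{1/r})))$, which (using $n^{1/r}\le n$) collapses to $\Ot(n)\cdot\poly(d)$. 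Coordinator (Theorem~\ref{thm:coordinator}): $O(dr)$ rounds; total communication $\Ot(\lambda n^{1/r}\nu^2+k\nu^2)\cdot\bit{\S} = \Ot(d^4 n^{1/r}+d^3 k)$; coordinator time $O(\nu r(T_b(\Ot(d^2 n^{1/r}))+k\nu)) = \Ot(n^{1/r}+k)\cdot\poly(d)$; and site-$i$ time $O(\nu r\cdot T_v(n_i,\nu)) = \Ot(n_i)\cdot\poly(d)$. MPC with $r=1/\delta$ (Theorem~\ref{thm:MPC}): $O(\nu/\delta^2)=O(d/\delta^2)$ rounds; and the per-machine load, which is the $\eps$-net of $\Ot(\lambda\nu n^\delta)=\Ot(d^2 n^\delta)$ elements each of $\bit{\S}=O(d\log n)$ bits plus the $\Ot(\nu^2)$ stored bases, simplifies to $\Ot(d^3 n^\delta)$; total time $\Ot(n)\cdot\poly(d)$ exactly as in the streaming case.

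I do not expect a genuine obstacle --- the argument is bookkeeping. The only points needing a little care are: (i) confirming that the lexicographic tie-breaking used to define $f$ leaves $\nu$ and $\lambda$ unchanged and is realized within the stated $T_b$ bound (this is exactly the ``solve $d$ successive LPs'' argument inside the proof of Proposition~\ref{prop:LP-oracles}); and (ii) being disciplined about which stray factors are absorbed where --- in particular checking that under $r\le\ln n$ one has $r\cdot n^{1/r}=\Ot(n^{1/r})$, $r\cdot k=\Ot(k)$, and $1\le n^{1/r}$, so that the running-time expressions really do collapse to the claimed $\Ot(n)\cdot\poly(d)$, $\Ot(n^{1/r}+k)\cdot\poly(d)$, and $\Ot(n_i)\cdot\poly(d)$ forms.
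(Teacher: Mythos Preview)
Your proposal is correct and follows exactly the paper's approach: the paper's proof is the single sentence ``Plugging in the currently best known bound for $\TLP(m,d) = \Ot(\sqrt{d}(dm+d^{2.373}))$ by~\cite{LS14} in Proposition~\ref{prop:LP-oracles}, and the aforementioned bounds on $\nu,\lambda=O(d)$, we can prove the following theorem using Theorems~\ref{thm:streaming},~\ref{thm:coordinator}, and~\ref{thm:MPC},'' and you have simply spelled out that substitution in detail. One small note: for the MPC load you recompute the $\eps$-net size directly rather than substituting into the stated bound $\Ot(\lambda n^{\delta}\nu^{2})\cdot\bit{\S}$ of Theorem~\ref{thm:MPC}; your computation yields the claimed $\Ot(d^{3}n^{\delta})$, whereas a literal substitution would give $\Ot(d^{4}n^{\delta})$, so the paper itself is slightly loose here and your version is fine.
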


\subsection{Linear Support Vector Machine}\label{sec:LSVM}

In Linear Support Vector Machine (SVM) problem~\cite{BGV92},
we have a set of tuples $\{(x_1, y_1), \ldots, (x_n, y_n)\}$ such that for each index $j \in [n]$, $x_j \in \mathbb{R}^d$ and $y_j \in \set{-1, +1}$. The goal is to compute a hyperplane $u = (u_1, \ldots, u_d)$ which is the outcome of the following quadratic optimization problem~\cite{BGV92}:
\begin{align}
\min_{u \in \IR^d}  \norm{u}^2_2  \quad \textnormal{subject to} \quad y_j \cdot \inner{u}{x_j} \geq 1 \quad \textnormal{for all $j \in [n]$}. \label{eq:LSVM}
\end{align}

From a geometrical point of view, the problem~(\ref{eq:LSVM}) corresponds to finding a hyperplane which separates the set of point $\set{x_1, \dotsc, x_n}$ according to their labels with the maximum margin value (if possible); see, e.g., \cite{BGV92} for more 
information on this fundamental problem.\footnote{Our algorithm works effectively for the hard-margin Linear SVM. In the case of the soft-margin Linear SVM, the optimization problem can also be formulated in the form of LP-type problem, but the dimension of such formulation is large -- proportional to the size of input.}  Note that the problem~(\ref{eq:LSVM}) is {\em not} a linear program. However, one can show that it is an LP-type problem $(\S,f)$ where $\S$ is a set family in $\IR^d$ in which every set
contains the points that satisfy a particular constraint, and $f(\A)$ for $\A \subseteq \S$ computes the optimal solution of (\ref{eq:LSVM}) given only the constraints to $\A$~\cite{MSW96} (unlike linear programming, the optimal solution to (\ref{eq:LSVM}) under any 
set of constraints is unique and hence we do not need the lexicographically first constraint).

The combinatorial dimension of $(\S,f)$ is $\nu \leq d+1$~\cite{MSW96}, and the VC dimension of $(\S,\IR^d)$ is $\lambda \leq d+1$~\cite{VC15}.  
In the following, let $\TSVM(m,d)$ denote the time needed to solve an instance of Linear SVM problem with $m$ constraints and $d$ variables. 
We show how to implement the basis computation and violation test for Linear SVM in the following proposition. 

\begin{proposition}\label{prop:LSVM-oracles}
	For any Linear SVM problem with $n$ constraints and dimension $d$:
	\begin{itemize}
		\item The time needed to compute a basis of $m \geq d$ given constraints is 
		$
		T_b(m) = O(\TSVM(m,d)). 
		$
		\item The time needed to compute all constraints  that violate a given basis of size $b$ among $t$ constraints is
		$
		T_{v}(t,b) =  O(t \cdot d + \TSVM(d, d)). 
		$
	\end{itemize} 
\end{proposition}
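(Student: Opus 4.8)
The plan is to mirror the structure of the proof of Proposition~\ref{prop:LP-oracles}, replacing the LP solver with an SVM solver wherever an optimization subproblem needs to be solved. The key simplification compared to the linear programming case is that, as noted in the text just above the proposition, the optimal solution to~(\ref{eq:LSVM}) is \emph{unique} under any set of constraints, so there is no need for the lexicographic tie-breaking trick that forced the factor-$d$ blow-up in $T_b$ for LP. This is exactly why the claimed bound here is $T_b(m) = O(\TSVM(m,d))$ rather than $O(d \cdot \TSVM(m,d))$.

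For the first bullet, to compute a basis of a set $\N$ of $m \geq d$ constraints, I would simply invoke the SVM solver on the instance consisting only of the constraints in $\N$; this takes $O(\TSVM(m,d))$ time and returns the unique optimal hyperplane $u^*$ together with its objective value. A basis is then recovered from the set of constraints made tight by $u^*$ (the support vectors), possibly after discarding redundant tight constraints down to a minimal defining subset of size at most $\nu \le d+1$; this bookkeeping is lower-order and does not affect the stated bound. For the second bullet, to find all constraints among a set of size $t$ that violate a given basis $\B$ of size $b = O(d)$, I would first reconstruct the optimal hyperplane $u^*$ determined by $\B$ by solving the SVM problem restricted to the $b = O(d)$ constraints in $\B$, which costs $O(\TSVM(d,d))$ time; then I would test each of the $t$ constraints against $u^*$ by checking whether $y_j \cdot \inner{u^*}{x_j} \ge 1$, marking it as violating otherwise, at a cost of $O(d)$ per constraint and hence $O(t \cdot d)$ in total. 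Summing gives $T_v(t,b) = O(t \cdot d + \TSVM(d,d))$.

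The one point requiring a small argument — and the closest thing to an obstacle — is verifying that the hyperplane defined by a basis is correctly characterized as the SVM optimum restricted to that basis, i.e.\ that the ``violation test'' using $u^*$ actually computes the set $\V = \{S : f(\B \cup \{S\}) > f(\B)\}$ as required by Algorithm~\ref{alg:meta}. This follows from Property~\ref{p:LP2}: $f(\B)$ is by definition the optimal solution under the constraints in $\B$, which is the unique minimizer $u^*$ of $\norm{u}_2^2$ subject to those constraints; adding a constraint $S$ leaves the optimum unchanged exactly when $u^*$ already satisfies $S$, and strictly increases it otherwise (by uniqueness and convexity of the feasible region). So the pointwise test $y_j \inner{u^*}{x_j} \ge 1$ is precisely the correct violation test, and correctness of the two oracles reduces to correctness of the SVM solver itself. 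I would state this in one or two sentences and then conclude. (Plugging these bounds, together with $\nu, \lambda = O(d)$, into Theorems~\ref{thm:streaming},~\ref{thm:coordinator}, and~\ref{thm:MPC} then yields the analogue of Theorem~\ref{thm:LP} for Linear SVM, in the same way as for linear programming.)
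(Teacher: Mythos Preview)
Your proposal is correct and follows essentially the same approach as the paper's proof: solve the SVM restricted to the $m$ constraints for the first part, and for the second part recover $u^*$ from the basis (the paper phrases this as ``solving a linear equation exactly as in the case in Proposition~\ref{prop:LP-oracles}'') and then test each of the $t$ constraints against it in $O(d)$ time apiece. Your additional justification via uniqueness and Property~\ref{p:LP2} is more detailed than what the paper provides, but the underlying argument is the same.
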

\begin{proof}
	To find a basis $\B$ of a set $\N$ of $m$ constraints, we simply need to solve another instance of Linear SVM, i.e., (\ref{eq:LSVM}), only on the given constraints. This can be done in $O(\TSVM(m,d))$ by definition. 
	The second part can also be solved by solving a linear equation exactly as in the case in Proposition~\ref{prop:LP-oracles}. 
\end{proof}

Plugging in the currently best known bound for $\TSVM(m,d) = O((m+d)^3)$ by quadratic programming in~\cite{YT89} in Proposition~\ref{prop:LSVM-oracles}, and the aforementioned
bounds on $\nu,\lambda=O(d)$, we can prove Theorem~\ref{thm:LSVM} using Theorems~\ref{thm:streaming},~\ref{thm:coordinator}, and~\ref{thm:MPC}.

\begin{theorem}\label{thm:LSVM}
	We give the following randomized algorithms for $d$-dimensional linear support vector machine problem with $n$ constraints. For any $r \geq 1$ and $\delta \in (0,1)$:
	\begin{itemize}[leftmargin=12pt]
		\item \emph{Streaming:} An $O(d \cdot r)$-pass algorithm with $\Ot(d^3 \cdot n^{1/r})$ space in $\Ot(n) \cdot \poly(d)$ time. 
		\item \emph{Coordinator:} An $O(d \cdot r)$-round algorithm with $\Ot(d^4 n^{1/r} + d^3 k)$ total communication in which the coordinator and each site $i \in [k]$ spend $\Ot(n^{3/r}+k) \cdot \poly(d)$ time and $\Ot(n_i) \cdot \poly(d)$ time, 
		respectively, where $n_i$ is the number of constraints on site $i$.  
		\item \emph{MPC:} An $O(d/\delta^2)$-round algorithm with $\Ot(d^3n^{\delta})$ load per machine and $\Ot(n + n^{3\delta}) \cdot \poly(d)$ time in total. 
	\end{itemize}
	%	All our algorithms are randomized and output the correct answer with probability $1-1/n^{c}$ for any desired constant $c \geq 1$. 
\end{theorem}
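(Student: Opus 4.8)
The plan is to obtain Theorem~\ref{thm:LSVM} by instantiating the three model-generic results, Theorems~\ref{thm:streaming},~\ref{thm:coordinator}, and~\ref{thm:MPC}, with the parameters of the Linear SVM LP-type problem $(\S,f)$. All the needed ingredients are already available: the combinatorial dimension is $\nu \leq d+1$ and the VC dimension of $(\S,\IR^d)$ is $\lambda \leq d+1$ (as recalled above); each constraint is a subset of $\IR^d$ described by $d+1$ numbers of $O(\log n)$ bits, so $\bit{\S} = O(d \log n)$; and Proposition~\ref{prop:LSVM-oracles} together with the bound $\TSVM(m,d) = O((m+d)^3)$ of~\cite{YT89} gives the oracle costs $T_b(m) = O((m+d)^3)$ and $T_v(t,b) = O(td + d^3)$.

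First I would carry out the streaming bound. From Theorem~\ref{thm:streaming} the pass count is $O(\nu r) = O(d \cdot r)$. The $\eps$-net sampled in each round has size $m = m_{\eps,\lambda,2/3} = \Ot(\lambda \nu n^{1/r}) = \Ot(d^2 n^{1/r})$ by the choice of $\eps$ and Lemma~\ref{lem:eps-net}, so the space bound $\Ot(\lambda n^{1/r} \nu + \nu^2)\cdot\bit{\S}$ simplifies to $\Ot(d^3 n^{1/r})$. The running time $O(\nu r\cdot T_v(n,\nu)) + O(\nu r\cdot T_b(\lambda n^{1/r}\nu))$ becomes, after substituting $T_v(n,\nu) = O(nd + d^3)$, $T_b(m) = O((m+d)^3)$, and $m = \Ot(d^2 n^{1/r})$, an $\Ot(n)\cdot\poly(d)$ term from the violation tests plus an $\Ot(n^{3/r})\cdot\poly(d)$ term from the cubic basis solves. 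The coordinator bound follows the same way from Theorem~\ref{thm:coordinator}: $O(d \cdot r)$ rounds; total communication $\Ot(\lambda n^{1/r}\nu^2 + k\nu^2)\cdot\bit{\S} = \Ot(d^4 n^{1/r} + d^3 k)$; coordinator time $O(\nu r\cdot(T_b(\lambda n^{1/r}\nu) + k\nu)) = \Ot(n^{3/r} + k)\cdot\poly(d)$; and site-$i$ time $O(\nu r\cdot T_v(n_i,\nu)) = \Ot(n_i)\cdot\poly(d)$. Finally, Theorem~\ref{thm:MPC} gives $O(\nu/\delta^2) = O(d/\delta^2)$ rounds, load $\Ot(\lambda n^{\delta}\nu^2)\cdot\bit{\S} = \Ot(d^3 n^{\delta})$ per machine, and total time $\Ot(n + n^{3\delta})\cdot\poly(d)$, the $n^{3\delta}$ again coming from the cubic solves on $\eps$-nets of size $\Ot(d^2 n^{\delta})$.

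Since everything reduces to substitution, there is no real obstacle; the only point needing care is the cubic dependence of $\TSVM$ in $m$. For linear programming $\TLP(m,d)$ is near-linear in $m$, so an $\eps$-net of size $\Ot(\poly(d)\,n^{1/r})$ is processed by the basis oracle at cost $\Ot(\poly(d)\,n^{1/r})$; for Linear SVM the same $\eps$-net costs $\Ot(\poly(d)\,n^{3/r})$, which is exactly why the coordinator and MPC time bounds carry $n^{3/r}$ and $n^{3\delta}$ factors rather than $n^{1/r}$ and $n^{\delta}$. One should also confirm that no part of the meta-algorithm of Section~\ref{sec:meta} or its three implementations exploited the \emph{linear-programming-specific} lexicographic tie-breaking; this is immediate because the Linear SVM optimum is unique, which if anything makes the basis oracle simpler, as noted around Proposition~\ref{prop:LSVM-oracles}.
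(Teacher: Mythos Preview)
Your proposal is correct and follows exactly the paper's approach: the paper's proof is nothing more than the sentence preceding the theorem, which says to plug $\TSVM(m,d)=O((m+d)^3)$ and $\nu,\lambda=O(d)$ into Proposition~\ref{prop:LSVM-oracles} and then into Theorems~\ref{thm:streaming},~\ref{thm:coordinator}, and~\ref{thm:MPC}. Your derivation is in fact more careful than the paper's, and your observation that the basis oracle contributes an $\Ot(n^{3/r})\cdot\poly(d)$ term to the streaming running time is correct (note that the analogous Theorem~\ref{thm:coreSVM} for Core SVM does include this term in its streaming bound).
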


\subsection{Core Vector Machine}\label{sec:CSVM}
Tsang at el.~\cite{TKC05} proposed  \emph{core vector machines} as a way of speeding up kernel methods in SVM training (see~\cite{BGV92}). This is achieved by reformulating the original kernel method as
an instance of the \emph{minimum enclosing ball} (MEB) problem, defined as follows: Given a set of points $P := \set{p_1, \ldots, p_n}$ in $\IR^d$, find a center $p$ and a minimum radius $r$ such that all the points in $P$ are within a $d$-dimensional sphere of radius $r$ centered at $p$. MEB can be formulated as the following optimization problem: 
\begin{align}
\min_{r \in \IR, p \in \IR^d}  r  \quad \textnormal{subject to} \quad  \norm{p - p_j}_2 \leq r \quad \textnormal{for all $j \in [n]$}. \label{eq:MEB}
\end{align} 

This problem is also an LP-type problem $(\S,f)$ formulated similarly to linear programming and Linear SVM~\cite{MSW96}.

The combinatorial dimension of $(\S,f)$ is $\nu \leq d + 1$~\cite{MSW96} and the VC dimension of $(\S,\IR^d)$ is $\lambda \leq d+1$~\cite{WD81}. 
Let $\TMEB(m,d)$ denote the time needed to solve an instance of MEB problem with $m$ constraints and $d$ variables. 
The following proposition show how to implement the basis computation and violation test for MEB (the proof is identical to Proposition~\ref{prop:LSVM-oracles} and is hence omitted). 

\begin{proposition}\label{prop:CSVM-oracles}
	For any Linear SVM problem with $n$ constraints and dimension $d$:
	\begin{itemize}
		\item The time needed to compute a basis of $m \geq d$ given constraints is 
		$
		T_b(m) = O(\TMEB(m,d)). 
		$
		\item The time needed to compute all constraints  that violate a given basis of size $b$ among $t$ constraints is
		$
		T_{v}(t,b) =  O(t \cdot d + \TMEB(d, d)). 
		$
	\end{itemize} 
\end{proposition}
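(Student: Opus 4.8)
The plan is to follow the template of Proposition~\ref{prop:LSVM-oracles} essentially verbatim, exploiting the fact that MEB, like Linear SVM, is an LP-type problem of the special form in Properties~\ref{p:LP1}--\ref{p:LP2}: each constraint $S_{p_j}$ is the region $\set{(r,p) \in \IR \times \IR^d : \norm{p - p_j}_2 \le r}$, and for $\A \subseteq \S$ the value $f(\A)$ is the (unique) optimal pair $(r^*,p^*)$ of the convex program~(\ref{eq:MEB}) restricted to the constraints in $\A$. The key observation, already used inside the proof of Claim~\ref{clm:is-successful}, is that once $f(\A)$ is stored explicitly as the point $x^* = (r^*,p^*)$, every violation query collapses to a membership test: a constraint $S_{p_j}$ violates a basis $\B$ with $f(\B) = x^*$ if and only if $x^* \notin S_{p_j}$, i.e.\ $\norm{p^* - p_j}_2 > r^*$, which costs only $O(d)$ arithmetic. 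Since the optimum of MEB is unique (both center and radius), there is no need for the lexicographic tie-breaking that appeared in Proposition~\ref{prop:LP-oracles}.

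For the basis-computation bound, first I would, given $\N \subseteq \S$ with $\card{\N} = m$, solve the MEB instance~(\ref{eq:MEB}) on exactly the $m$ points of $\N$; by definition this takes $O(\TMEB(m,d))$ time and returns $x^* = f(\N) = (r^*,p^*)$. Storing this point already suffices to drive Lines~\ref{ln:a-2}--\ref{ln:a-3} of Algorithm~\ref{alg:meta}; if an explicit combinatorial basis is desired one can additionally discard every point of $\N$ not lying on the boundary sphere $\norm{p^* - \cdot}_2 = r^*$ and greedily prune the remaining (at most $d+1$) points to a minimal defining set, extra bookkeeping that is $O(md)$ and hence absorbed into $O(\TMEB(m,d))$. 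Thus $T_b(m) = O(\TMEB(m,d))$.

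For the violation-test bound, given a basis $\B$ of size $b = O(d)$, first recompute $x^* = f(\B) = (r^*,p^*)$ by solving~(\ref{eq:MEB}) on just the $b$ constraints of $\B$; since $b = O(d)$ and $\TMEB$ is monotone in (and polynomial in) its first argument, this costs $O(\TMEB(O(d),d)) = O(\TMEB(d,d))$. Then I would scan the $t$ candidate constraints, flagging $S_{p_j}$ as a member of $\V$ exactly when $\norm{p^* - p_j}_2 > r^*$; each such test is $O(d)$, for $O(td)$ in total. Adding the two costs gives $T_v(t,b) = O(td + \TMEB(d,d))$.

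The only point that warrants a (short) argument rather than a direct copy is the reduction of the violation query to the Euclidean membership check $\norm{p^* - p_j}_2 > r^*$: this is exactly the geometric specialization of Property~\ref{p:LP2} together with the monotonicity of $f$, as already spelled out in the proof of Claim~\ref{clm:is-successful}, so no genuinely new ingredient is required. The only mild nuisance is absorbing $\TMEB(O(d),d)$ into $\TMEB(d,d)$, which is harmless for the quadratic-programming-based MEB solver plugged in afterwards. Apart from these, the proof is identical to that of Proposition~\ref{prop:LSVM-oracles}, which is why it is omitted in the main text.
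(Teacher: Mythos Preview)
Your proposal is correct and follows essentially the same approach as the paper: solve the MEB instance on the given constraints to obtain the optimum $x^*=(r^*,p^*)$, and for the violation test re-solve on the $O(d)$ basis constraints and then scan the $t$ candidates with an $O(d)$ membership check each. The paper omits the proof precisely because it is identical to that of Proposition~\ref{prop:LSVM-oracles}, and your write-up faithfully instantiates that template for MEB.
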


As MEB can be cast as a convex quadratic program, we have $\TMEB(m,d) = O((m + d)^3)$ by~\cite{YT89} as before. Hence, Theorems~\ref{thm:streaming},~\ref{thm:coordinator}, and~\ref{thm:MPC} imply the following 
result. 

\begin{theorem}\label{thm:coreSVM}
	We give the following randomized  algorithms for $d$-dimensional core vector machine problem with $n$ constraints. For any integer $r \geq 1$:
	\begin{itemize}[leftmargin=12pt]
		\item \emph{Streaming:} An $O(d \cdot r)$-pass algorithm with $\Ot(d^3 \cdot n^{1/r})$ space in $\Ot(n + n^{3/r}) \cdot \poly(d)$ time. 
		\item \emph{Coordinator:} An $O(d \cdot r)$-round algorithm with $\Ot(d^4 n^{1/r} + d^3 k)$ total communication in which the coordinator and each site $i \in [k]$ spend $\Ot(n^{3/r}+k) \cdot \poly(d)$ time and $\Ot(n_i) \cdot \poly(d)$ time, 
		respectively, where $n_i$ is the number of constraints on site $i$.  
		\item \emph{MPC:} An $O(d/\delta^2)$-round algorithm with $\Ot(d^3n^{\delta})$ load per machine and $\Ot(n+n^{3\delta}) \cdot \poly(d)$ time in total. 
	\end{itemize}
	%	All our algorithms are randomized and output the correct answer with probability $1-1/n^{c}$ for any desired constant $c \geq 1$. 
\end{theorem}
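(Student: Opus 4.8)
The plan is to derive Theorem~\ref{thm:coreSVM} as a direct instantiation of the three model-specific meta-theorems (Theorems~\ref{thm:streaming},~\ref{thm:coordinator}, and~\ref{thm:MPC}) for the LP-type problem given by the minimum enclosing ball / core vector machine formulation~(\ref{eq:MEB}). The first step is to collect the four problem-specific parameters those theorems take as input: the combinatorial dimension $\nu$, the VC dimension $\lambda$, the per-element bit-complexity $\bit{\S}$, and the running times $T_b(\cdot)$ and $T_v(\cdot)$ of the basis and violation oracles. From Section~\ref{sec:CSVM} we already have $\nu \le d+1$ and $\lambda \le d+1$; since every input number uses $O(\log n)$ bits and each constraint of~(\ref{eq:MEB}) is described by a point in $\IR^d$ together with one scalar, we have $\bit{\S} = O(d\log n)$. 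For the oracles, Proposition~\ref{prop:CSVM-oracles} gives $T_b(m) = O(\TMEB(m,d))$ and $T_v(t,b) = O(td + \TMEB(d,d))$, and since MEB is a convex quadratic program the bound $\TMEB(m,d) = O((m+d)^3)$ of~\cite{YT89} applies, so $T_b(m) = O((m+d)^3)$ and $T_v(t,b) = O(td + d^3)$.

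Next I would substitute these quantities into each meta-theorem. For streaming, Theorem~\ref{thm:streaming} yields $O(\nu r) = O(dr)$ passes and space $\Ot(\lambda n^{1/r}\nu + \nu^2)\cdot\bit{\S} = \Ot(d^3 n^{1/r})$ after folding $\polylog$ factors into $\Ot(\cdot)$ and using $\nu,\lambda = O(d)$, $\bit{\S}=O(d\log n)$. The running time is $O(\nu r\cdot T_v(n,\nu) + \nu r\cdot T_b(\lambda n^{1/r}\nu))$; since the $\eps$-net has size $\lambda n^{1/r}\nu = \Ot(d^2 n^{1/r})$, the basis-oracle term is $\Ot(dr)\cdot O((d^2 n^{1/r})^3) = \Ot(n^{3/r})\cdot\poly(d)$ and the violation-oracle term is $\Ot(dr)\cdot O(nd+d^3) = \Ot(n)\cdot\poly(d)$, for a total of $\Ot(n+n^{3/r})\cdot\poly(d)$. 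The coordinator bounds follow identically from Theorem~\ref{thm:coordinator}: $O(dr)$ rounds, $\Ot(\lambda n^{1/r}\nu^2 + k\nu^2)\cdot\bit{\S} = \Ot(d^4 n^{1/r} + d^3 k)$ total communication, coordinator time $O(\nu r(T_b(\Ot(d^2 n^{1/r})) + k\nu)) = \Ot(n^{3/r}+k)\cdot\poly(d)$, and site-$i$ time $O(\nu r\cdot T_v(n_i,\nu)) = \Ot(n_i)\cdot\poly(d)$.

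For the MPC case I would invoke Theorem~\ref{thm:MPC} to obtain $O(\nu/\delta^2) = O(d/\delta^2)$ rounds and load $\Ot(\lambda n^{\delta}\nu^2)\cdot\bit{\S} = \Ot(d^3 n^{\delta})$ per machine (absorbing the $\poly(d,\log n)$ factors). The only place needing a little care is the \emph{total} running time: with load $n^\delta$ one runs on $k = n^{1-\delta}$ machines, so in each of the $O(d/\delta^2)$ rounds the machines jointly perform the violation test over all $n$ constraints in $O(nd)$ time plus $O(n^{1-\delta}d^3)$ per-machine overhead, while the designated coordinator computes a basis of an $\eps$-net of size $\Ot(d^2 n^\delta)$ in $O((d^2 n^\delta)^3) = \Ot(n^{3\delta})\cdot\poly(d)$ time, and the $\eps$-net sampling/sorting via~\cite{GSZ11} is lower order; summing over rounds and using $n^{1-\delta}\le n$ gives total time $\Ot(n + n^{3\delta})\cdot\poly(d)$.

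There is no real obstacle here: the statement is a corollary obtained by plugging numbers into Theorems~\ref{thm:streaming}--\ref{thm:MPC}. The only points to be careful about are the accounting — getting $\bit{\S}=O(d\log n)$ right (this is where the various $d$-factor differences across the three models originate), propagating the cubic $\TMEB$ bound through $T_b$ applied to an $\eps$-net of size $\Ot(d^2 n^{1/r})$ (which is exactly what produces the $n^{3/r}$ and $n^{3\delta}$ terms in the running times), and, for MPC, remembering to multiply the per-machine work by $k = n^{1-\delta}$ when reporting the total time.
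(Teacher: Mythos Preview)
Your proposal is correct and follows exactly the paper's approach: the paper simply states that the theorem follows from Theorems~\ref{thm:streaming},~\ref{thm:coordinator}, and~\ref{thm:MPC} after plugging in $\nu,\lambda = O(d)$, $\TMEB(m,d) = O((m+d)^3)$, and the oracle bounds of Proposition~\ref{prop:CSVM-oracles}. You have in fact supplied more bookkeeping detail than the paper itself (which gives no explicit derivation), including the MPC total-time accounting that Theorem~\ref{thm:MPC} does not state.
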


\newcommand{\TCI}{\ensuremath{\textnormal{\textsf{TCI}}}}

\section{Lower Bounds}\label{sec:lower}

In this section we prove information-theoretic lower bounds for linear programming that hold against \emph{any} algorithm. We obtain our lower bounds by establishing the \emph{communication complexity} for $2$-dimensional linear programming, and then translating it to lower bounds in the big data models. 
In the following, we first give some background on communication complexity and then present an intermediate 
problem, called two-curve intersection problem (\TCI), that we consider en route to proving our result for linear programming. We then prove a lower bound for $\TCI$ and present its implications for linear programming
in the streaming and coordinator models.

% and prove Result~\ref{res:lower}. %Finally, Section~\ref{sec:cc-further}, contains additional lower bounds for higher dimensional linear programs. 

\subsection{Background}\label{sec:cc-background}

\paragraph{Communication Complexity.} 
We focus on the standard two-party communication complexity model of Yao~\cite{Yao79}. In this model, Alice and Bob receive an input $X \in \mathcal{X}$ and $Y \in \mathcal{Y}$, respectively. 
In an $r$-round protocol, Alice and Bob can communicate 
up to $r$ messages with each other. In particular, for an even $r$, Bob first sends a message to Alice, followed by a message from Alice to Bob, and so on, until Bob receives the last message and outputs the answer. 
For an odd $r$, the only difference is that Alice starts first and then the players continue like before until Bob outputs the answer. 

The communication complexity of a problem $P : \mathcal{X} \times \mathcal{Y} \rightarrow \mathcal{Z}$, denoted by $\CC{P}$, is the minimum worst-case communication cost of any protocol (possibly randomized) that can solve $P$ with probability at least $2/3$. The $r$-round communication complexity of $P$, denoted
by $\CCr{P}{r}$, is similarly defined with respect to protocols that are allowed at most $r$ rounds of communication. 

\medskip

\emph{Augmented Indexing.} In the Augmented Indexing Problem, denoted by $\AugIndex_n$, Alice is given a binary string $x \in \set{0,1}^{n}$, and Bob is given an index $i \in \set{0,1}$ plus the first $i-1$ bits of the string $x$, i.e., $x_1,\ldots,x_{i-1}$. 
The goal is for Bob to output the bit $x_i$. It is well-known that $1$-round communication complexity of this problem is $\CCr{\AugIndex_n}{1} = \Omega(n)$ (see, e.g.~\cite{MiltersenNSW98}). 

\paragraph{Information Theory.}
Throughout this section,  we use bold-face fonts, say $\rA$, to denote random variables, and normal font, say $A$, to denote their realizations. 
For a random variable $\rA$, $\supp{\rA}$ denotes its support and 
$\distribution{\rA}$ its distribution. We sometimes abuse the notation and use $\rA$ and  $\distribution{\rA}$ interchangeably. Furthermore, for a $t$-tuple $(X_1,\ldots,X_t)$ and any integer $i \in [t]$, 
we define $X^{<i} := (X_1,\ldots,X_{i-1})$ and $X^{>i} := (X_{i+1},\ldots,X_{t})$. 

Our proof relies on basic concepts from information theory, which we review briefly here. For a broader introduction, we refer the interested reader to the excellent text by Cover and Thomas~\cite{ITbook}. 

\paragraph{Entropy and Mutual Information.} The Shannon entropy of $\rA$ is defined as 
\[
\en{\rA} := \sum_{A \in \supp{\rA}} \Pr\paren{\rA=A} \cdot \log{\paren{1/\Pr\paren{\rA=A}}}.
\]
The conditional entropy of $\rA$ on random variable $\rB$ is defined as $\en{\rA \mid \rB} := \Ex_{B \sim \rB}\bracket{\en{\rA \mid \rB=B}}$. The (conditional) mutual information between $\rA$ and $\rB$
is $\mi{\rA}{\rB \mid \rC} := \en{\rA \mid \rC} - \en{\rA \mid \rB,\rC}$. We shall use the following basic properties of entropy and mutual information throughout. 

\begin{fact}[cf.~\cite{ITbook}; Chapter~2]\label{fact:it-facts}
	Let $\rA$, $\rB$, $\rC$, and $\rD$ be four (possibly correlated) random variables.
	\begin{enumerate}
		\item \label{part:uniform} $0 \leq \en{\rA} \leq \log{\card{\supp{\rA}}}$. The right equality holds
		iff $\distribution{\rA}$ is uniform.
		\item \label{part:info-zero} $\mi{\rA}{\rB} \geq 0$. The equality holds iff $\rA$ and
		$\rB$ are \emph{independent}.
		\item \label{part:cond-reduce} \emph{Conditioning on a random variable can only reduce the entropy}:
		$\en{\rA \mid \rB,\rC} \leq \en{\rA \mid  \rB}$.  
		The equality holds iff $\rA \perp \rC \mid \rB$.
		\item \label{part:chain-rule} \emph{Chain rule for mutual information}: $\mi{\rA,\rB}{\rC \mid \rD} = \mi{\rA}{\rC \mid \rD} + \mi{\rB}{\rC \mid  \rA,\rD}$.
	\end{enumerate}
\end{fact}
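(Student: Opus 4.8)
The plan is to reduce all four parts to two elementary convexity facts, which I would establish first: Jensen's inequality for the strictly concave function $\log$, and the nonnegativity of Kullback--Leibler divergence (equivalently, the log-sum inequality), the latter being itself a one-line consequence of Jensen.

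For part~(\ref{part:uniform}), the lower bound is immediate since every summand $\Pr(\rA=A)\log(1/\Pr(\rA=A))$ is nonnegative, because $\Pr(\rA=A)\le 1$ forces $\log(1/\Pr(\rA=A))\ge 0$. For the upper bound I would write $\en{\rA}=\Ex_{A\sim\rA}[\log(1/\Pr(\rA=A))]$ and apply Jensen to move the expectation inside $\log$, obtaining $\en{\rA}\le\log\big(\sum_{A\in\supp{\rA}}\Pr(\rA=A)\cdot\tfrac{1}{\Pr(\rA=A)}\big)=\log\card{\supp{\rA}}$, with equality exactly when $1/\Pr(\rA=A)$ is constant on the support, i.e.\ $\distribution{\rA}$ is uniform. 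For part~(\ref{part:info-zero}), I would observe that $\mi{\rA}{\rB}=\kl{\distribution{(\rA,\rB)}}{\distribution{\rA}\times\distribution{\rB}}$ and invoke nonnegativity of KL divergence; the equality case of Jensen then yields $\Pr(\rA=A,\rB=B)=\Pr(\rA=A)\cdot\Pr(\rB=B)$ for all $A,B$, which is exactly independence.

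Part~(\ref{part:cond-reduce}) follows from part~(\ref{part:info-zero}) in conditional form: by definition $\en{\rA\mid\rB}-\en{\rA\mid\rB,\rC}=\mi{\rA}{\rC\mid\rB}=\Ex_{B\sim\rB}[\mi{\rA}{\rC\mid\rB=B}]$, and each term inside the expectation is an ordinary mutual information, hence $\ge 0$; the average is therefore $\ge 0$, which rearranges to the claim, with equality iff $\rA$ and $\rC$ are independent given each value of $\rB$, i.e.\ $\rA\perp\rC\mid\rB$. Part~(\ref{part:chain-rule}) is purely algebraic: I would first record the chain rule for entropy $\en{\rA,\rB\mid\rD}=\en{\rA\mid\rD}+\en{\rB\mid\rA,\rD}$ (by taking expectations of $-\log\Pr(\rA,\rB\mid\rD)=-\log\Pr(\rA\mid\rD)-\log\Pr(\rB\mid\rA,\rD)$), then expand $\mi{\rA,\rB}{\rC\mid\rD}=\en{\rA,\rB\mid\rD}-\en{\rA,\rB\mid\rC,\rD}$ using this identity twice and regroup the four resulting entropy terms into $\mi{\rA}{\rC\mid\rD}+\mi{\rB}{\rC\mid\rA,\rD}$.

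None of these steps is genuinely difficult; the only part demanding care is the bookkeeping of equality conditions, in particular checking that equality in Jensen's inequality for the strictly concave $\log$ forces the relevant quantity to be almost surely constant, and then correctly translating that statement back into ``uniform'' for part~(\ref{part:uniform}) and into ``(conditional) independence'' for parts~(\ref{part:info-zero}) and~(\ref{part:cond-reduce}).
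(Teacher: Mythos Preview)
Your proposal is correct and is the standard textbook derivation of these facts. The paper does not actually prove this statement: it is recorded as a \emph{Fact} with a reference to Chapter~2 of~\cite{ITbook} and used as a black box, so there is no ``paper's own proof'' to compare against. Your route---Jensen for part~(\ref{part:uniform}), the identification $\mi{\rA}{\rB}=\kl{\distribution{(\rA,\rB)}}{\distribution{\rA}\times\distribution{\rB}}$ for part~(\ref{part:info-zero}), its conditional version for part~(\ref{part:cond-reduce}), and the entropy chain rule for part~(\ref{part:chain-rule})---is exactly how these are established in that reference.
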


\paragraph{Measures of Distance Between Distributions.} 
For two distributions $\mu$ and $\nu$, the \emph{Kullback-Leibler divergence} between $\mu$ and $\nu$ is denoted by $\kl{\mu}{\nu}$ and defined as: 
\begin{align}
\kl{\mu}{\nu}:= \Ex_{a \sim \mu}\Bracket{\log\frac{\Pr_\mu(a)}{\Pr_{\nu}(a)}}. \label{eq:kl}
\end{align}
We have the following relation between mutual information and KL-divergence. 
\begin{fact}\label{fact:kl-info}
	For random variables $\rA,\rB,\rC$, 
	\[\mi{\rA}{\rB \mid \rC} = \Ex_{(b,c) \sim {(\rB,\rC)}}\Bracket{ \kl{\distribution{\rA \mid \rC=c}}{\distribution{\rA \mid \rB=b,\rC=c}}}.\] 
\end{fact}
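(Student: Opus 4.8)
The plan is to prove the identity by unwinding the definitions of conditional entropy and conditional mutual information recalled above and then recognizing the outcome as an expected KL divergence. I would begin from $\mi{\rA}{\rB \mid \rC} = \en{\rA \mid \rC} - \en{\rA \mid \rB,\rC}$ and rewrite \emph{both} conditional entropies as expectations against the \emph{same} reference law, namely the joint distribution of $(\rA,\rB,\rC)$. For the first term this uses that $\log\Pr(\rA = a \mid \rC = c)$ depends only on $(a,c)$, so that $\en{\rA \mid \rC} = -\Ex_{(a,b,c)\sim(\rA,\rB,\rC)}\bracket{\log\Pr(\rA = a \mid \rC = c)}$; for the second term it is immediate that $\en{\rA \mid \rB,\rC} = -\Ex_{(a,b,c)\sim(\rA,\rB,\rC)}\bracket{\log\Pr(\rA = a \mid \rB = b, \rC = c)}$. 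The point of expanding both against the joint law, rather than subtracting the two conditional entropies pointwise in $(b,c)$, is precisely that the subtraction then leaves a single expectation.

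Carrying out the subtraction, the conditional mutual information collapses to $\mi{\rA}{\rB \mid \rC} = \Ex_{(a,b,c)\sim(\rA,\rB,\rC)}\bracket{\log\frac{\Pr(\rA = a \mid \rB = b, \rC = c)}{\Pr(\rA = a \mid \rC = c)}}$. I would then refactor this expectation by the law of total expectation, sampling $(b,c)\sim(\rB,\rC)$ first and $a \sim \distribution{\rA \mid \rB = b, \rC = c}$ second; for each fixed $(b,c)$ the inner expectation equals $\sum_{a \in \supp{\rA}}\Pr(\rA = a \mid \rB = b, \rC = c)\log\frac{\Pr(\rA = a \mid \rB = b, \rC = c)}{\Pr(\rA = a \mid \rC = c)}$, which by the definition~\eqref{eq:kl} of KL divergence is $\kl{\distribution{\rA \mid \rB = b, \rC = c}}{\distribution{\rA \mid \rC = c}}$. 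Taking the remaining expectation over $(b,c)$ gives the identity. A slightly different route to the same place is to use $\mi{\rA}{\rB \mid \rC} = \Ex_{c\sim\rC}\bracket{\mi{\rA}{\rB \mid \rC = c}}$, expand $\mi{\rA}{\rB \mid \rC = c}$ as the KL divergence between the joint law of $(\rA,\rB)$ conditioned on $\rC = c$ and the product of its conditional marginals, substitute $\frac{\Pr(a,b\mid c)}{\Pr(a\mid c)\Pr(b\mid c)} = \frac{\Pr(a\mid b,c)}{\Pr(a\mid c)}$, and then sum out $b$.

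The whole argument is a rearrangement of finite sums together with the definitions of $\en{\cdot}$, $\mi{\cdot}{\cdot}$, and $\kl{\cdot}{\cdot}$, so I do not expect a conceptual obstacle. The only point that genuinely requires attention is the \emph{direction} of the (asymmetric) KL divergence: the inner expectation must be taken over the conditional law obtained after conditioning on \emph{both} $\rB$ and $\rC$, so that it is $\distribution{\rA \mid \rB = b, \rC = c}$ that occupies the first argument of $\kl{\cdot}{\cdot}$ and $\distribution{\rA \mid \rC = c}$ the second. Keeping the two conditional laws in this order is the one place a careless expansion goes astray; for that reason I would double-check the order of the two arguments in the displayed relation before invoking it elsewhere in the proof.
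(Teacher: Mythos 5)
Your derivation is correct, and there is nothing in the paper to compare it against: Fact~\ref{fact:kl-info} is stated without proof, as quoted background. Expanding both conditional entropies against the joint law of $(\rA,\rB,\rC)$, subtracting, and regrouping via the tower rule is exactly the standard argument, and your alternative route through $\mi{\rA}{\rB\mid\rC}=\Ex_{c\sim\rC}\bracket{\mi{\rA}{\rB\mid\rC=c}}$ followed by summing out $b$ is equally valid.

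Your closing caution about the direction of the divergence is the one substantive point, and you have it right. With the paper's own definition (\ref{eq:kl}), $\kl{\mu}{\nu}=\Ex_{a\sim\mu}\Bracket{\log \Pr_\mu(a)/\Pr_\nu(a)}$, the identity your computation establishes is
\[
\mi{\rA}{\rB \mid \rC} \;=\; \Ex_{(b,c) \sim (\rB,\rC)}\Bracket{ \kl{\distribution{\rA \mid \rB=b,\rC=c}}{\distribution{\rA \mid \rC=c}}},
\]
with the more-conditioned law in the first slot, whereas Fact~\ref{fact:kl-info} as printed has the two arguments in the opposite order. As printed, the statement is false in general: take $\rC$ trivial and $\rA=\rB$ a uniform bit; then $\mi{\rA}{\rB}=1$, while $\kl{\distribution{\rA}}{\distribution{\rA\mid\rB=b}}=\infty$ for each $b$, because the unconditioned law charges an outcome to which the conditioned law assigns probability zero. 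So the printed order is a typo, and your proof (of the corrected statement) is the right one to record. The paper's downstream use in Claim~\ref{clm:dist-close-e} is unaffected by the swap, since after Pinsker's inequality (Fact~\ref{fact:pinskers}) only the total variation distance between the two conditional laws survives and that quantity is symmetric; but the intermediate KL expressions there inherit the same reversal and should be read with the arguments in the order you derived.
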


We denote the total variation distance between two distributions $\mu$ and $\nu$ on the same 
support $\Omega$ by $\tvd{\mu}{\nu}$, defined as: 
\begin{align}
\tvd{\mu}{\nu}:= \max_{\Omega' \subseteq \Omega} \paren{\mu(\Omega')-\nu(\Omega')} = \frac{1}{2} \cdot \sum_{x \in \Omega} \card{\mu(x) - \nu(x)}.  \label{eq:tvd}
\end{align}
\noindent
We use the following basic properties of total variation distance. 
\begin{fact}\label{fact:tvd-small}
	Suppose $\mu$ and $\nu$ are two distributions for $\event$, then, 
	$
	\Pr_{\mu}(\event) \leq \Pr_{\nu}(\event) + \tvd{\mu}{\nu}.
	$
\end{fact}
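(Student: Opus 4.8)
The plan is to read this off directly from the definition of total variation distance given in~\eqref{eq:tvd}, namely $\tvd{\mu}{\nu} = \max_{\Omega' \subseteq \Omega}\paren{\mu(\Omega') - \nu(\Omega')}$, where $\Omega$ is the common support of $\mu$ and $\nu$. First I would observe that the event $\event$ corresponds to some subset $\Omega' \subseteq \Omega$, so that $\Pr_\mu(\event) = \mu(\Omega')$ and $\Pr_\nu(\event) = \nu(\Omega')$.

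Next, since $\Omega'$ is one particular choice of subset in the maximization defining $\tvd{\mu}{\nu}$, I would conclude
\[
\Pr_{\mu}(\event) - \Pr_{\nu}(\event) = \mu(\Omega') - \nu(\Omega') \leq \max_{\Omega'' \subseteq \Omega}\paren{\mu(\Omega'') - \nu(\Omega'')} = \tvd{\mu}{\nu}.
\]
Rearranging this inequality yields $\Pr_{\mu}(\event) \leq \Pr_{\nu}(\event) + \tvd{\mu}{\nu}$, as claimed.

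There is essentially no obstacle here; the only point worth a sentence is that the first equality in~\eqref{eq:tvd} (the ``$\max$'' form) is exactly what is needed, and it is a standard equivalent characterization of total variation distance which the paper has already stated, so no additional work is required.
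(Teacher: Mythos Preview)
Your proof is correct. The paper states this as a basic fact without proof, and your argument is exactly the standard one: the event $\event$ is a particular subset in the maximization defining $\tvd{\mu}{\nu}$ in~\eqref{eq:tvd}, so the inequality is immediate.
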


The following Pinskers' inequality bounds the total variation distance between two distributions based on their KL-divergence, 

\begin{fact}[Pinsker's inequality]\label{fact:pinskers}
	For any distributions $\mu$ and $\nu$, 
	$
	\tvd{\mu}{\nu} \leq \sqrt{\frac{1}{2} \cdot \kl{\mu}{\nu}}.
	$ 
\end{fact}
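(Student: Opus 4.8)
The plan is to reduce the statement to the two-point (Bernoulli) case and then dispatch that case with a one-variable calculus argument. First I would fix an event $\Omega' \subseteq \Omega$ attaining the maximum in the definition~\eqref{eq:tvd} of total variation distance, and set $p := \mu(\Omega')$ and $q := \nu(\Omega')$, so that $\tvd{\mu}{\nu} = |p-q|$. The key structural step is a data-processing inequality for KL-divergence: merging the outcomes into the two classes $\Omega'$ and $\Omega \setminus \Omega'$ cannot increase the divergence, i.e.
\[
\kl{\mu}{\nu} \;\geq\; p \log\frac{p}{q} + (1-p)\log\frac{1-p}{1-q},
\]
where the right-hand side is the divergence between $\mathrm{Bern}(p)$ and $\mathrm{Bern}(q)$ and $\log$ is the natural logarithm (the convention under which the constant $\tfrac12$ is correct). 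This inequality is the log-sum inequality $\sum_i a_i \log(a_i/b_i) \geq \big(\sum_i a_i\big)\log\!\big(\sum_i a_i / \sum_i b_i\big)$ applied separately to the outcomes inside and outside $\Omega'$; the log-sum inequality itself follows from convexity of $t \mapsto t\log t$ via Jensen's inequality (\itfacts{uniform}-style convexity facts), which I would invoke directly rather than rederive.

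It then remains to prove the scalar inequality $h(p,q) := p\log\tfrac{p}{q} + (1-p)\log\tfrac{1-p}{1-q} \geq 2(p-q)^2$ for all $p,q \in [0,1]$. Fixing $p$ and writing $g_p(q)$ for the difference of the two sides, one checks $g_p(p)=0$ and computes $g_p'(q) = (q-p)\big(\tfrac{1}{q(1-q)} - 4\big)$. Since $q(1-q) \leq \tfrac14$ on $[0,1]$, the second factor is nonnegative, so $g_p'(q)$ has the same sign as $q-p$; hence $g_p$ decreases on $[0,p]$ and increases on $[p,1]$, and its minimum over $q$ is $g_p(p)=0$. Combining with the display gives $\kl{\mu}{\nu} \geq h(p,q) \geq 2(p-q)^2 = 2\,\tvd{\mu}{\nu}^2$, which rearranges to the claimed bound. (If $q \in \{0,1\}$ with $p \neq q$ the left side is $+\infty$, so the inequality is trivial there.)

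The only genuinely nontrivial ingredient is the data-processing / log-sum reduction to the binary case; everything afterwards is elementary single-variable calculus, so I expect no real obstacle there. An alternative route would avoid the reduction and instead sum a pointwise inequality between $\tfrac{(\mu(x)-\nu(x))^2}{\mu(x)+\nu(x)}$ and the contribution $\mu(x)\log\tfrac{\mu(x)}{\nu(x)}$ before applying Cauchy--Schwarz, but the binary-reduction argument above is cleaner and is the one I would present.
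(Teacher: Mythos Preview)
Your proof is correct and follows the standard route to Pinsker's inequality (reduce to the binary case via the log-sum/data-processing inequality, then handle the Bernoulli case by a one-variable calculus argument). The derivative computation $g_p'(q) = (q-p)\bigl(\tfrac{1}{q(1-q)} - 4\bigr)$ and the ensuing monotonicity argument are right.

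There is nothing to compare against: the paper does not prove this statement. It is recorded as a \emph{Fact} (a well-known result from information theory, with the reader pointed to Cover and Thomas~\cite{ITbook} for background) and is simply invoked later in the proof of Claim~\ref{clm:dist-close-e}. So your proposal goes strictly beyond what the paper does. One minor remark: your parenthetical appeal to ``\itfacts{uniform}-style convexity facts'' is a bit of a stretch, since that item concerns entropy bounds rather than the convexity of $t \mapsto t\log t$; it would be cleaner to cite Jensen's inequality directly for the log-sum step.
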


\renewcommand{\AA}{\ensuremath{{A}}}
\newcommand{\BB}{\ensuremath{{B}}}

\newcommand{\seq}[1]{\ensuremath{\langle #1 \rangle}}

\subsection{The Two-Curve Intersection Problem (\TCI)}\label{sec:tci-problem}
We consider the following problem, whose lower bound implies a lower bound for linear programming in the two-dimensional Euclidean space (as we show shortly). 

Alice and Bob are given sequences of $n$ numbers $\AA := \seq{a_1,\ldots,a_n}$ and $\BB := \seq{b_1,\ldots,b_n}$ in $\IQ^n$, respectively, such that: % The sequences satisfy the following properties: 
\begin{enumerate}
	\item \emph{Monotonicity:} $\AA$ is monotonically increasing and $\BB$ is monotonically decreasing. 
	\item \emph{Convexity:} For any $i \in [n]$, in $\AA$ we have $a_i - a_{i-1} \leq a_{i+1} - a_i$ and conversely in $\BB$ we have $b_i - b_{i-1} \geq b_{i+1} - b_{i}$. 
\end{enumerate}

\noindent
The goal is to find the \emph{smallest index} $i^* \in [n]$ such that $a_{i^*} \leq b_{i^*}$ but $a_{i^*+1} > b_{i^*+1}$, under the promise that such an index always exists. We can interpret the sequence $\AA$ as a {two-dimensional curve} in $\IR^2$
that goes through the points $(1,a_1),(2,a_2),\cdots(n,a_n)$ (similarly for $\BB$).  
We refer to this problem as the \emph{two-curve intersection} problem and denote it by $\TCI_n$ for sequences of length $n$ (or $\TCI$ in general). See Figure~\ref{fig:TCI-LP-a} for an illustration of this problem. 

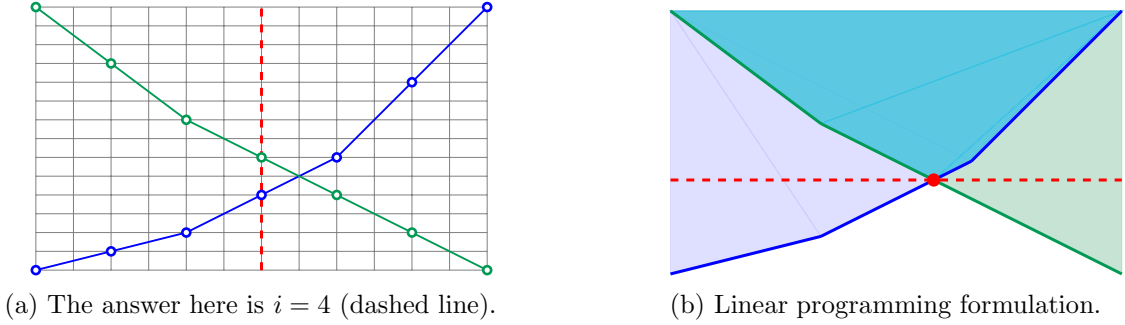
\begin{figure}[h]
    \centering
    \subcaptionbox{The answer here is $i = 4$ (dashed line). \label{fig:TCI-LP-a}}[0.45\textwidth]{

\begin{tikzpicture}[ auto ,node distance =1cm and 2cm , on grid , semithick , state/.style ={ circle ,top color =white , bottom color = white , draw, black , text=black}, every node/.style={inner sep=0,outer sep=0}]

\draw[help lines,xstep=0.5,ystep=.25] (0,0) grid (6,3.5);

\draw[dashed,red, line width=0.4mm] (3,0) -- (3,3.5);

\node[state, circle, blue, line width=0.35mm, minimum height=3pt, minimum width=3pt] (a1) at (0,0){};
\node[state, circle, blue, line width=0.35mm, minimum height=3pt, minimum width=3pt] (a2) at (1,0.25){};
\node[state, circle, blue, line width=0.35mm, minimum height=3pt, minimum width=3pt] (a3) at (2,0.5){};
\node[state, circle, blue, line width=0.35mm, minimum height=3pt, minimum width=3pt] (a4) at (3,1){};
\node[state, circle, blue, line width=0.35mm, minimum height=3pt, minimum width=3pt] (a5) at (4,1.5){};
\node[state, circle, blue, line width=0.35mm, minimum height=3pt, minimum width=3pt] (a6) at (5,2.5){};
\node[state, circle, blue, line width=0.35mm, minimum height=3pt, minimum width=3pt] (a7) at (6,3.5){};

\draw[blue, line width=0.25mm] 
(a1) -- (a2)
(a2) -- (a3)
(a3) -- (a4)
(a4) -- (a5)
(a5) -- (a6)
(a6) -- (a7);

\node[state, circle, ForestGreen, line width=0.35mm, minimum height=3pt, minimum width=3pt] (b1) at (6,0){};
\node[state, circle, ForestGreen, line width=0.35mm, minimum height=3pt, minimum width=3pt] (b2) at (5,0.5){};
\node[state, circle, ForestGreen, line width=0.35mm, minimum height=3pt, minimum width=3pt] (b3) at (4,1){};
\node[state, circle, ForestGreen, line width=0.35mm, minimum height=3pt, minimum width=3pt] (b4) at (3,1.5){};
\node[state, circle, ForestGreen, line width=0.35mm, minimum height=3pt, minimum width=3pt] (b5) at (2,2){};
\node[state, circle, ForestGreen, line width=0.35mm, minimum height=3pt, minimum width=3pt] (b6) at (1,2.75){};
\node[state, circle, ForestGreen, line width=0.35mm, minimum height=3pt, minimum width=3pt] (b7) at (0,3.5){};

\draw[ForestGreen, line width=0.25mm] 
(b1) -- (b2)
(b2) -- (b3)
(b3) -- (b4)
(b4) -- (b5)
(b5) -- (b6)
(b6) -- (b7);

\end{tikzpicture}

}
\hspace{0.75cm}   
    \subcaptionbox{Linear programming formulation. \label{fig:TCI-LP-b}}[0.45\textwidth]{

\begin{tikzpicture}[ auto ,node distance =1cm and 2cm , on grid , semithick , state/.style ={ circle ,top color =white , bottom color = white , draw, black , text=black}, every node/.style={inner sep=0,outer sep=0}]

\node[state, circle, blue, line width=0.35mm, minimum height=0pt, minimum width=0pt] (a1) at (0,0){};
\node[state, circle, blue, line width=0.35mm, minimum height=0pt, minimum width=0pt] (a2) at (1,0.25){};
\node[state, circle, blue, line width=0.35mm, minimum height=0pt, minimum width=0pt] (a3) at (2,0.5){};
\node[state, circle, blue, line width=0.35mm, minimum height=0pt, minimum width=0pt] (a4) at (3,1){};
\node[state, circle, blue, line width=0.35mm, minimum height=0pt, minimum width=0pt] (a5) at (4,1.5){};
\node[state, circle, blue, line width=0.35mm, minimum height=0pt, minimum width=0pt] (a6) at (5,2.5){};
\node[state, circle, blue, line width=0.35mm, minimum height=0pt, minimum width=0pt] (a7) at (6,3.5){};

\node[state, circle, blue, line width=0.35mm, minimum height=0pt, minimum width=0pt] (P1) at (6,1.5){};
\node[state, circle, blue, line width=0.35mm, minimum height=0pt, minimum width=0pt] (P2) at (1,0){};
\node[state, circle, blue, line width=0.35mm, minimum height=0pt, minimum width=0pt] (P3) at (6,2.5){};
\node[state, circle, blue, line width=0.35mm, minimum height=0pt, minimum width=0pt] (P4) at (2.5,0){};

%%\draw[blue!50, line width=0.25mm] 
%%(a1) -- (P1)
%%(P2) -- (P3)
%%(a7) -- (P4);

\fill[blue!25, opacity=0.5, line width=0pt, draw] 
(0,0)-- (2,0.5) -- (0,3.5) -- cycle
(2,0.5) -- (4,1.5) -- (0,3.5) -- cycle
(4,1.5) -- (0,3.5) -- (6,3.5) -- cycle;

\node[state, circle, ForestGreen, line width=0.35mm, minimum height=0pt, minimum width=0pt] (b1) at (6,0){};
\node[state, circle, ForestGreen, line width=0.35mm, minimum height=0pt, minimum width=0pt] (b2) at (5,0.5){};
\node[state, circle, ForestGreen, line width=0.35mm, minimum height=0pt, minimum width=0pt] (b3) at (4,1){};
\node[state, circle, ForestGreen, line width=0.35mm, minimum height=0pt, minimum width=0pt] (b4) at (3,1.5){};
\node[state, circle, ForestGreen, line width=0.35mm, minimum height=0pt, minimum width=0pt] (b5) at (2,2){};
\node[state, circle, ForestGreen, line width=0.35mm, minimum height=0pt, minimum width=0pt] (b6) at (1,2.75){};
\node[state, circle, ForestGreen, line width=0.35mm, minimum height=0pt, minimum width=0pt] (b7) at (0,3.5){};

\node[state, circle, blue, line width=0.35mm, minimum height=0pt, minimum width=0pt] (Q1) at (0,3){};
\node[state, circle, blue, line width=0.35mm, minimum height=0pt, minimum width=0pt] (Q2) at (4.65,0){};

%%\draw[ForestGreen!50, line width=0.25mm] 
%%(b1) -- (Q1)
%%(b7) -- (Q2);

\fill[ForestGreen!45, opacity=0.5, line width=0pt, draw] 
(6,3.5)-- (6,0) -- (2,2) -- cycle
(6,3.5) -- (0,3.5) -- (2,2) -- cycle;

\fill[Cyan!75, opacity=0.5, line width=0pt, draw] 
(6,3.5) -- (0,3.5) -- (2,2) -- cycle
(6,3.5) -- (3.5,1.25) -- (2,2) -- cycle
(6,3.5) -- (3.5,1.25) -- (4,1.5) -- cycle;

\draw[blue, line width=0.4mm] 
(a1) -- (a2)
(a2) -- (a3)
(a3) -- (a4)
(a4) -- (a5)
(a5) -- (a6)
(a6) -- (a7);

\draw[ForestGreen, line width=0.4mm] 
(b1) -- (b2)
(b2) -- (b3)
(b3) -- (b4)
(b4) -- (b5)
(b5) -- (b6)
(b6) -- (b7);

\draw[dashed,red, line width=0.4mm] (0,1.25) -- (6,1.25);
\node[circle, red, fill=red, draw, line width=0.35mm, minimum height=4pt, minimum width=4pt] (X) at (3.5,1.25){};

\end{tikzpicture}

}
    
\caption{An illustration of the two-curve intersection problem for $n=7$ points and its connection to $2$-dimensional linear programming.  \label{fig:TCI}}
\end{figure}

\paragraph{Connection to $\bm{2}$-Dimensional Linear Programming.} We can reduce the two-curve intersection problem to an instance of $2$-dimensional linear programming as follows (see Figure~\ref{fig:TCI-LP-b}). Extend each segment of 
the curve in Alice's and Bob's input to obtain a line that defines a constraint in which all points above this line are feasible (blue region for Alice and green region for Bob in Figure~\ref{fig:TCI-LP-b}). 
The feasible region of this linear program is the set of points in $\IR^2$ that lie above both of Alice's and Bob's curve. By minimizing the $y$-axis on the feasible region, we obtain the first ``fractional'' point in which Alice's curve goes above Bob's curve, and by rounding down the $x$-axis of this point, we obtain the index $\istar$ of  $\TCI$. 

\paragraph{Geometric Notations.} We work in the two-dimensional Euclidean space $\IR^2$. We use $p \in \IR^2$ to denote a point, and $p.x$ and $p.y$ to denote its $x$ and $y$ coordinates respectively. Throughout, all the points used
have rational coordinates (i.e., in $\IQ^2$).
For two points $p_1,p_2$ and integers $a \leq b$, we define $\Line(p_1,p_2,a,b)$ as the sequence of $b-a+1$ numbers $\seq{z_a,z_{a+1},\ldots,z_{b}}$ such that for all $i \in [a:b]$, $(i,z_i)$ belongs to the unique line in $\IR^2$ that passes through the points 
$p_1$ and $p_2$. We use the following elementary geometric facts.

\begin{fact}\label{fact:lines}
	Let $\seq{z_a,z_{a+1},\ldots,z_{b}} := \Line(p_1,p_2,a,b)$. 
	\begin{enumerate}
		\item For every $i \in (a:b]$, $z_i - z_{i-1} := \frac{p_2.y - p_1.y}{p_2.x - p_1.x}$.
		\item For every $i \in [a:b]$, $z_i  = \frac{p_2.y - p_1.y}{p_2.x - p_1.x} \cdot (i-p_1.x) + p_1.y =  \frac{p_2.y - p_1.y}{p_2.x - p_1.x} \cdot (i-p_2.x) + p_2.y$. 
	\end{enumerate}
\end{fact}

We also define a notion called \emph{step curve}. For a string $X = (x_1,\ldots,x_m) \in \set{0,1}^m$ and a parameter $\alpha \geq 1$, 
$\Stepcurve(X,\alpha)$ is the sequence of $m+1$ numbers $\seq{z_0,z_1,\ldots,z_m}$ such that $z_0 = 0$ and for all $i \in [m]$, $z_i := z_{i-1} + \alpha+ i + x_{i}$. 

\subsection{Communication Complexity of \TCI}\label{sec:cc-tci}

Our goal is to prove the following theorem. 
\begin{theorem}\label{thm:cc-tci}
	For any $r \geq 1$, $\CCr{\TCI_n}{r} = \Omega(\frac{1}{r^2} \cdot n^{1/r})$. 
	% + \frac{\log{n}}{\log\log{n}}
\end{theorem}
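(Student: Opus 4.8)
The plan is to prove Theorem~\ref{thm:cc-tci} by exhibiting a hard distribution on $\TCI_n$ instances that encodes an $r$-fold ``composed'' version of Augmented Indexing, and then analyzing it by a Miltersen--Nisan--Safra--Wigderson style round-elimination argument, so that peeling off one level of the composition costs one round of communication. The geometric realization of the composed problem as a genuine $\TCI_n$ input is carried out using the $\Line$ and $\Stepcurve$ gadgets, whose properties (in particular, the $+i$ term in the $\Stepcurve$ increments that forces strictly increasing increments) are exactly what is needed to keep Alice's curve convex-increasing and Bob's curve convex-decreasing after gluing.

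Concretely, set $b := n^{1/r}$ and think of the answer $i^\star \in [n]$ in base $b$, with digits $d_1,\dots,d_r$. I would build a hard distribution in which $d_r$ selects which of $b$ consecutive ``coarse blocks'' of length $b^{r-1}$ contains the crossing of the two curves, $d_{r-1}$ then selects among $b$ sub-blocks of length $b^{r-2}$ inside it, and so on, with the selection at level $\ell$ controlled by Alice's input for odd $\ell$ and by Bob's input for even $\ell$ (so that the speaker alternates level by level); at the bottom level the crossing index reveals one designated bit of a length-$b$ string, precisely as in $\AugIndex_b$. Outside the ``live'' block at each level the curves are filled in with $\Line$ segments arranged so that Alice's curve lies strictly below Bob's on the blocks preceding the live one and strictly above on those following it, which is exactly consistent with global monotonicity and convexity. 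The base case $r=1$ is then a direct reduction from $\AugIndex_n$, so the one-round (Alice-speaks) distributional complexity is $\Omega(n)$ by the stated one-round lower bound for Augmented Indexing.

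For the inductive step I would apply the round-elimination lemma: at the top level the selector effectively ranges over $b$ independent sub-instances, so a message of $c := \CCr{\TCI_n}{r}$ bits carries on average at most $O(c/b)$ bits of information about a typical sub-instance; by Pinsker's inequality (Fact~\ref{fact:pinskers}) one can fix a sub-instance (and a prefix of it that becomes ``known'', which is why the residual problem is an \emph{augmented} $\TCI_{n/b}$ rather than a plain one), obtaining an $(r-1)$-round protocol whose error exceeds the original by only $O(\sqrt{c/b})$. Iterating this $r$ times collapses the instance to the base case; for the accumulated error to stay below a constant strictly less than $1/2$ we need $r\cdot O(\sqrt{c/b}) = O(1)$, i.e. $c = \Omega(b/r^2) = \Omega(n^{1/r}/r^2)$, and if instead $c$ were smaller the reduction would produce a too-cheap one-round protocol for $\AugIndex_b$, a contradiction. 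The factor $1/r^2$ is exactly the cost of spreading an $O(1)$ total error budget over the $r$ elimination steps.

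The step I expect to be the main obstacle is the geometric bookkeeping in the recursive gluing: for every ``dead'' block one must choose $\Line$ segments for $A$ and for $B$ whose slopes vary monotonically across all block boundaries — so the concatenations remain convex-increasing and convex-decreasing respectively — while still leaving vertical room for the recursively embedded sub-curves in the live block and for the $\Stepcurve$ gadget at the bottom, and keeping all coordinates rational with polynomially many bits; one must also verify that the object produced by the round-elimination lemma, after conditioning on a transcript prefix and a revealed string prefix, is genuinely distributed as a smaller augmented $\TCI_{n/b}$ instance (so the induction closes) rather than some degenerate artifact. The information-theoretic round elimination itself is by now standard once the composed problem is set up correctly; it is the convex-curve encoding and the verification that the residual problem is again of the same form where the care is needed.
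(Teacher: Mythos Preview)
Your proposal is essentially the paper's approach: a recursive hard distribution with $N=n^{1/r}$ sub-instances per level, alternating which player ``owns'' the level according to parity, round elimination via mutual information plus Pinsker, and the $1/r^2$ factor coming from an $O(1/r)$ error budget per elimination step. The base case via $\AugIndex$ and the final arithmetic match as well.

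One point where your description, read literally, would break: you write that outside the live block ``the curves are filled in with $\Line$ segments.'' In the paper's construction this is \emph{asymmetric}. The first speaker's curve (Alice's for odd $r$, Bob's for even $r$) is built by concatenating \emph{all} $N$ sub-instance curves end to end, using slope-shift and origin-shift operators so the concatenation stays convex and monotone; only the \emph{other} player's curve is the straight-line extension of the single special sub-instance outside the live block. If both curves were line segments in the dead blocks, each player could read off $\zstar_r$ from the shape of their own input, and the key independence $\rZ \perp (\text{first speaker's input})$ that drives the information bound $\mi{(\rA_{\rZ},\rB_{\rZ})}{\rProt_1} \le \ell/N$ would fail. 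Relatedly, the ``prefix that becomes known'' in the residual problem is not just indices but the full earlier (or later) sub-instance curves $(C^{<\zstar},D^{<\zstar})$ (resp.\ $(C^{>\zstar},D^{>\zstar})$), since both players need them to undo the shift operators and recover the embedded $\dist_{r-1}$ instance. This asymmetry and the slope-shift gluing are exactly the ``geometric bookkeeping'' you flag as the main obstacle, so you have identified the right place where the work is; just be aware that the construction is not the symmetric one your sentence suggests.
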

\noindent
The proof of Theorem~\ref{thm:cc-tci} is based on an inductive argument, following the general \emph{round-elimination} approach in communication complexity 
(see, e.g.~\cite{MiltersenNSW98,SenV08}). In this approach, one proves the lower bound for $r$-round problems
by showing that a ``too good'' $r$-round protocol will imply a too good $(r-1)$-round protocol, by reducing the $r$-round problem to \emph{multiple} instances of the $(r-1)$-round problem. Following this argument inductively, we will end up with a protocol using only $1$ round. We then directly prove that such a too good $1$-round protocol {\em cannot} exist.

\subsubsection{Base Case: One-Round Protocols}\label{sec:cc-tci-1-round} %

As a warm-up, we first prove Theorem~\ref{thm:cc-tci} for $r=1$, i.e., 1-round protocols.

\begin{lemma}
	%[One-round communication complexity of \TCI]
	\label{lem:cc-tci-1-round}
	$\CCr{\TCI_n}{1} = \Omega(n)$. 
\end{lemma}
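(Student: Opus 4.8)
The plan is to reduce the Augmented Indexing problem $\AugIndex_n$ (whose $1$-round communication complexity is $\Omega(n)$, as noted in the excerpt) to $\TCI_n$, so that any $1$-round $\TCI_n$ protocol yields a $1$-round $\AugIndex_{\Omega(n)}$ protocol of the same cost. Concretely, suppose Alice holds a string $x \in \{0,1\}^m$ for $\AugIndex_m$ and Bob holds an index $i \in [m]$ together with the prefix $x_1,\dots,x_{i-1}$. The idea is for Alice to encode her whole string $x$ into her $\TCI$ input $\AA = \langle a_1,\dots,a_n \rangle$ via the \Stepcurve construction: set $\AA = \Stepcurve(x,\alpha)$ for a suitable slope offset $\alpha \ge 1$, so that each consecutive gap $a_j - a_{j-1}$ equals $\alpha + j + x_j$. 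This is monotone increasing and (since the gaps are strictly increasing in $j$) convex, so it is a legal $\TCI$ curve; crucially, the low-order bit $x_j$ is recoverable from the $j$-th gap once we know $\alpha$ and $j$, so the curve literally ``carries'' the bits of $x$ in its increments.

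Next I would have Bob build $\BB = \langle b_1,\dots,b_n\rangle$ as a single line segment (in fact, using $\Line(p_1,p_2,\cdot,\cdot)$) whose slope and intercept are chosen from $i$ and the known prefix $x_1,\dots,x_{i-1}$, so that $\BB$ is monotone decreasing and convex (a line trivially satisfies the convexity equality), and so that the unique crossing index $\istar$ of the pair $(\AA,\BB)$ encodes whether $x_i = 0$ or $x_i = 1$. The point is this: by the $1$-Lipschitz-type control on the $\AA$-gaps (they lie in $\{\alpha+j,\alpha+j+1\}$), the value $a_j$ is determined up to an additive error of less than $1$ by the prefix-sum of the $\alpha+j$ terms plus $\sum_{\ell<j} x_\ell$, which Bob knows for all $j \le i$. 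Bob can therefore pin down the crossing to within one index for every coordinate $< i$, and choose $\BB$'s line so that the crossing falls exactly at index $i-1$ or $i$ depending on the single unknown bit $x_i$ (the $+x_i$ in the $i$-th gap tips $a_i$ versus $a_{i+1}$ across Bob's line). Reading off $\istar$ and comparing to $i$ then gives $x_i$, and the whole reduction is local: Alice uses only $x$, Bob uses only $(i, x_{<i})$, no communication beyond the $\TCI$ protocol itself is needed.

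The main technical obstacle — and where I'd spend the most care — is getting the quantitative geometry exactly right: choosing $\alpha$ large enough that $\BB$ can be made monotone decreasing over the whole range while still being a straight line crossing $\AA$ near index $i$ with the correct parity sensitivity, and verifying the convexity constraints for $\AA$ (strict increase of gaps) and the decreasing/convexity constraints for $\BB$ simultaneously. One must also confirm that the promised crossing index exists and is unique for the constructed instance, and that the coordinates remain rational and of polynomially bounded bit-length so that $n = \Theta(m)$ (a constant-factor loss is harmless for the $\Omega(n)$ bound). Given all this, a protocol solving $\TCI_n$ with $1$ round and $o(n)$ communication would solve $\AugIndex_{\Theta(n)}$ with $1$ round and $o(n)$ communication, contradicting $\CCr{\AugIndex_n}{1} = \Omega(n)$; hence $\CCr{\TCI_n}{1} = \Omega(n)$.

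I would close by remarking that this base case is both the warm-up and the template for the inductive step: the $\Stepcurve$ encoding of Alice's bits and the single-segment encoding of Bob's query are exactly the gadgets that the round-elimination argument will recurse on, splitting the string into blocks and using an $r$-round $\TCI_n$ protocol to solve many $(r-1)$-round sub-instances, so the details established here (legality of the curves, recoverability of bits from gaps, exact placement of the crossing) are reused throughout the proof of Theorem~\ref{thm:cc-tci}.
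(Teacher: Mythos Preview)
Your proposal is correct and follows essentially the same approach as the paper: reduce from $\AugIndex_{n-1}$ by having Alice set $\AA := \Stepcurve(x,\alpha)$ (the paper takes $\alpha=0$) and Bob set $\BB$ to be a single line through a point he can compute from the known prefix $x_{<i}$, so that the crossing index shifts by one depending on the unknown bit $x_i$. The only differences are cosmetic---the paper fixes $\alpha=0$ and gives the explicit endpoints $p_1=(n,1)$, $p_2=(\istar,a_{\istar}+\istar+1)$ for Bob's line rather than leaving them as parameters to be tuned.
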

\begin{proof}
	We prove this lemma using a reduction from the Augmented Indexing Problem on a universe of size $n-1$. Given an instance of $\AugIndex_{n-1}$ with input $x \in \set{0,1}^{n-1}$ to Alice and $\istar \in [n-1]$ plus $x_1,\ldots,x_{\istar-1}$ to Bob, 
	the players construct the following instance of $\TCI_n$ (without any communication): 
	\begin{enumerate}
		\item Alice creates $\AA :=\seq{a_1,\ldots,a_n} := \Stepcurve(x,0)$. 
		\item Bob creates $\BB := \seq{b_1,\ldots,b_n} = \Line(p_1,p_2,1,n)$, where $p_1 := (n,1)$ and $p_2 := (\istar,a_{\istar}+\istar+1)$. 
		
		%		where for all $i \in [n]$, $b_i = a_{\istar} + \istar$ (Bob can construct $a_{\istar}$ without any communication as $a_{\istar}$ is only a function of $x_{1},\ldots,x_{\istar-1}$ which is given to Bob also). 
	\end{enumerate}
	
	The players then run the protocol for $\TCI_n$ on this instance and Bob outputs $x_{\istar} = 1$ (in answer to the $\AugIndex$ instance) iff $\istar$ is returned by the protocol for $\TCI_n$ as the answer on this instance.  
	
	\medskip
	\emph{Correctness of the reduction.} We first verify that the sequences $\AA$ and $\BB$ constructed by Alice and Bob satisfy the promise of the $\TCI_n$ input. By Fact~\ref{fact:lines}, $\BB$ is both monotone
	and convex. It is also easy to see that $\AA$ is monotonically increasing: for all $i \geq 2$, $a_{i} \geq a_{i-1} + i \geq a_{i-1}$. Finally, to verify the convexity of $\AA$, notice that $a_{i} - a_{i-1} = i + x_{i-1} \leq i+1$ while $a_{i+1} - a_{i} = i+1 + x_i \geq i+1$. 
	
	We now prove the correctness of the output in the reduction. Suppose first that $x_{\istar}=0$. In this case, $a_{\istar+1} = a_{\istar} + \istar+1 + x_{\istar} = a_{\istar}+\istar+1 = b_{\istar}$ by definition. 
	On the other hand, $a_{\istar+2} > b_i$ for all $i > \istar$ as $b_i < b_{\istar}$ and $a_{\istar+2} > a_{\istar+1} = b_{\istar}$. 
	As a result, the correct index in $\TCI_n$ is $\istar+1$. Now suppose $x_{\istar}=1$. In this case, $a_{\istar+1} > b_{\istar}$ while 
	$a_{\istar} < b_{\istar}$. As such, the correct index in $\TCI_n$ is $\istar$, finalizing the proof of the correctness of the reduction. 
	
	\medskip
	\emph{Communication cost of the reduction.} The instance of $\TCI_n$ can be created with no communication. As such,  
	\begin{align*}
	\CCr{\TCI_n}{1} \geq \CCr{\AugIndex_{n-1}}{1} = \Omega(n). \qed
	\end{align*}
	
\end{proof}

\subsubsection{General Lower Bound: The Outline}\label{sec:cc-tci-outline} %

We now switch to the main part of the argument in which we prove Theorem~\ref{thm:cc-tci} for all integers $r \geq 1$. In this section we outline our high level approach.
%in proving the $\Omega(n^{1/r})$ term in the lower bound (which is the dominant term for  nearly all interesting values of $r$); the other term follows  immediately from this. 
In this section, we will oversimplify many details, and the discussions will be informal for the sake of intuition.

We design a family of distribution $\dist_1,\dist_2,\ldots$, where $\dist_r$ is hard distribution for $r$-round protocols. Distribution $\dist_1$ is the distribution of hard instances obtained in Lemma~\ref{lem:cc-tci-1-round} (from the hard distribution of $\AugIndex$).
Each instance $I$ in the distribution $\dist_r$ is then constructed roughly as follows: we sample $n^{1/r}$ instances from the distribution $\dist_{r-1}$ each over $n^{(r-1)/r}$ points. 
Let us call these instances $I_1,\ldots,I_{n^{1/r}}$. We \emph{embed} these instances inside $I$ so that the following two properties are satisfied: $(i)$ the answer to $\TCI_n$ on instance $I$ is the same as the answer to $\TCI_{n^{(r-1)/r}}$ on instance 
$I_{\zstar}$ for some $\zstar \in [n^{1/r}]$ chosen uniformly at random, and $(ii)$ the first player to speaks (namely Alice for odd $r$ and Bob for even $r$) is \emph{oblivious} to the identity of $\zstar$. 

The proof of the communication lower bound then goes as follows. Using information-theoretic arguments, we can argue that if the first message of the protocol is of size 
$o(n^{1/r})$, then it only reveals $o(1)$ bits of information about an ``average'' embedded instance $I_i$ for $i \in [n_r]$ of $\dist_{r-1}$. In particular, since the sender of the first message is oblivious to the identity of the $\zstar$ (by property $(ii)$), the first 
message only reveals $o(1)$ bits of information about the instance $I_{\zstar}$. This effectively means that the distribution of the instance $I_{\zstar}$ is essentially the same as $\dist_{r-1}$ even after the first round. 
However, by property $(i)$, the players now need to solve the instance $I_{\zstar}$ on $n^{(r-1)/r}$ 
elements sampled from distribution $\dist_{r-1}$ in $r-1$ rounds. By induction, this requires $\Omega(\paren{n^{(r-1)/r}}^{1/r-1}) = \Omega(n^{1/r})$ bits, which implies
the desired lower bound for $r$-round protocols.  

The outline above is arguably the most straightforward application of round-elimination (see, e.g. the tree-pointer-jumping problem in~\cite{ChakrabartiCM08}). 
Unfortunately however, this approach does not work directly in our application. In particular, in the discussion above, we left the specifics of 
how the $(r-1)$-round instances $I_1,\ldots,I_{n^{1/r}}$ are embedded together to form $I$. For the above information-theoretic arguments to work, these instances need to be sampled \emph{independently} of each other.  On the other hand, for us to be able to 
embed them together in a valid instance of $\TCI$, we need to ensure that they collectively preserve monotonicity and convexity properties of $\TCI$. This requires \emph{correlating} the instances $I_1,\ldots,I_{n^{1/r}}$, impeding
the use of previous information-theoretic argument. 

We get around this challenge by carefully ``revealing extra information'' about the inputs of the players to each other (similar to the reduction from \AugIndex in Lemma~\ref{lem:cc-tci-1-round}), which allows to 
``control'' the correlation between different instances $I_1,\ldots,I_{n^{1/r}}$ in terms of these revealed information. We then show that even with this extra information, the two properties above for embedded instances continue to hold, and at the same time, 
we have enough independence in the instances to make the information-theoretic arguments outlined above work. 

We comment that this construction of hard instances of $\TCI$ and the proof of 
the corresponding communication lower bound is one of the main technical contributions of this paper.

\subsubsection{General Lower Bound: The Hard Input Distribution}\label{sec:cc-tci-general} %

We use an integer $N \geq 1$ as a parameter in defining all other parameters of our hard distribution. In particular, for $r$-round instances, $n_r = N^r$ is the number of points given to Alice and Bob, and 
$m_r := N$ is the number of $(r-1)$-round instances ``embedded'' inside the $r$-round instance. We also define the following two \emph{operators} on instances that are used in our lower bound construction 
(their roles will become more evident once we give the proper definition of the hard distribution). 

\begin{itemize}
	\item \textbf{Slope-Shift Operator:} In any instance $I$ of our hard distribution $\dist_r$, the input to Alice is constructed using several (potentially different) $\Stepcurve$ functions. By applying slope-shift operator on instance $I$ with parameter $\alpha$, 
	we increase the second parameter in every application of $\Stepcurve$ in constructing Alice's input by an \emph{additive} factor of $\alpha$. As a result, any segment in Alice's input constructed with $\Stepcurve(*,\beta)$ 
	becomes $\Stepcurve(*,\alpha+\beta)$. %with slope $\beta$ originally, will now have slope $\alpha+\beta$. 
	We ensure that the operator also changes the slope of Bob's input by $\alpha$.
	
	\item \textbf{Origin-Shift Operator:} By applying the origin-shift operator with point $p_A \in \IR^2$ {from Alice's side} in an instance $I$ of $\dist_r$, 
	we shift \emph{all} points in the instance $I$ along the same line so that the \emph{left-most} point of Alice's input will be on the point $p_A$. Similarly, by applying the origin-shift operator with $p_B \in \IR^2$  {from Bob's side}, we
	shift \emph{all} points along the same line so that the \emph{right-most} point of Bob's input will be on $p_B$. This operator clearly does not change the slope of any line segment in players' inputs. 
\end{itemize}

\noindent
We are now ready to describe our hard input distribution. 

\paragraph{Distribution $\dist_r$: The Hard Distribution for $r$-round Protocols of $\TCI$.} We define the following procedure $\Instance$ that given a parameter $r$, construct an instance of $\TCI$. 

\begin{tbox}
	$\underline{\Instance(r)}$. 
	\begin{enumerate}
		\item If $r=1$, sample $(A,B)$ from the distribution of Lemma~\ref{lem:cc-tci-1-round}; otherwise, define $(A,B):= \EvenI(r)$ for even $r$ and $(A,B) :=\OddI(r)$ for odd $r$. 
		\item Return the points $(A,B)$ as the $r$-round instance. 
	\end{enumerate}
\end{tbox}

%In the following two sections, we present construction of each of the procedures in $\Instance$ that handle even and odd choices of $r$ separately. % and their necessary properties for our purpose. 

We now define the $\EvenI$ procedure inside $\Instance$. 
%(see Figure~\ref{fig:EvenI}).% and~\ref{fig:EvenI-Alice-Bob}).  %Again, these procedures are both deterministic. 

\begin{tbox}
	$\underline{\EvenI(r)}$. 
	\begin{enumerate}
		\item Sample $m_r$ instances $(C_i,D_i)$ \emph{independently} from $\Instance(r-1)$. 
		
		\item For $i=m_r$ down to $1$ do: 
		\begin{enumerate}
			\item Let $p^{i+1}_B$ be the left-most point of Bob's input in $(C_{i+1},D_{i+1})$ (define $p^{m_r+1}_B := (n_r,0)$).  
			Apply the origin-shift operator with point $p^{i+1}_B$ from Bob's side on the instance $(C_i,D_i)$. 
			\item Let $\alpha^{i+1}_r$ be the largest slope of any segment in $(C_{i+1},D_{i+1})$. 
			Apply the slope-shift operator with slope $\alpha^{i+1}_r$ on $(C_i,D_i)$. 
		\end{enumerate}
		\item Sample $\zstar_r \in [m_r]$ uniformly at random.
		\item Define $A := (A_1,\ldots,A_{m_r})$ where $A_{\zstar_r} = C_{\zstar_r}$; the remaining $A_{i}$'s for $i \neq \zstar_r$ are constructed by extending the curve in $A_{\zstar_r}$ on both its endpoints along straight lines. 
		\item Define $B := (B_1,\ldots,B_{m_r})$ where $B_i = D_i$ for all $i \in [m_r]$. 
	\end{enumerate}
	%	\end{itemize}
\end{tbox}

We refer to instances $(C_1,D_1),\ldots,(C_{m_r},D_{m_r})$ as \emph{sub-instances}. Several remarks are in order about these sub-instances. 
Firstly, even though they were originally sampled independently, by applying the origin-shit and slope-shift operators, we have correlated these instances. In particular, each instance $(C^{i},D^{i})$ depends on instances $(C^{j},D^{j})$ for $j > i$. 
Moreover, note that \emph{not} all the points in these instances 
appear in the final instance $(A,B)$. In particular, we only use the points in $C_{\zstar_r}$ to define $A_{\zstar_r}$; the remaining points in $A \setminus A_{\zstar_r}$ are obtained differently from $C_1,\ldots,C_{m_r}$ (the points in $B$ are however
identical to the points in $D_1,\ldots,D_{m_r}$). Nevertheless, the remaining instances still play a marginal role in the definition of the players' inputs because these points define
the starting point and starting slope of each sub-instance. In the following, we refer to $(A,B)$ as the \emph{actual} input of Alice and Bob, and refer to the points in $(C_1,D_1),\ldots,(C_{m_r},D_{m_r})$ that are not part of $(A,B)$ as \emph{fooling} inputs. 
Figure~\ref{fig:EvenI} gives an illustration of $\EvenI$. 

\begin{figure}[t]
    \centering
    \subcaptionbox{An Illustration of $\EvenI$. \label{fig:EvenI}}[0.45\textwidth]{

\begin{tikzpicture}[ auto ,node distance =1cm and 2cm , on grid , semithick , state/.style ={ circle ,top color =white , bottom color = white , draw, black , text=black}, every node/.style={inner sep=0,outer sep=0}]

\draw[help lines,xstep=0.5,ystep=.25] (0,0) grid (8,4);

\node[state, circle, ForestGreen, line width=0.35mm, minimum height=3pt, minimum width=3pt] (b1) at (8,0){};
\node[state, circle, ForestGreen, line width=0.35mm, minimum height=3pt, minimum width=3pt] (b2) at (6,0.5){};
\node[state, circle, ForestGreen, line width=0.35mm, minimum height=3pt, minimum width=3pt] (b3) at (4,1.25){};
\node[state, circle, ForestGreen, line width=0.35mm, minimum height=3pt, minimum width=3pt] (b4) at (2,2.25){};
\node[state, circle, ForestGreen, line width=0.35mm, minimum height=3pt, minimum width=3pt] (b5) at (0,4){};

\draw[ForestGreen, line width=0.5mm] (b1) -- (b2);
\draw[ForestGreen, line width=0.5mm] (b2) -- (b3);
\draw[ForestGreen, line width=0.5mm] (b3) -- (b4);
\draw[ForestGreen, line width=0.5mm] (b4) -- (b5);

%% Instance numbers are in reverse order 

%%%%%% INSTANCE 1 %%%%%%%%%% 

\node[state, circle, blue!25, line width=0.35mm, minimum height=3pt, minimum width=3pt] (a11) at (6,0){};
\node[state, circle, blue!25, line width=0.35mm, minimum height=3pt, minimum width=3pt] (a12) at (6.5,0){};
\node[state, circle, blue!25, line width=0.35mm, minimum height=3pt, minimum width=3pt] (a13) at (7,0.25){};
\node[state, circle, blue!25, line width=0.35mm, minimum height=3pt, minimum width=3pt] (a14) at (7.5,0.75){};
\node[state, circle, blue!25, line width=0.35mm, minimum height=3pt, minimum width=3pt] (a15) at (8,1.75){};

\draw[dashed, red!50, line width=0.25mm] (a11) rectangle (a15);

\draw[blue!25, line width=0.25mm] (a11) -- (a12);
\draw[blue!25, line width=0.25mm] (a12) -- (a13);
\draw[blue!25, line width=0.25mm] (a13) -- (a14);
\draw[blue!25, line width=0.25mm] (a14) -- (a15);

%%%%%% INSTANCE 2 %%%%%%%%%% 

\node[state, circle, blue!25, line width=0.35mm, minimum height=3pt, minimum width=3pt] (a21) at (4,0.5){};
\node[state, circle, blue!25, line width=0.35mm, minimum height=3pt, minimum width=3pt] (a22) at (4.5,0.75){};
\node[state, circle, blue!25, line width=0.35mm, minimum height=3pt, minimum width=3pt] (a23) at (5,1){};
\node[state, circle, blue!25, line width=0.35mm, minimum height=3pt, minimum width=3pt] (a24) at (5.5,1.5){};
\node[state, circle, blue!25, line width=0.35mm, minimum height=3pt, minimum width=3pt] (a25) at (6,3){};

\draw[dashed, red!50, line width=0.25mm] (a21) rectangle (a25);

\draw[blue!25, line width=0.25mm] (a21) -- (a22);
\draw[blue!25, line width=0.25mm] (a22) -- (a23);
\draw[blue!25, line width=0.25mm] (a23) -- (a24);
\draw[blue!25, line width=0.25mm] (a24) -- (a25);

%%%%%% INSTANCE 3  = SPECIAL INSTANCE %%%%%%%%%% 

\node[state, circle, blue, line width=0.35mm, minimum height=3pt, minimum width=3pt] (a30) at (0,0.25){};
\node[state, circle, blue, line width=0.35mm, minimum height=3pt, minimum width=3pt] (a31) at (2,1.25){};
\node[state, circle, blue, line width=0.35mm, minimum height=3pt, minimum width=3pt] (a32) at (2.5,1.5){};
\node[state, circle, blue, line width=0.35mm, minimum height=3pt, minimum width=3pt] (a33) at (3,2){};
\node[state, circle, blue, line width=0.35mm, minimum height=3pt, minimum width=3pt] (a34) at (3.5,3){};
\node[state, circle, blue, line width=0.35mm, minimum height=3pt, minimum width=3pt] (a35) at (4,4){};

\draw[dashed, red!50, line width=0.25mm] (a31) rectangle (a35);

\draw[blue, line width=0.5mm] (a31) -- (a30);
\draw[blue, line width=0.5mm] (a31) -- (a32);
\draw[blue, line width=0.5mm] (a32) -- (a33);
\draw[blue, line width=0.5mm] (a33) -- (a34);
\draw[blue, line width=0.5mm] (a34) -- (a35);

%%%%%% INSTANCE 4 %%%%%%%%%% 

\node[state, circle, blue!25, line width=0.35mm, minimum height=3pt, minimum width=3pt] (a41) at (0,2.25){};
\node[state, circle, blue!25, line width=0.35mm, minimum height=3pt, minimum width=3pt] (a42) at (0.5,2.5){};
\node[state, circle, blue!25, line width=0.35mm, minimum height=3pt, minimum width=3pt] (a43) at (1,2.75){};
\node[state, circle, blue!25, line width=0.35mm, minimum height=3pt, minimum width=3pt] (a44) at (1.5,3.25){};
\node[state, circle, blue!25, line width=0.35mm, minimum height=3pt, minimum width=3pt] (a45) at (2,4){};

\draw[dashed, red!50, line width=0.25mm] (a41) rectangle (a45);

\draw[blue!25, line width=0.25mm] (a41) -- (a42);
\draw[blue!25, line width=0.25mm] (a42) -- (a43);
\draw[blue!25, line width=0.25mm] (a43) -- (a44);
\draw[blue!25, line width=0.25mm] (a44) -- (a45);

\end{tikzpicture}

}
\hspace{0.75cm}   
    \subcaptionbox{An Illustration of $\OddI$. \label{fig:OddI}}[0.45\textwidth]{

\begin{tikzpicture}[ auto ,node distance =1cm and 2cm , on grid , semithick , state/.style ={ circle ,top color =white , bottom color = white , draw, black , text=black}, every node/.style={inner sep=0,outer sep=0}]

\draw[help lines,xstep=0.5,ystep=.25] (0,0) grid (8,4);

\node[state, circle, blue, line width=0.35mm, minimum height=3pt, minimum width=3pt] (a1) at (0,0){};
\node[state, circle, blue, line width=0.35mm, minimum height=3pt, minimum width=3pt] (a2) at (1,0.25){};
\node[state, circle, blue, line width=0.35mm, minimum height=3pt, minimum width=3pt] (a3) at (2,0.5){};
\node[state, circle, blue, line width=0.35mm, minimum height=3pt, minimum width=3pt] (a4) at (3,1){};
\node[state, circle, blue, line width=0.35mm, minimum height=3pt, minimum width=3pt] (a5) at (4,1.5){};
\node[state, circle, blue, line width=0.35mm, minimum height=3pt, minimum width=3pt] (a6) at (5,2){};
\node[state, circle, blue, line width=0.35mm, minimum height=3pt, minimum width=3pt] (a7) at (6,2.5){};
\node[state, circle, blue, line width=0.35mm, minimum height=3pt, minimum width=3pt] (a8) at (7,3){};
\node[state, circle, blue, line width=0.35mm, minimum height=3pt, minimum width=3pt] (a9) at (8,4){};

\draw[blue, line width=0.5mm] 
(a1) -- (a2)
(a2) -- (a3)
(a3) -- (a4)
(a4) -- (a5)
(a5) -- (a6)
(a6) -- (a7)
(a7) -- (a8)
 (a8) -- (a9);

%% Instance numbers are in correct order 

%%%%%% INSTANCE 1 %%%%%%%%%% 

\node[state, circle, ForestGreen!50, line width=0.35mm, minimum height=3pt, minimum width=3pt] (b11) at (2,0){};
\node[state, circle, ForestGreen!50, line width=0.35mm, minimum height=3pt, minimum width=3pt] (b12) at (1.5,0){};
\node[state, circle, ForestGreen!50, line width=0.35mm, minimum height=3pt, minimum width=3pt] (b13) at (1,0.25){};
\node[state, circle, ForestGreen!50, line width=0.35mm, minimum height=3pt, minimum width=3pt] (b14) at (0.5,0.5){};
\node[state, circle, ForestGreen!50, line width=0.35mm, minimum height=3pt, minimum width=3pt] (b15) at (0,1){};

\draw[dashed, red!50, line width=0.25mm] (b11) rectangle (b15);

\draw[ForestGreen!50, line width=0.25mm] 
(b11) -- (b12)
(b12) -- (b13)
(b13) -- (b14)
(b14) -- (b15);

%%%%%% INSTANCE 2 %%%%%%%%%% 

\node[state, circle, ForestGreen!50, line width=0.35mm, minimum height=3pt, minimum width=3pt] (b21) at (4,0){};
\node[state, circle, ForestGreen!50, line width=0.35mm, minimum height=3pt, minimum width=3pt] (b22) at (3.5,0.5){};
\node[state, circle, ForestGreen!50, line width=0.35mm, minimum height=3pt, minimum width=3pt] (b23) at (3,1){};
\node[state, circle, ForestGreen!50, line width=0.35mm, minimum height=3pt, minimum width=3pt] (b24) at (2.5,2){};
\node[state, circle, ForestGreen!50, line width=0.35mm, minimum height=3pt, minimum width=3pt] (b25) at (2,4){};

\draw[dashed, red!50, line width=0.25mm] (b21) rectangle (b25);

\draw[ForestGreen!50, line width=0.25mm] 
(b21) -- (b22)
(b22) -- (b23)
(b23) -- (b24)
(b24) -- (b25);

%%%%%% INSTANCE 3  = SPECIAL INSTANCE %%%%%%%%%% 

\node[state, circle, ForestGreen, line width=0.35mm, minimum height=3pt, minimum width=3pt] (b30) at (8,0){};
\node[state, circle, ForestGreen, line width=0.35mm, minimum height=3pt, minimum width=3pt] (b31) at (6,.5){};
\node[state, circle, ForestGreen, line width=0.35mm, minimum height=3pt, minimum width=3pt] (b32) at (5.5,1.25){};
\node[state, circle, ForestGreen, line width=0.35mm, minimum height=3pt, minimum width=3pt] (b33) at (5,2){};
\node[state, circle, ForestGreen, line width=0.35mm, minimum height=3pt, minimum width=3pt] (b34) at (4.5,2.75){};
\node[state, circle, ForestGreen, line width=0.35mm, minimum height=3pt, minimum width=3pt] (b35) at (4,4){};

\draw[dashed, red!50, line width=0.25mm] (b31) rectangle (b35);

\draw[ForestGreen, line width=0.5mm] 
(b30) -- (b31)
(b31) -- (b32)
(b32) -- (b33)
(b33) -- (b34)
(b34) -- (b35);

%%%%%% INSTANCE 4 %%%%%%%%%% 

\node[state, circle, ForestGreen!50, line width=0.35mm, minimum height=3pt, minimum width=3pt] (b41) at (8,2){};
\node[state, circle, ForestGreen!50, line width=0.35mm, minimum height=3pt, minimum width=3pt] (b42) at (7.5,2.25){};
\node[state, circle, ForestGreen!50, line width=0.35mm, minimum height=3pt, minimum width=3pt] (b43) at (7,2.75){};
\node[state, circle, ForestGreen!50, line width=0.35mm, minimum height=3pt, minimum width=3pt] (b44) at (6.5,3.25){};
\node[state, circle, ForestGreen!50, line width=0.35mm, minimum height=3pt, minimum width=3pt] (b45) at (6,4){};

\draw[dashed, red!50, line width=0.25mm] (b41) rectangle (b45);

\draw[ForestGreen!50, line width=0.25mm] 
(b41) -- (b42)
(b42) -- (b43)
(b43) -- (b44)
(b44) -- (b45);

\end{tikzpicture}

}
    
\caption{An illustration of $\EvenI$ and $\OddI$. Thick blue and green curves denote the actual inputs of Alice and Bob, respectively. Similarly, light blue and green curves denote the fooling inputs. Each red dashed rectangle denotes one sub-instance.}
\end{figure}
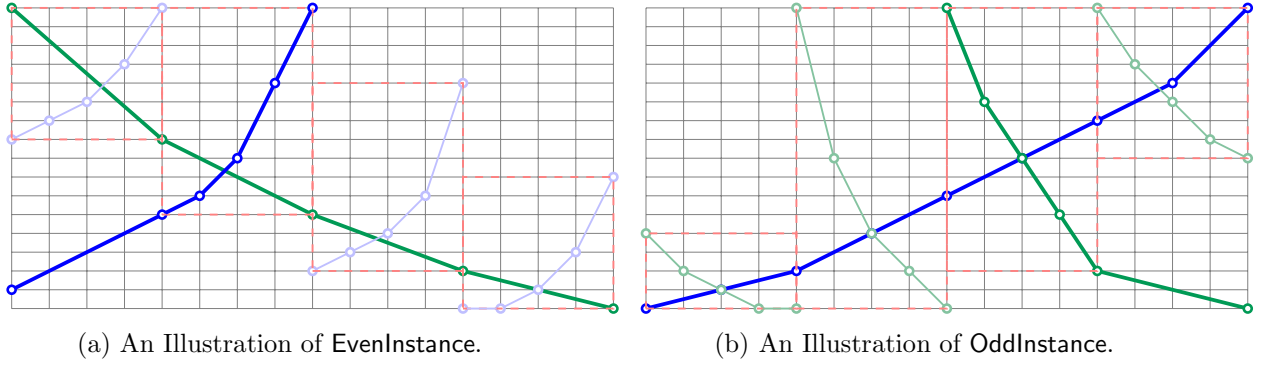

We use the term sub-instance for both $(A_i,B_i)$ and $(C_i,D_i)$ pairs. 
For any $i \in [m_r]$, we use $A_i:= (a_{i,1},\ldots,a_{i,n_{r-1}})$ and $B_i:= (b_{i,1},\ldots,b_{i,n_{r-1}})$ to denote the points in sub-instance $(A_i,B_i)$. 
The following proposition ensures that instances sampled by $\EvenI$ do not violate the monotonicity and convexity properties of $\TCI$ across sub-instances. 

\begin{proposition}\label{prop:EvenI-properties}
	For $(A,B)$ sampled from $\EvenI$, assuming each sub-instance $(A_i,B_i)$ satisfies monotonicity and convexity  of $\TCI$, then $(A,B)$ also satisfies monotonicity and convexity. 
\end{proposition}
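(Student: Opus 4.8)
The plan is to reduce the claim to a small number of ``junction'' conditions between consecutive sub-instances and to discharge those using the origin-shift and slope-shift operators. The elementary observation I would start from is that in any $\TCI$ instance the points sit at the consecutive integer abscissae $1,2,\ldots,n_r$, so for a sequence $\seq{c_1,\ldots,c_{n_r}}$ the difference $c_\ell-c_{\ell-1}$ is exactly the slope of its $\ell$-th segment. Hence monotonicity is equivalent to all these slopes having the correct sign (positive on Alice's side, negative on Bob's), and convexity is equivalent to the slope list being non-decreasing (Alice) resp.\ non-increasing (Bob) as $\ell$ grows. Writing $A=(A_1,\ldots,A_{m_r})$ and $B=(B_1,\ldots,B_{m_r})$ with $B_i=D_i$, and using that by hypothesis each block $(A_i,B_i)$ already has correctly signed and correctly ordered slopes \emph{internally}, it remains only to check that (i) the blocks are glued continuously, so that the slope list of $A$ (resp.\ $B$) is the concatenation of the within-block slope lists together with at most one ``junction'' slope, and (ii) at each junction the slope neither flips sign nor moves in the forbidden direction in passing from the last segment of block $i$ to the first segment of block $i+1$.

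Alice's side is immediate from the construction in $\EvenI$: $A$ is just the curve $A_{\zstar_r}$ with its two ends prolonged along the straight lines carrying its first and last segments. So every segment of $A$ to the left of the $\zstar_r$-th block shares the \emph{smallest} slope of $A_{\zstar_r}$ (its first slope, by the block's convexity), the segments inside $A_{\zstar_r}$ are non-decreasing by hypothesis, and all segments to the right share the \emph{largest} slope of $A_{\zstar_r}$; thus the slope list of $A$ is non-decreasing, and every slope is at least the first slope of $A_{\zstar_r}$, which is positive because the block is monotone. Hence $A$ is monotone and convex with essentially no computation.

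Bob's side is where the two operators do the real work. The origin-shift of step~2(a) is exactly the device that glues $D_i$ to $D_{i+1}$ (its right endpoint is moved onto the left endpoint of $D_{i+1}$, with $D_{m_r}$ anchored at $(n_r,0)$), so $B$ becomes a single continuous piecewise-linear curve and (i) holds; monotonicity of $B$ is then immediate from monotonicity of the blocks together with the (post-shift) hypothesis that the slopes of each $B_i$ are negative. For convexity the thing to prove is that at every junction the last slope of $D_i$ is $\ge$ the first slope of $D_{i+1}$; since the slopes are non-increasing inside each block, this amounts to showing that every slope of $D_i$ dominates every slope of $D_{i+1}$, and this is precisely what the slope-shift of step~2(b) is there to guarantee: by shifting $(C_i,D_i)$ by $\alpha^{i+1}_r$, the maximum slope occurring in the already-processed neighbour $(C_{i+1},D_{i+1})$, the slopes of block $i$ are forced above those of block $i+1$, while the monotonicity part of the hypothesis on $(A_i,B_i)$ keeps the shifted Bob slopes negative. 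Concatenating the within-block ordering with this across-junction ordering yields a non-increasing global slope list for $B$, i.e.\ convexity. The $\OddI$ case is handled symmetrically, with the roles of Alice and Bob exchanged.

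I expect the crux to be exactly this last point, the Bob-side convexity at the junctions, since it is the one place where slopes coming from two \emph{independently sampled} sub-instances must be compared, and it is the specific choice of the slope-shift parameter (the largest slope of the adjacent block) that makes this comparison go through. The formal proof will mostly be a careful accounting of how the origin- and slope-shift operators transform the endpoints and the multiset of slopes of each $D_i$, together with a separate treatment of the extreme blocks $i=1$ and $i=m_r$, where the neighbouring block is replaced by the anchor point $(n_r,0)$; I would expect that bookkeeping, rather than any new idea, to constitute the bulk of the argument.
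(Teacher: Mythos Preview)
Your proposal is correct and follows exactly the same approach as the paper's proof. The paper's argument is a two-sentence sketch --- origin-shift yields monotonicity for $B$, slope-shift yields convexity for $B$, and $A$ is trivial because it is $A_{\zstar_r}$ extended along straight lines --- and you have simply unpacked these three assertions into the explicit junction-slope bookkeeping that the paper leaves implicit. Your identification of the Bob-side junction convexity as the crux, and of the slope-shift parameter $\alpha^{i+1}_r$ as the device that discharges it, matches the paper's one-line appeal to the slope-shift operator. (The final remark about $\OddI$ is unnecessary here, since the proposition is stated only for $\EvenI$; the analogous statement for $\OddI$ is Proposition~\ref{prop:OddI-properties}.)
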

\begin{proof}
	For $B_i$'s, the monotonicity and convexity follow from the origin-shift operator and slope-shift operator, respectively. For $A_i$'s, different sub-instances are obtained by extending two line segments in $A_{\zstar_r}$, and hence $A$ trivially
	satisfies the properties.  
\end{proof}

We refer to the instance $(A_{\zstar_r},B_{\zstar_r}) = (C_{\zstar_r},D_{\zstar_r})$ as the \emph{special} sub-instance of $(A,B)$. 
The next proposition signifies the role of the special sub-instance in $\EvenI$.

\begin{proposition}\label{prop:EvenI-special}
	For instances $(A,B)$ sampled from $\EvenI$, the answer to $\TCI(A,B)$ is the same as the answer to $\TCI(C_{\zstar_r},D_{\zstar_r})$. 
\end{proposition}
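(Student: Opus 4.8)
The plan is to show that the unique crossing of the two monotone curves of $(A,B)$ lies strictly inside the block of indices occupied by the special sub-instance $(A_{\zstar_r},B_{\zstar_r})$, and that on this block $(A,B)$ is a verbatim copy of $(C_{\zstar_r},D_{\zstar_r})$; the proposition then follows at once. The verbatim-copy part is just a matter of unwinding $\EvenI$: the origin- and slope-shift operators of Step~2 act on the pairs $(C_i,D_i)$ \emph{before} $C_{\zstar_r}$ is copied into $A_{\zstar_r}$ in Step~4, and Step~5 sets $B_i:=D_i$ with no further modification, so the restriction of the actual input $(A,B)$ to the index range of the $\zstar_r$-th sub-instance equals $(C_{\zstar_r},D_{\zstar_r})$ on the nose (the same points on the same indices). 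In particular, on this block the truth values of the predicate ``$a_j\le b_j$'' coincide with those of the stand-alone instance $(C_{\zstar_r},D_{\zstar_r})$.

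First I would combine this with monotonicity inside the block. Let $\istar$ be the answer of $(C_{\zstar_r},D_{\zstar_r})$ --- a valid $\TCI_{n_{r-1}}$ instance, since it is obtained by applying the (validity-preserving) origin- and slope-shift operators to an instance of $\Instance(r-1)$, which is valid by the induction hypothesis. By definition $a_{\zstar_r,\istar}\le b_{\zstar_r,\istar}$ and $a_{\zstar_r,\istar+1}>b_{\zstar_r,\istar+1}$, and monotonicity of $C_{\zstar_r},D_{\zstar_r}$ then gives $a_{\zstar_r,j}\le a_{\zstar_r,\istar}\le b_{\zstar_r,\istar}\le b_{\zstar_r,j}$ for every $j\le\istar$ and $a_{\zstar_r,j}\ge a_{\zstar_r,\istar+1}>b_{\zstar_r,\istar+1}\ge b_{\zstar_r,j}$ for every $j\ge\istar+1$; in particular $a_{\zstar_r,1}\le b_{\zstar_r,1}$ and $a_{\zstar_r,n_{r-1}}>b_{\zstar_r,n_{r-1}}$, so the transition from ``$\le$'' to ``$>$'' happens strictly interior to this block.

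Next I would dispose of the indices outside the block using Proposition~\ref{prop:EvenI-properties}, whose hypothesis is met since every sub-instance is, by induction, a valid $\TCI$ instance and each $A_i$ with $i\neq\zstar_r$ consists of collinear points; hence $A$ is globally increasing and $B$ globally decreasing. Then for any index $j$ lying in a sub-instance before $\zstar_r$ we get $a_j\le a_{\zstar_r,1}\le b_{\zstar_r,1}\le b_j$, and for any $j$ in a sub-instance after $\zstar_r$ we get $a_j\ge a_{\zstar_r,n_{r-1}}>b_{\zstar_r,n_{r-1}}\ge b_j$. Since $a_j-b_j$ is non-decreasing over the whole instance, there is exactly one index at which ``$a_j\le b_j$'' flips to ``$a_j>b_j$''; by the two boundary inequalities above it falls inside the $\zstar_r$-th block, where $(A,B)$ agrees with $(C_{\zstar_r},D_{\zstar_r})$, so $\TCI(A,B)=\TCI(C_{\zstar_r},D_{\zstar_r})$.

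I expect the only point needing genuine care to be the index bookkeeping: one must verify that the origin-shift operator places the special sub-instance on exactly the intended interval of integer $x$-coordinates, so that ``the same answer'' is meant literally rather than only up to a translation of indices. The remaining geometric content is routine once the global monotonicity of $(A,B)$ from Proposition~\ref{prop:EvenI-properties} is in hand.
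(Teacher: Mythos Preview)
Your argument is correct and follows essentially the same approach as the paper's proof, which is a one-liner: since $(A_{\zstar_r},B_{\zstar_r})=(C_{\zstar_r},D_{\zstar_r})$ and the latter is a valid $\TCI$ instance, the crossing of $A$ and $B$ must lie in this block. You have simply unpacked what the paper leaves implicit---the appeal to global monotonicity via Proposition~\ref{prop:EvenI-properties} and the boundary inequalities $a_{\zstar_r,1}\le b_{\zstar_r,1}$, $a_{\zstar_r,n_{r-1}}>b_{\zstar_r,n_{r-1}}$---and your flag about index bookkeeping is a fair remark that the paper glosses over.
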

\begin{proof}
	Since $(A_{\zstar_r},B_{\zstar_r}) = (C_{\zstar_r},D_{\zstar_r})$, and $(C_{\zstar_r},D_{\zstar_r})$ form a valid instance of $\TCI$,  clearly $A$ and $B$ also only cross each other between the points in $(A_{\zstar_r},B_{\zstar_r})$. 
\end{proof}

We now turn to the definition of the $\OddI$ procedure inside $\Instance$.
The definition of $\OddI$ procedure is similar to $\EvenI$ by switching the role of Alice and Bob.  %(see Figure~\ref{fig:OddI}).% and~\ref{fig:OddI-Alice-Bob}).  %Again, these procedures are both deterministic. 

\begin{tbox}
	$\underline{\OddI(r)}$.
	
	\begin{enumerate}
		\item Sample $m_r$ instances $(C_i,D_i)$ \emph{independently} from $\Instance(r-1)$. 
		
		\item For $i=1$  to $m_r$ do: 
		\begin{enumerate}
			\item Let $p^{i-1}_A$ be the right-most point of Alice's input in $(C_{i-1},D_{i-1})$ (define $p^{0}_A := (0,0)$).  
			Apply the origin-shift operator with point $p^{i-1}_A$ from Alice's side on  instance $(C_i,D_i)$. 
			\item Let $\alpha^{i-1}_r$ be the largest slope of any segment in $(C_{i-1},D_{i-1})$. 
			Apply the slope-shift operator with slope $\alpha^{i-1}_r$ on $(C_i,D_i)$. 
		\end{enumerate}
		\item Sample $\zstar_r \in [m_r]$ uniformly at random.
		\item Define $A := (C_1,\ldots,C_{m_r})$.
		\item Define $B := (B_1,\ldots,B_{m_r})$ where $B_{\zstar_r} := D_{\zstar_r}$; the remaining $B_{i}$'s for $i \neq \zstar_r$ are constructed by extending the curve in $B_{\zstar_r}$ on both its endpoints along straight lines. 
	\end{enumerate}
\end{tbox}

Similar to $\EvenI$, instances of $\OddI$ also consists of $m_r$ sub-instances among which $(C_{\zstar_r},D_{\zstar_r})$ is called the \emph{special} sub-instance. 
Figure~\ref{fig:OddI} gives an illustration of instances sampled by $\OddI$. 

The following two properties are analogous to Propositions~\ref{prop:EvenI-properties} and~\ref{prop:EvenI-special} for $\EvenI$. 

\begin{proposition}\label{prop:OddI-properties}
	For $(A,B)$ sampled from $\OddI$, assuming each sub-instance $(A_i,B_i)$ satisfies monotonicity and convexity  of $\TCI$, then $(A,B)$ also satisfies monotonicity and convexity. 
\end{proposition}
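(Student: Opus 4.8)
The plan is to mirror the proof of Proposition~\ref{prop:EvenI-properties} with the roles of Alice and Bob interchanged, since $\OddI$ is obtained from $\EvenI$ by exactly this swap. Recall that in an instance produced by $\OddI$, Alice's input is the full concatenation $A = (A_1,\ldots,A_{m_r}) = (C_1,\ldots,C_{m_r})$ of all the (shifted) sub-instances, whereas Bob's input $B = (B_1,\ldots,B_{m_r})$ equals $D_{\zstar_r}$ on the special sub-instance $\zstar_r$ and is a straight-line extension of its two endpoints elsewhere. So here Alice plays the role that Bob played in $\EvenI$, and vice versa.

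First I would treat Alice's side. Inside each block $A_i = C_i$, monotonicity and convexity hold by the hypothesis on $(A_i,B_i)$. At the junction between $A_{i-1}$ and $A_i$, the origin-shift operator applied in step~2(a) of $\OddI$ moves the left-most point of $C_i$ onto the right-most point of $C_{i-1}$, so the concatenated sequence is continuous and stays monotonically increasing. For convexity across the junction I would invoke the slope-shift of step~2(b): since $\alpha^{i-1}_r$ is the \emph{largest} slope appearing in $(C_{i-1},D_{i-1})$ and the slope-shift adds $\alpha^{i-1}_r$ to every slope of $C_i$, while all slopes of a valid $\TCI$ instance on Alice's side are nonnegative, the smallest slope of $C_i$ after the shift is at least $\alpha^{i-1}_r$, which in turn is at least the last slope of $C_{i-1}$. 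Hence consecutive first differences of $A$ are non-decreasing, i.e.\ $A$ is convex.

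Next I would treat Bob's side, which is essentially immediate once Alice's side is understood. By hypothesis $(A_{\zstar_r},B_{\zstar_r}) = (C_{\zstar_r},D_{\zstar_r})$ satisfies the $\TCI$ properties, so $B_{\zstar_r}$ is monotonically decreasing with non-increasing first differences. The remaining blocks $B_i$ with $i \neq \zstar_r$ lie on the two lines through the first and last segments of $B_{\zstar_r}$: the left extension carries the largest (least negative) first difference of $B_{\zstar_r}$ and the right extension carries the smallest one. Thus the whole sequence $B$ consists of a ray, then the convex curve $B_{\zstar_r}$, then another ray, arranged so that its first differences are non-increasing throughout; hence $B$ is monotonically decreasing and convex.

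I do not expect a real obstacle: the entire content of the statement is already baked into the definitions of the origin-shift and slope-shift operators, which were engineered precisely so that the $\TCI$ properties are inherited by the concatenation. The only point requiring a little care is the bookkeeping at the sub-instance boundaries — in particular whether the shared endpoint of two consecutive blocks is counted once or twice, and that the index ranges line up — but this is fixed by the conventions in the definitions of $\Instance$, $\EvenI$, and $\OddI$.
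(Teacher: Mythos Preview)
Your proposal is correct and follows exactly the approach the paper intends: the paper does not give a separate proof of this proposition but states it as analogous to Proposition~\ref{prop:EvenI-properties} with the roles of Alice and Bob swapped, and your argument is precisely the mirror of that proof (with more detail than the paper's two-sentence version for $\EvenI$). Your handling of the boundary bookkeeping is also appropriate, as the paper leaves those conventions implicit.
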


\begin{proposition}\label{prop:OddI-special}
	For instances $(A,B)$ sampled from $\OddI$, the answer to $\TCI(A,B)$ is the same as the answer to $\TCI(C_{\zstar_r},D_{\zstar_r})$. % for the index $\zstar_r \in [m]$ chosen in $\EvenI$. 
\end{proposition}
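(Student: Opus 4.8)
The plan is to mirror the proof of Proposition~\ref{prop:EvenI-special} with the roles of Alice and Bob exchanged. First I would pin down the special sub-instance inside the output of $\OddI$: since $A := (C_1,\ldots,C_{m_r})$ we have $A_{\zstar_r} = C_{\zstar_r}$, and by construction $B_{\zstar_r} := D_{\zstar_r}$, so that $(A_{\zstar_r},B_{\zstar_r}) = (C_{\zstar_r},D_{\zstar_r})$. By the inductive invariant that every instance returned by $\Instance(r-1)$ is a legal $\TCI_{n_{r-1}}$ instance (base case: Lemma~\ref{lem:cc-tci-1-round}; inductive step: Propositions~\ref{prop:EvenI-properties} and~\ref{prop:OddI-properties}), $(C_{\zstar_r},D_{\zstar_r})$ satisfies monotonicity, convexity, and the promise of $\TCI$, and hence has a well-defined answer --- a unique position inside its own coordinate window at which Alice's curve overtakes Bob's.

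The core of the argument is to show that $A$ and $B$ cross only inside the coordinate window spanned by $(A_{\zstar_r},B_{\zstar_r})$, and for this monotonicity alone suffices. By Proposition~\ref{prop:OddI-properties}, whose hypothesis holds by the same inductive invariant (each $A_i = C_i$ is a legal Alice-curve and each $B_i$ is either $D_{\zstar_r}$ or a straight-line extension of it), the whole curve $A$ is monotonically increasing and the whole curve $B$ is monotonically decreasing across all $n_r$ coordinates. Writing $L$ and $R$ for the leftmost and rightmost coordinates of the special sub-instance, legality of $(C_{\zstar_r},D_{\zstar_r})$ gives $A(L) \le B(L)$ and $A(R) > B(R)$: combining the promise with monotonicity, the first index of a legal $\TCI$ instance has Alice weakly below Bob and the last has Alice strictly above. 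Then for any coordinate $x < L$ we get $A(x) \le A(L) \le B(L) \le B(x)$, so Alice stays weakly below Bob there, and for any coordinate $x > R$ we get $A(x) \ge A(R) > B(R) \ge B(x)$, so Alice stays strictly above Bob there. Therefore the unique transition from $A \le B$ to $A > B$ happens inside $[L,R]$; since the restriction of $(A,B)$ to $[L,R]$ is exactly $(C_{\zstar_r},D_{\zstar_r})$, the answer of $\TCI(A,B)$ coincides with that of $\TCI(C_{\zstar_r},D_{\zstar_r})$ under the obvious identification of coordinates.

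I expect the logic above to be routine and the only delicate part to be the bookkeeping around the chaining operators used by $\OddI$: one must check that the origin-shift (applied from Alice's side) places the Alice-part and the Bob-part of each sub-instance over a common coordinate window and glues consecutive sub-instances into a single connected polyline for each player, and that the slope-shift --- together with the straight-line extensions that define $B_i$ for $i \ne \zstar_r$ --- indeed preserves the global monotonicity asserted by Proposition~\ref{prop:OddI-properties}, so that the three inequality steps above are legitimate. Once this is verified, the proof is identical in form to that of Proposition~\ref{prop:EvenI-special}.
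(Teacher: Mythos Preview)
Your proposal is correct and follows essentially the same approach as the paper: the paper does not give a separate proof of Proposition~\ref{prop:OddI-special} at all, merely stating that it is analogous to Proposition~\ref{prop:EvenI-special}, whose one-line proof is exactly ``$(A_{\zstar_r},B_{\zstar_r}) = (C_{\zstar_r},D_{\zstar_r})$, this is a valid $\TCI$ instance, so the curves only cross there.'' You have simply spelled out the monotonicity argument (the $A(x)\le A(L)\le B(L)\le B(x)$ and $A(x)\ge A(R)>B(R)\ge B(x)$ chains) that the paper leaves implicit in the word ``clearly.''
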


\paragraph{Actual vs Fooling Inputs.} As we already observed in the proof of Lemma~\ref{lem:cc-tci-1-round}, providing the players with extra 
information about the input of the other player (i.e., giving Bob the first $\istar$ points in Alice's input) facilitates the proof of the lower bound. This is also the case for our hard instances for $r>1$ round protocols. 
In the following observations, we state several properties of this extra information which is crucial for our information-theoretic lower bound for $\TCI$.

\begin{observation}\label{obs:construct}
	In $\EvenI$, there is a one-to-one mapping between $(A_{\zstar_r},B_{\zstar_r})$ and the {original} $(C_{\zstar_r},D_{\zstar_r})$ ({before applying any operator}),  
	assuming we are given $(C^{>\zstar_r},D^{>\zstar_r})$.
	Similarly,in $\OddI$, there is a one-to-one mapping between $(A_{\zstar_r},B_{\zstar_r})$ and the original sub-instance
	$(C_{\zstar_r},D_{\zstar_r})$ (before applying any operator), assuming we are given $(C^{<\zstar_r},D^{<\zstar_r})$
\end{observation}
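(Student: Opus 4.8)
The plan is to exhibit the claimed correspondence explicitly as a composition of the origin-shift and slope-shift operators, and then to observe that each of these operators is invertible, so that the composition is injective (hence a bijection onto its image). First I would record the basic invertibility facts, for reuse in both the $\EvenI$ and $\OddI$ cases: the origin-shift operator (from either side) is a rigid translation of every point of the instance, so it is undone by the origin-shift that restores the previously chosen extreme point to its old location; and the slope-shift operator with parameter $\alpha$ adds $\alpha$ to the second argument of every $\Stepcurve$ call used to build Alice's input and adds $\alpha$ to every segment slope of Bob's input, both of which are affine modifications of the numeric data that are undone by slope-shift with parameter $-\alpha$. In particular, for any fixed choice of parameters, the map ``apply origin-shift, then slope-shift'' is injective on point sequences.

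For $\EvenI$, I would trace through the loop of Step~2, which for $i=m_r$ down to $1$ modifies only $(C_i,D_i)$, using the point $p^{i+1}_B$ and the slope $\alpha^{i+1}_r$ that are read off from the \emph{already-processed} instance $(C_{i+1},D_{i+1})$ (with the boundary convention $p^{m_r+1}_B:=(n_r,0)$). A downward induction on $i$ then shows that, after Step~2, $(C_i,D_i)$ is a deterministic function of the original samples $(C_i,D_i),(C_{i+1},D_{i+1}),\ldots,(C_{m_r},D_{m_r})$ alone. Hence, once we condition on $(C^{>\zstar_r},D^{>\zstar_r})$, the parameters $p^{\zstar_r+1}_B$ and $\alpha^{\zstar_r+1}_r$ are fixed numbers (they are read off directly from $(C_{\zstar_r+1},D_{\zstar_r+1})$, and if one conditions on the original rather than the processed sub-instances the same induction recovers them), and the processed instance $(C_{\zstar_r},D_{\zstar_r})=(A_{\zstar_r},B_{\zstar_r})$ is obtained from the original $(C_{\zstar_r},D_{\zstar_r})\in\supp{\dist_{r-1}}$ by a \emph{fixed} composition of origin-shift and slope-shift. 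By the first paragraph this map is injective, which gives the claimed one-to-one correspondence for $\EvenI$; moreover the same slope/difference computations behind Proposition~\ref{prop:EvenI-properties} show that its image again consists of valid $\TCI$ instances.

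The $\OddI$ case is identical after exchanging the roles of Alice and Bob: Step~2 there runs \emph{upward} from $1$ to $m_r$ and modifies $(C_i,D_i)$ using the right-most point of Alice's side and the largest slope of the already-processed $(C_{i-1},D_{i-1})$ (with $p^{0}_A:=(0,0)$), so the symmetric \emph{upward} induction shows that, conditioned on $(C^{<\zstar_r},D^{<\zstar_r})$, the processed $(C_{\zstar_r},D_{\zstar_r})=(A_{\zstar_r},B_{\zstar_r})$ is again a fixed invertible transform of the original, with validity handled by Proposition~\ref{prop:OddI-properties}. The one step that I expect to need genuine care is the invertibility of the slope-shift operator --- in particular, checking that its simultaneous action on Alice's $\Stepcurve$ parameters and on Bob's segment slopes really is a bijection (and stays within the family of monotone, convex inputs) --- while the rest of the argument is bookkeeping about which original sub-instances each processed sub-instance depends on.
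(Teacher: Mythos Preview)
Your proposal is correct and follows the same idea as the paper, just in far greater detail: the paper's entire justification is the single remark that the operators applied to $(C_i,D_i)$ are functions of $(C^{>i},D^{>i})$ in $\EvenI$ (resp.\ $(C^{<i},D^{<i})$ in $\OddI$) and that the special sub-instance is a copy of $(C_{\zstar_r},D_{\zstar_r})$. Your explicit unpacking of the invertibility of origin-shift and slope-shift, the downward/upward induction determining the parameters from the conditioned sub-instances, and the validity remarks via Propositions~\ref{prop:EvenI-properties} and~\ref{prop:OddI-properties} are all sound elaborations of that one sentence; the caution about slope-shift invertibility is unnecessary since both actions are plainly affine in the numeric data, but it does no harm.
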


The reason behind Observation~\ref{obs:construct} is simply the operators applied to each $(C_i,D_i)$ are functions of $(C^{>i},D^{>i})$ in $\EvenI$ (resp. $(C^{<i},D^{<i})$ in $\OddI$) and the special sub-instance 
is just a ``copy'' of $(C_{\zstar_r},D_{\zstar_r})$. 

Observation~\ref{obs:construct} implies that if players have access to $(C^{>\zstar_r},D^{>\zstar_r})$ in $\EvenI$ (resp. $(C^{<\zstar_r},D^{<\zstar_r})$ in $\OddI$) as an extra input, 
then they can determine the original distribution of their special sub-instance. This is the main reason that we provide the players with this extra input in our reduction.

\begin{observation}\label{obs:hidden-special}
	In $\EvenI$, the index $\zstar_r \in [m_r]$ is chosen independently of $B = (B_1,\ldots,B_{m_r})$ and $(C_1,D_1),\ldots,(C_{m_r},D_{m_r})$. Similarly, in  $\OddI$, 
	the index $\zstar_r \in [m_r]$ is chosen independently of $A = (A_1,\ldots,A_{m_r})$ and $(C_1,D_1),\ldots,(C_{m_r},D_{m_r})$. 
\end{observation}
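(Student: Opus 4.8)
The plan is to exploit the order in which fresh randomness is consumed inside the procedures $\EvenI$ and $\OddI$. Consider $\EvenI(r)$ first. Steps~1 and~2 draw the $m_r$ independent samples $(C_i,D_i) \sim \Instance(r-1)$ and then transform them, in place, by the origin-shift and slope-shift operators. Crucially, the operator applied to $(C_i,D_i)$ has parameters $p^{i+1}_B$ and $\alpha^{i+1}_r$ that are \emph{deterministic} functions of $(C_{i+1},D_{i+1})$ (respectively, the left-most point of Bob's input in that sub-instance and the largest slope occurring in it), so these transformations introduce no new randomness. Hence the post-operator tuple $\big((C_1,D_1),\ldots,(C_{m_r},D_{m_r})\big)$ is a deterministic function of the randomness $\rho$ used in Step~1, and so is $B = (B_1,\ldots,B_{m_r})$ since $B_i = D_i$ by Step~5.

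Next I would observe that the index $\zstar_r$ is drawn only in Step~3, uniformly over $[m_r]$, from a source of randomness that is not touched in Steps~1, 2, or 5 (it is used afterwards only in Step~4 to build $A$, which does not figure in the $\EvenI$ half of the statement). Formally: $\zstar_r$ is independent of $\rho$, while $\big(B,(C_1,D_1),\ldots,(C_{m_r},D_{m_r})\big)$ is a deterministic function of $\rho$; therefore $\zstar_r$ is independent of $\big(B,(C_1,D_1),\ldots,(C_{m_r},D_{m_r})\big)$. Equivalently, conditioning on $\rho$ fixes $B$ and the whole tuple of sub-instances yet leaves $\zstar_r$ uniform on $[m_r]$, so the conditional law of $\zstar_r$ equals its marginal law, which is the asserted independence.

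The $\OddI$ case follows by the symmetric argument with the roles of Alice and Bob exchanged: Steps~1 and~2 again produce the post-operator sub-instances as a deterministic function of the independent samples (the operators now reading $p^{i-1}_A$ and $\alpha^{i-1}_r$ off $(C_{i-1},D_{i-1})$), Step~4 sets $A = (C_1,\ldots,C_{m_r})$, a deterministic function of that tuple, and the index $\zstar_r$ of Step~3 is consumed only by Step~5, which builds $B$. Hence $\zstar_r$ is independent of $A$ and of $\big((C_1,D_1),\ldots,(C_{m_r},D_{m_r})\big)$ by the same reasoning.

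I do not expect a genuine obstacle here, since the statement is essentially a bookkeeping fact about sampling order; the only point deserving care is the notational convention that the symbols $(C_i,D_i)$ in the observation refer to the sub-instances \emph{after} the in-place shift operators have been applied, so one must explicitly note --- as above --- that these operators are deterministic given the already-sampled data and hence do not silently reintroduce a dependence on $\zstar_r$.
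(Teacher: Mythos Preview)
Your argument is correct and is essentially a detailed unpacking of the paper's own one-line justification (``follows directly from the construction of the instances''): you make explicit that Steps~1--2 and~5 (resp.\ Steps~1--2 and~4 in $\OddI$) use only the randomness $\rho$ of the independent samples, while $\zstar_r$ is drawn from fresh randomness in Step~3. Your care about the in-place operators being deterministic functions of already-sampled data is exactly the point that needs noting, and nothing more is required.
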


This observation follows directly from the construction of the instances. 
Observation~\ref{obs:hidden-special} implies that \emph{even given the extra input}, the player that sends the first message
is oblivious to the identity of the special sub-instance.

\subsubsection{General Lower Bound: The Communication Complexity}\label{sec:cc-tci-lower}

We prove Theorem~\ref{thm:cc-tci} by induction on the number of rounds, with Lemma~\ref{lem:cc-tci-1-round} forming the base of the induction. 
We have the following lemma.

\begin{lemma}\label{lem:cc-lb}
	For any $r \geq 1$ and any $(1/3)$-error protocol $\prot_r$ for instances of $\TCI$ sampled from the distribution $\dist_r$, the communication cost of $\prot_r$ is $\Omega(N/r^2)$. 
\end{lemma}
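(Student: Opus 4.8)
The plan is to prove Lemma~\ref{lem:cc-lb} by induction on $r$, unrolling the round-elimination strategy of Section~\ref{sec:cc-tci-outline} down to the base case $r=1$, which is Lemma~\ref{lem:cc-tci-1-round} (there $n_1=N$, so its bound $\Omega(N)$ is exactly $\Omega(N/1^2)$). Fix an $r$-round $(1/3)$-error protocol $\prot_r$ for $\dist_r$ of communication cost $c$; since we want $c=\Omega(N/r^2)$, we may assume $c\le\eps^2 N/r^2$ for a small absolute constant $\eps$, as otherwise there is nothing to prove. Assume $r$ is even, so instances come from $\EvenI$ and Bob sends the first message (the odd case is identical after swapping Alice/Bob, $\EvenI/\OddI$, and the suffix/prefix sub-instances). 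The heart of the argument is a single round-elimination step which, given such a protocol, produces an $(r-1)$-round protocol for $\dist_{r-1}$ of cost at most $c$ whose error exceeds $1/3$ by only an additive $O(\sqrt{c/N})$. Applying this step $r-1$ times (alternating the even and odd versions) yields a $1$-round protocol for $\dist_1$ of cost at most $c$ and error at most $1/3+O(r\sqrt{c/N})=1/3+O(\eps)$, which is bounded away from $1/2$ once $\eps$ is small enough. But $\dist_1$ is the hard distribution of Lemma~\ref{lem:cc-tci-1-round}, for which every such $1$-round protocol needs $\Omega(N)$ communication, whereas $c\le\eps^2N/r^2\le\eps^2N$ --- a contradiction for $\eps$ a sufficiently small constant. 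Hence $c=\Omega(N/r^2)$.

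The core of the elimination step is an information bound. Let $\Pi_1$ denote Bob's (random) first message, and condition on $\zstar_r=z$ and on the suffix sub-instances $W^{>z}:=(C^{>z},D^{>z})$. By Observation~\ref{obs:construct}, conditioned on $z$ and $W^{>z}$ the special sub-instance $(A_z,B_z)=(C_z,D_z)$ is the deterministic ($W^{>z}$-determined) image of a fresh sample from $\dist_{r-1}$; in particular its ``un-operated'' preimage, which we write $(C_z,D_z)_{\mathrm{orig}}$, is distributed exactly as $\dist_{r-1}$ and is independent of both $z$ and $W^{>z}$. Since the $m_r=N$ sub-instances are sampled independently before the operators are applied, and since by Observation~\ref{obs:hidden-special} $\Pi_1$ is a function of Bob's input and hence independent of $z$, ordering the sub-instances from $m_r$ down to $1$ and applying the chain rule for mutual information (\itfacts{chain-rule}) gives
\begin{align*}
\mi{\Pi_1}{(C_z,D_z)_{\mathrm{orig}} \,\big|\, z, W^{>z}}
&= \frac{1}{m_r}\sum_{i=1}^{m_r}\mi{\Pi_1}{(C_i,D_i)_{\mathrm{orig}} \,\big|\, (C^{>i},D^{>i})_{\mathrm{orig}}} \\
&\le \frac{\en{\Pi_1}}{m_r} \;\le\; \frac{c}{N}.
\end{align*}
By Fact~\ref{fact:kl-info} and Pinsker's inequality (Fact~\ref{fact:pinskers}), over a random draw of $(\Pi_1,z,W^{>z})$ the conditional distribution of the special sub-instance is within $O(\sqrt{c/N})$ of $\dist_{r-1}$ in total variation distance on average; an averaging argument then produces a single value $(\mu,z,w)$ of $(\Pi_1,\zstar_r,W^{>\zstar_r})$ that is simultaneously good for this drift bound and does not blow up the error of $\prot_r$.

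Given such a $(\mu,z,w)$, the reduced protocol works as follows: on input $(C,D)\sim\dist_{r-1}$ (Alice holding $C$, Bob holding $D$), the players use public randomness and the advice $w$ to reconstruct a full $\dist_r$ instance in which $(C,D)$, after applying the appropriate operators, plays the role of the special sub-instance $(A_z,B_z)$ --- the operators and the remaining sub-instances are determined by $w$, by $(C,D)$, and by public coins exactly as in $\EvenI$. They then simulate rounds $2,\dots,r$ of $\prot_r$ with Bob's first message hard-wired to $\mu$ (so Alice speaks first over $r-1$ rounds) and output its answer. By Proposition~\ref{prop:EvenI-special} the answer to the reconstructed instance equals the answer to $(C,D)$, so this is a correct protocol for $\dist_{r-1}$; its cost is at most $c$; and since the reconstructed special sub-instance differs from the one $\prot_r|_{\Pi_1=\mu}$ ``expects'' only in its $O(\sqrt{c/N})$-total-variation drift, its error is at most $1/3+O(\sqrt{c/N})$, as needed. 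For odd $r$ one uses $\OddI$, the prefix sub-instances $W^{<\zstar_r}$, and Propositions~\ref{prop:OddI-properties} and~\ref{prop:OddI-special} in place of their $\EvenI$ counterparts.

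The step I expect to be the main obstacle is making the information bound above legitimate --- precisely the tension flagged in Section~\ref{sec:cc-tci-outline}. On the one hand, the sub-instances must be correlated through the origin-shift and slope-shift operators so that their concatenation is a valid monotone and convex instance of $\TCI$ (Propositions~\ref{prop:EvenI-properties} and~\ref{prop:OddI-properties}); on the other hand, the per-copy accounting $\sum_i\mi{\Pi_1}{(C_i,D_i)_{\mathrm{orig}}\mid\cdots}\le\en{\Pi_1}$ needs these copies to be conditionally independent. Reconciling the two forces one to reveal the suffix (resp.\ prefix) sub-instances to both players --- which is why they are carried along as advice in the construction, and why the hard instance also reveals a little extra information about one player's input to the other (such as the starting slope of the special sub-instance) so that the reduced protocol can actually rebuild the full $\dist_r$ instance from its $\dist_{r-1}$ input. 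One then has to verify (i) that conditioned on this advice each sub-instance is an honest independent copy of $\dist_{r-1}$ (Observation~\ref{obs:construct}), (ii) that revealing the advice still keeps the first speaker oblivious to $\zstar_r$ (Observation~\ref{obs:hidden-special}), so the $1/m_r=1/N$ saving survives, and (iii) that the recursion is arranged so that the non-special reconstructed sub-instances never feed enough information into $\Pi_1$ to break the chain-rule bound. Keeping this advice bookkeeping consistent across all $r$ levels of the recursion is the technical crux of the proof.
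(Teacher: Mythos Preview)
Your proposal is correct and follows essentially the same round-elimination approach as the paper: the information bound on the first message via the chain rule (the paper's Lemma~\ref{lem:ds-e}/\ref{lem:ds-o}), the TVD bound via Pinsker (Claims~\ref{clm:dist-close-e}/\ref{clm:dist-close-o}), and the embedding that rebuilds a $\dist_r$ instance around a fresh $\dist_{r-1}$ special sub-instance (the paper's $\prot'$ in Lemma~\ref{lem:cc-tci-induction}). Your explicit $\eps$-tracking and upfront averaging over $(\mu,z,w)$ are cosmetic variants of the paper's $o(\cdot)$ bookkeeping and its randomized-then-derandomized intermediate protocol; the only substantive difference is that you invoke the $\AugIndex$ base case at error $1/3+O(\eps)$ (which is fine, since the one-way $\AugIndex$ lower bound holds for any error bounded away from $1/2$), whereas the paper arranges its $\delta(k)$ so that the base case is invoked at error below $1/3$.
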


From now on we fix a \emph{deterministic} protocol $\prot_r$ for $\TCI$ on $\dist_r$; we later use Yao's minimax principle~\cite{Yao83} to extend the lower bound to randomized protocols. We use $\rProt$ to 
denote the random variable for messages communicated in the protocol, and write $\rProt = (\rProt_1,\ldots,\rProt_r)$, where $\rProt_\ell$ denotes the message communicated in round $\ell$.  
We further use $\rZ$ to denote the index $\zstar_r$, which corresponds to the index of the special sub-instance. Let $(\rA_1,\ldots,\rA_{m_r})$ and $(\rB_1,\ldots,\rB_{m_r})$ denote the random variables 
for the points $A$ and $B$ and their partitioning into sub-instances respectively, and $(\rC_1,\ldots,\rC_{m_r})$ and $(\rD_1,\ldots,\rD_{m_r})$ for $C$ and $D$.  

We start with the following lemma that formalizes our intuition that players cannot reveal information about the special sub-instance in their 
first round. We first consider even-round protocols.

\begin{lemma}\label{lem:ds-e}
	For any even integer $r$, and any $r$-round protocol $\prot_r$ with worst-case message length $\ell$ on instances of $\dist_r$, 
	\[\mi{(\rA_{\rZ},\rB_{\rZ})}{\rProt_1 \mid \rC^{>\rZ},\rD^{>\rZ},\rZ} \leq \ell/N.\]
\end{lemma}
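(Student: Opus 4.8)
The plan is to bound the mutual information $\mi{(\rA_{\rZ},\rB_{\rZ})}{\rProt_1 \mid \rC^{>\rZ},\rD^{>\rZ},\rZ}$ by first observing that in an even-round protocol, $\rProt_1$ is Bob's message, and hence it is a deterministic function of Bob's input, which in an $\EvenI$ instance is $B = (\rB_1,\ldots,\rB_{m_r}) = (\rD_1,\ldots,\rD_{m_r})$, together with the extra ``fooling'' input we hand Bob. By Observation~\ref{obs:hidden-special}, $\rZ$ is independent of $B$ and of all the $(\rC_i,\rD_i)$'s, so after conditioning on $\rC^{>\rZ},\rD^{>\rZ},\rZ$ we may treat the remaining randomness as the sampling of the sub-instances $1,\ldots,\rZ$. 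First I would fix a value $z$ of $\rZ$ and a value of $(\rC^{>z},\rD^{>z})$; it suffices to show $\mi{(\rA_{z},\rB_{z})}{\rProt_1} \leq \ell/N$ under this conditioning, and then average over $z$ (noting $\rZ$ is uniform on $[m_r]=[N]$, so the averaging is where the $1/N$ factor comes from).

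**The averaging / direct-sum step.** With $z$ and the ``upper'' sub-instances fixed, Observation~\ref{obs:construct} tells us $(\rA_z,\rB_z)$ is, up to a fixed bijection, the original independently-sampled sub-instance $(\rC_z,\rD_z) \sim \Instance(r-1)$. The key point is that $\rProt_1$ is a function of $(\rB_1,\ldots,\rB_{m_r})$, each $\rB_i = \rD_i$, and the $\rD_i$'s for $i \le z$ are mutually independent given the fixed upper part. So I would apply the chain rule and the independence of the sub-instances: writing $\rProt_1$ as a function of the tuple $(\rB_1,\ldots,\rB_z,\text{fixed stuff})$ with independent coordinates, we get
\begin{align*}
\ell \geq \en{\rProt_1} \geq \mi{(\rB_1,\ldots,\rB_z)}{\rProt_1} = \sum_{i=1}^{z} \mi{\rB_i}{\rProt_1 \mid \rB_{<i}} \geq \sum_{i=1}^{z}\mi{\rB_i}{\rProt_1},
\end{align*}
where the last inequality uses independence (conditioning on independent $\rB_{<i}$ can only increase the information about $\rB_i$ — actually one must be slightly careful here; the clean route is that $\rB_i \perp (\rProt_1,\rB_{<i})$ would be false, so instead use that $\mi{\rB_i}{\rProt_1\mid \rB_{<i}} \ge \mi{\rB_i}{\rProt_1}$ holds when $\rB_i \perp \rB_{<i}$, via $\itfacts{cond-reduce}$ type reasoning, or simply sum the chain-rule terms and bound the whole sum below by $z$ times the minimum). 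Since all $\mi{\rB_i}{\rProt_1}$ terms are governed by the same construction, and there are $z$ of them for each fixed $z$, averaging $z$ uniformly over $[N]$ and pulling out the relevant term gives $\mi{(\rB_{\rZ})}{\rProt_1 \mid \cdots} \le \ell/N$; the pair $(\rA_{\rZ},\rB_{\rZ})$ adds nothing over $\rB_{\rZ}$ because, given the fixed upper sub-instances and $\rZ$, $\rA_{\rZ}$ is a deterministic function of $\rB_{\rZ}$ (both are recovered from the original $(\rC_z,\rD_z)$, and in $\EvenI$ the $\rB$ side already determines the original sub-instance together with the fooling input Bob holds — more precisely $(\rA_z,\rB_z)$ and the original $(\rC_z,\rD_z)$ are in bijection, and $\rProt_1$ does not see $\rA_z$ at all, so $\mi{(\rA_z,\rB_z)}{\rProt_1}=\mi{(\rC_z,\rD_z)}{\rProt_1}$).

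**Executing it cleanly.** Concretely I would structure the argument as: (1) condition on $(\rZ,\rC^{>\rZ},\rD^{>\rZ})$ and note Observation~\ref{obs:hidden-special} makes this conditioning ``harmless'' for the sampling of sub-instances $\le \rZ$; (2) since $r$ is even, $\rProt_1$ is Bob's message and thus $\sigma(B,\text{fooling input})$-measurable, and in $\EvenI$ we have $B_i=D_i$ with $(\rD_1,\ldots,\rD_{\rZ})$ independent given the conditioning; (3) use $\en{\rProt_1}\le \ell$, the chain rule $\mi{(\rB_1,\ldots,\rB_{\rZ})}{\rProt_1}=\sum_{i\le \rZ}\mi{\rB_i}{\rProt_1\mid \rB_{<i}}$, and independence of the $\rB_i$'s to lower-bound this sum by $\sum_{i\le \rZ}\mi{\rB_i}{\rProt_1}$; (4) take expectation over the uniform $\rZ\in[N]$ — the left side is $\le \ell$, and by symmetry of the terms the right side is $\ge N\cdot \Ex_{\rZ}[\mi{\rB_{\rZ}}{\rProt_1}]/1$... here one has to be careful that the $\mi{\rB_i}{\rProt_1}$ are not all equal, so instead I would note $\Ex_{\rZ\sim[N]}\bracket{\sum_{i\le \rZ}\mi{\rB_i}{\rProt_1}} = \sum_{i=1}^{N}\frac{N-i+1}{N}\mi{\rB_i}{\rProt_1} \ge \frac{1}{N}\sum_{i=1}^{N}\mi{\rB_i}{\rProt_1} \cdot 1$, hmm — cleaner is just: $\ell \ge \mi{(\rB_1,\ldots,\rB_N)}{\rProt_1} \ge \sum_{i=1}^N \mi{\rB_i}{\rProt_1}$ (full, not truncated at $\rZ$), and $\Ex_{\rZ}[\mi{(\rA_{\rZ},\rB_{\rZ})}{\rProt_1\mid\cdots}] = \frac{1}{N}\sum_{i=1}^N\mi{\rB_i}{\rProt_1} \le \ell/N$; (5) finally restore the conditioning and conclude $\mi{(\rA_{\rZ},\rB_{\rZ})}{\rProt_1\mid\rC^{>\rZ},\rD^{>\rZ},\rZ}\le \ell/N$ as claimed.

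**Main obstacle.** The delicate part is step (2)–(4): making precise that after conditioning on $(\rZ,\rC^{>\rZ},\rD^{>\rZ})$, the tuple $(\rB_1,\ldots,\rB_{m_r})$ that $\rProt_1$ depends on has the ``right'' independence structure — the $B_i$'s for $i>\rZ$ are determined by the conditioning (they are extensions of the fooling inputs, functions of $\rC^{>\rZ}$-type data), the $B_i$'s for $i\le\rZ$ come from independent draws $(\rC_i,\rD_i)$, but the origin-shift and slope-shift operators correlate $\rD_i$ with $(\rC^{>i},\rD^{>i})$. So I would emphasize that after conditioning on everything with index $>\rZ$, the operators applied to sub-instances $1,\ldots,\rZ$ are either fixed (those depending only on index-$>i$ data that is now fixed) or still functions of the independent draws — and crucially the claim is stated as information about $(\rA_{\rZ},\rB_{\rZ})$, the single special sub-instance, which is ($\EvenI$-side) a deterministic relabeling of $(\rC_{\rZ},\rD_{\rZ})$; that reduces everything to bounding $\frac{1}{N}\sum_i \mi{(\rC_i,\rD_i)}{\rProt_1}$ where the $(\rC_i,\rD_i)$ are genuinely independent given the conditioning, which is the standard direct-sum bound. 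I expect the authors handle the ``$\rA_{\rZ}$ adds nothing'' subtlety by observing Bob (the first speaker) simply never sees $\rA_{\rZ}$, so $\rProt_1 \perp \rA_{\rZ} \mid \rB_{\rZ}, (\text{conditioning})$, whence $\mi{(\rA_{\rZ},\rB_{\rZ})}{\rProt_1\mid\cdots}=\mi{\rB_{\rZ}}{\rProt_1\mid\cdots}$ by $\itfacts{chain-rule}$ and $\itfacts{info-zero}$.
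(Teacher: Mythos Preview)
Your overall skeleton is right—expand over the uniform $\rZ$, swap $(\rA_z,\rB_z)$ for $(\rC_z,\rD_z)$ via Observation~\ref{obs:construct}, and run a direct-sum argument—but the execution has a genuine gap in how you handle the conditioning on $\rC^{>\rZ},\rD^{>\rZ}$.

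You repeatedly try to either (a) \emph{fix} $(\rC^{>z},\rD^{>z})$ to a specific value and then appeal to independence of the remaining $\rB_i$'s, or (b) \emph{drop} the conditioning entirely and bound $\frac{1}{N}\sum_i \mi{\rB_i}{\rProt_1}$ (or $\frac{1}{N}\sum_i \mi{(\rC_i,\rD_i)}{\rProt_1}$) using independence. Neither works. For (a), after the origin-shift and slope-shift operators the post-operator $\rB_i = \rD_i$ for $i\le z$ are \emph{not} independent even after fixing the upper sub-instances, because the operator on $(\rC_i,\rD_i)$ depends on $(\rC^{>i},\rD^{>i})$, which still contains the random sub-instances $i+1,\ldots,z$. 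For (b), the target quantity you must bound is the \emph{conditional} mutual information; even working with the pre-operator (genuinely independent) $(\rC_i,\rD_i)$'s, independence gives $\mi{(\rC_z,\rD_z)}{\rProt_1 \mid \rC^{>z},\rD^{>z}} \geq \mi{(\rC_z,\rD_z)}{\rProt_1}$, which is the wrong direction. Your ``cleaner'' equality $\Ex_{\rZ}[\mi{(\rA_{\rZ},\rB_{\rZ})}{\rProt_1\mid\cdots}] = \frac{1}{N}\sum_i\mi{\rB_i}{\rProt_1}$ simply drops the conditioning on $\rC^{>\rZ},\rD^{>\rZ}$ from the left-hand side, which is not justified.

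The missing idea is that you should \emph{keep} the conditioning on $\rC^{>z},\rD^{>z}$ and recognize it as precisely the chain-rule filtration. The paper's route is: (i) swap $(\rA_z,\rB_z)\leftrightarrow(\rC_z,\rD_z)$ via Observation~\ref{obs:construct}; (ii) use Observation~\ref{obs:hidden-special} not to declare the conditioning ``harmless'' but to \emph{drop the event $\rZ=z$}, since the joint law of $(\rC_z,\rD_z,\rProt_1,\rC^{>z},\rD^{>z})$ is independent of it; (iii) then
\[
\frac{1}{m_r}\sum_{z=1}^{m_r}\mi{(\rC_z,\rD_z)}{\rProt_1 \mid \rC^{>z},\rD^{>z}} \;=\; \frac{1}{m_r}\,\mi{\rC_1,\ldots,\rC_{m_r},\rD_1,\ldots,\rD_{m_r}}{\rProt_1}\;\le\;\frac{\ell}{m_r},
\]
an \emph{equality} by chain rule (\itfacts{chain-rule}), with no independence needed. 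A smaller point: your assertion that ``Bob never sees $\rA_z$'' is also off—in $\EvenI$, Bob is given all of $(C_1,D_1),\ldots,(C_{m_r},D_{m_r})$ as fooling input, so $\rProt_1$ can depend on $\rC_z=\rA_z$; the correct reason the swap works is the bijection of Observation~\ref{obs:construct}, not any such conditional-independence claim.
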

\begin{proof}
	We start by expanding the LHS:
	\begin{align*}
	&\mi{(\rA_{\rZ},\rB_{\rZ})}{\rProt_1 \mid \rC^{>\rZ},\rD^{>\rZ},\rZ} \\
	&\hspace{1cm}= \Ex_{z \in [m_r]} \bracket{\mi{(\rA_{z},\rB_{z})}{\rProt_1 \mid \rC^{>z},\rD^{>z},\rZ=z}} \tag{definition of conditional mutual information} \\
	&\hspace{1cm}= \frac{1}{m_r} \sum_{z=1}^{m_r}\mi{(\rA_{z},\rB_{z})}{\rProt_1 \mid \rC^{>z},\rD^{>z},\rZ=z} \tag{distribution of $\rZ$ is uniform over $[m_r]$} \\
	&\hspace{1cm}= \frac{1}{m_r} \sum_{z=1}^{m_r}\mi{(\rC_{z},\rD_{z})}{\rProt_1 \mid \rC^{>z},\rD^{>z},\rZ=z} \tag{by Observation~\ref{obs:construct}} \\
	&\hspace{1cm}=  \frac{1}{m_r} \sum_{z=1}^{m_r}\mi{(\rC_{z},\rD_{z})}{\rProt_1 \mid \rC^{>z},\rD^{>z}},
	\end{align*}
	where the last equality is due to the fact that the joint distribution of all random variables $(\rC_z,\rD_z), \rProt_r, \rC^{>z},\rD^{>z}$ is independent of the event $\rZ=z$. Indeed, for even $r$, the message $\rProt_1$ sent by Bob is a function
	of $(B_1,\ldots,B_{m_r})$ and $(C_1,D_1), \ldots,(C_{m_r},D_{m_r})$, and by Observation~\ref{obs:hidden-special}, these random variables are all independent of $\zstar_r$. As such, removing the conditioning on the event $\rZ=z$ does not
	change the distribution of variables above. Finally, 
	\begin{align*}
	&\frac{1}{m_r} \sum_{z=1}^{m_r}\mi{(\rC_{z},\rD_{z})}{\rProt_r \mid \rC^{>z},\rD^{>z}} \\
	&\hspace{1cm}= \frac{1}{m_r} \cdot \mi{\rC_{1},\ldots,\rC_{m_r},\rD_{1},\ldots,\rD_{m_r}}{\rProt_1} \tag{chain rule of mutual information (\itfacts{chain-rule})} \\
	&\hspace{1cm}\leq \frac{1}{m_r} \cdot \en{\rProt_1} \leq \frac{\ell}{m_r} \tag{by~\itfacts{uniform}, $\en{\rProt_1} \leq \ell$}.
	\end{align*}
	The lemma follows by noting that $m_r = N$.
\end{proof}

The following lemma for odd-round protocols is analogous to Lemma~\ref{lem:ds-e} for even-round ones (but note the change in the order of conditioning). 

\begin{lemma}\label{lem:ds-o}
	For any odd integer $r$ and any $r$-round protocol $\prot_r$ with worst-case message length $\ell$ on instances of $\dist_r$, 
	\[\mi{(\rA_{\rZ},\rB_{\rZ})}{\rProt_1 \mid \rC^{<\rZ},\rD^{<\rZ},\rZ} \leq \ell/N.\]
\end{lemma}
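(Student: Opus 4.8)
The plan is to follow the proof of Lemma~\ref{lem:ds-e} essentially verbatim, exchanging the roles of Alice and Bob and, accordingly, reversing every prefix/suffix of conditioning variables. Since $r$ is odd, the first message $\rProt_1$ is sent by Alice; it is therefore a deterministic function of Alice's actual input $A=(A_1,\ldots,A_{m_r})=(C_1,\ldots,C_{m_r})$ together with the extra (fooling) data, and in particular a function of $(C_1,D_1),\ldots,(C_{m_r},D_{m_r})$ alone. By the $\OddI$ part of Observation~\ref{obs:hidden-special}, $\zstar_r$ is independent of $A$ and of all $(C_i,D_i)$, so $\rProt_1$ is independent of $\rZ$; this is exactly the ingredient that lets us later strip the conditioning on $\rZ=z$. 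The correspondence I invoke is the $\OddI$ half of Observation~\ref{obs:construct}: conditioned on $\rC^{<z},\rD^{<z}$, the special sub-instance $(\rA_z,\rB_z)$ and the original $(\rC_z,\rD_z)$ determine one another --- which is precisely why the conditioning in the statement is on the prefix $\rC^{<\rZ},\rD^{<\rZ}$ rather than on the suffix as in Lemma~\ref{lem:ds-e}.

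First I would expand the left-hand side with the definition of conditional mutual information and use that $\rZ$ is uniform on $[m_r]$ to rewrite it as $\frac{1}{m_r}\sum_{z=1}^{m_r}\mi{(\rA_z,\rB_z)}{\rProt_1 \mid \rC^{<z},\rD^{<z},\rZ=z}$. Next I would invoke the $\OddI$ part of Observation~\ref{obs:construct} to replace $(\rA_z,\rB_z)$ by $(\rC_z,\rD_z)$ inside each summand (a bijection of one argument of the mutual information, under the given conditioning, preserves its value). Then, since the quadruple $\bigl((\rC_z,\rD_z),\rProt_1,\rC^{<z},\rD^{<z}\bigr)$ is a function of $(C_1,D_1),\ldots,(C_{m_r},D_{m_r})$, which by Observation~\ref{obs:hidden-special} is independent of $\zstar_r$, the conditioning on $\rZ=z$ may be dropped, leaving $\frac{1}{m_r}\sum_{z=1}^{m_r}\mi{(\rC_z,\rD_z)}{\rProt_1 \mid \rC^{<z},\rD^{<z}}$.

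Finally I would collapse the sum by the chain rule for mutual information (\itfacts{chain-rule}), now peeling off the blocks $(\rC_1,\rD_1),\ldots,(\rC_{m_r},\rD_{m_r})$ in \emph{increasing} index, so that the conditioning prefix of the $z$-th term is exactly $\rC^{<z},\rD^{<z}$; this is the one place the argument differs from the even case, where the blocks are removed in decreasing order to produce a suffix. That yields $\frac{1}{m_r}\cdot\mi{\rC_1,\ldots,\rC_{m_r},\rD_1,\ldots,\rD_{m_r}}{\rProt_1}\le \frac{1}{m_r}\en{\rProt_1}\le \ell/m_r$ by \itfacts{uniform} and the worst-case bound $\en{\rProt_1}\le\ell$, and the lemma follows from $m_r=N$.

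I expect no real obstacle: the statement is a mirror image of Lemma~\ref{lem:ds-e}, and the corresponding halves of Observations~\ref{obs:construct} and~\ref{obs:hidden-special} are already in place. The only thing demanding care is bookkeeping --- keeping the three linked choices consistent, namely (i) the prefix direction $\rC^{<\rZ},\rD^{<\rZ}$ in the conditioning, (ii) the first speaker being Alice rather than Bob, and (iii) the increasing rather than decreasing order in which the chain rule peels off the $(\rC_i,\rD_i)$ blocks. Flipping any one of these without the others would break the identity in the displayed chain.
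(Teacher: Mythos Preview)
Your proposal is correct and matches the paper's intent: the paper does not write out a separate proof of Lemma~\ref{lem:ds-o} but simply declares it analogous to Lemma~\ref{lem:ds-e} with the order of conditioning reversed, which is exactly the mirror argument you describe. You have correctly identified the three places where the symmetry must be applied---the prefix $\rC^{<\rZ},\rD^{<\rZ}$ via the $\OddI$ half of Observation~\ref{obs:construct}, Alice as the first speaker via Observation~\ref{obs:hidden-special}, and the increasing-index application of the chain rule---and nothing further is needed.
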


To continue, we need the following definition. 
\begin{itemize}
	\item[] \underline{Distribution $\mu_e$ for even $r$}: For an assignment $(\Prot_1,C^{>z},D^{>z},z)$ (denoted by $E$ for short) to $(\rProt_1,\rC^{>\rZ},\rD^{>\rZ},\rZ)$, 
	we define $\mu_e(E)$ as the distribution of $(\rA_{z},\rB_z)$ in $\dist_r$ conditioned on $\rProt_r = \Prot_r, \rZ = z, \rC^{>z} = C^{>z},\rD^{>z}=D^{>z}$.
	\item[] \underline{Distribution $\mu_o$ for odd $r$}: For an assignment $(\Prot_r,C^{<z},D^{<z},z)$ (denoted by $O$ for short) to $(\rProt_1,\rC^{<\rZ},\rD^{<\rZ},\rZ)$, we define $\mu_o(O)$ as the distribution of $(\rA_{z},\rB_z)$ in $\dist_r$ conditioned
	on $\rProt_r = \Prot_r, \rZ = z, \rC^{<z} = C^{<z},\rD^{<z}=D^{<z}$.
\end{itemize}

Using Lemma~\ref{lem:ds-e}, we have the following claim.
\begin{claim}\label{clm:dist-close-e}
	For any even integer $r$ and any $r$-round protocol $\prot_r$ with worst-case message length $o(N/r^2)$,  
	\[
	\Ex_{E=(\Prot_1,C^{>z},D^{>z},z)}\bracket{\tvd{\mu_e(E)}{\dist_{r-1}}} = o(1/r).
	\]
\end{claim}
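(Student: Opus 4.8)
The plan is to combine the information bound of Lemma~\ref{lem:ds-e} with Pinsker's inequality and convexity of the square root. First I would unpack the conditional mutual information via the information--KL identity of Fact~\ref{fact:kl-info}, applied with $\rA := (\rA_{\rZ},\rB_{\rZ})$, the ``extra'' random variable $\rB := \rProt_1$, and the conditioning $\rC := (\rC^{>\rZ},\rD^{>\rZ},\rZ)$. This yields
\[
\mi{(\rA_{\rZ},\rB_{\rZ})}{\rProt_1 \mid \rC^{>\rZ},\rD^{>\rZ},\rZ}
= \Ex_{E=(\Prot_1,C^{>z},D^{>z},z)} \Bracket{ \kl{\distribution{(\rA_z,\rB_z) \mid \rC^{>z}=C^{>z},\rD^{>z}=D^{>z},\rZ=z}}{\mu_e(E)} },
\]
because $\mu_e(E)$ is precisely the conditional distribution of $(\rA_z,\rB_z)$ that additionally conditions on $\rProt_1 = \Prot_1$.

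Next I would identify the first distribution inside the KL-divergence with $\dist_{r-1}$ (up to relabeling). By Observation~\ref{obs:construct}, once we fix $(\rC^{>z},\rD^{>z})$ and $\rZ=z$, the special sub-instance $(\rA_z,\rB_z)$ is a bijective image of the originally sampled pair $(\rC_z,\rD_z)$ (the operators applied to $(\rC_z,\rD_z)$ depend only on the higher-index data we are conditioning on), and $(\rC_z,\rD_z)$ is drawn from $\Instance(r-1) = \dist_{r-1}$ independently of everything of larger index. Since total variation distance and KL-divergence are invariant under this bijective relabeling, we may treat $\distribution{(\rA_z,\rB_z) \mid \rC^{>z},\rD^{>z},\rZ=z}$ as $\dist_{r-1}$ throughout, so the displayed expectation equals $\Ex_{E}\bigl[\kl{\dist_{r-1}}{\mu_e(E)}\bigr]$.

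Finally I would pass from KL to total variation: applying Pinsker's inequality (Fact~\ref{fact:pinskers}) for each fixed $E$, together with symmetry of total variation distance and Jensen's inequality for the concave function $\sqrt{\cdot}$,
\[
\Ex_{E}\Bracket{\tvd{\mu_e(E)}{\dist_{r-1}}} \;\leq\; \Ex_{E}\Bracket{\sqrt{\tfrac12 \kl{\dist_{r-1}}{\mu_e(E)}}} \;\leq\; \sqrt{\tfrac12\, \Ex_{E}\Bracket{\kl{\dist_{r-1}}{\mu_e(E)}}} \;=\; \sqrt{\tfrac12 \cdot \mi{(\rA_{\rZ},\rB_{\rZ})}{\rProt_1 \mid \rC^{>\rZ},\rD^{>\rZ},\rZ}}.
\]
By Lemma~\ref{lem:ds-e} the mutual information is at most $\ell/N$, and since $\prot_r$ has worst-case message length $\ell = o(N/r^2)$, the right-hand side is $\sqrt{o(1/r^2)} = o(1/r)$, which is the claimed bound.

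I do not expect a serious obstacle: the argument is essentially ``small mutual information $\Rightarrow$ small average TVD'' by Pinsker plus Jensen, once Lemma~\ref{lem:ds-e} is in hand. The only point that needs care is the identification in the second step—verifying that the distribution of the special sub-instance after removing the conditioning on $\rProt_1$ is genuinely the fresh sample $\dist_{r-1}$, which is exactly what Observation~\ref{obs:construct} provides (the slope-shift and origin-shift operators touching $(\rC_z,\rD_z)$ are functions only of the instances we have conditioned on). The odd-$r$ case is identical, using Lemma~\ref{lem:ds-o}, Observation~\ref{obs:construct}, and $\mu_o$ with $(\rC^{<\rZ},\rD^{<\rZ})$ in place of $(\rC^{>\rZ},\rD^{>\rZ})$.
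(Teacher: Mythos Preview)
Your proposal is correct and mirrors the paper's own proof essentially step for step: expand the conditional mutual information via Fact~\ref{fact:kl-info}, identify the unconditioned-on-$\rProt_1$ law with $\dist_{r-1}$ using Observation~\ref{obs:construct}, then apply Pinsker's inequality and Jensen, and finally plug in the $o(N/r^2)$ bound from Lemma~\ref{lem:ds-e}. The only cosmetic difference is that you invoke Jensen for the concave square root, whereas the paper invokes it for the convex square; these are equivalent rearrangements of the same inequality.
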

\begin{proof}
	By the connection between mutual information 
	and KL-divergence (Fact~\ref{fact:kl-info}), we have, 
	\begin{align*}
	&\mi{(\rA_{\rZ},\rB_{\rZ})}{\rProt_1 \mid \rC^{>\rZ},\rD^{>\rZ},\rZ} \\
	&\hspace{0.5cm}= \Ex_{E=(\Prot_1,C^{>z},D^{>z},z)}\bracket{\kl{\distribution{\rA_{\rZ},\rB_{\rZ} \mid E \setminus\Prot_1}}{\distribution{\rA_{\rZ},\rB_{\rZ} \mid E}}} \\
	&\hspace{0.5cm}= \Ex_{E=(\Prot_1,C^{>z},D^{>z},z)}\bracket{\kl{\dist_{r-1}}{\mu_e(E)}};
	\end{align*}
	Conditioned on $\rZ=z$, distribution of $(\rA_{\rZ},\rB_{\rZ})$ is the same as the original distribution of $(\rC_{\rZ},\rD_{\rZ})$ by Observation~\ref{obs:construct}. Furthermore, 
	\begin{align*}
	&\Ex_{E=(\Prot_1,C^{>z},D^{>z},z)}\bracket{\kl{\dist_{r-1}}{\mu_e(E)}} \\
	&\hspace{0.5cm}\geq \Ex_{E=(\Prot_r,C^{>z},D^{>z},z)}\bracket{2 \cdot \tvd{\dist_{r-1}}{\mu_e(E)}^2} \tag{Pinsker's inequality (Fact~\ref{fact:pinskers})}\\
	&\hspace{0.5cm}\geq 2 \cdot \paren{\Ex_{E=(\Prot_r,C^{>z},D^{>z},z)}\bracket{\tvd{\dist_{r-1}}{\mu_e(E)}}}^2 \tag{Jensen's inequality}.
	\end{align*}
	By Lemma~\ref{lem:ds-e}, $\mi{(\rA_{\rZ},\rB_{\rZ})}{\rProt_r \mid \rC^{>\rZ},\rD^{>\rZ},\rZ} = o(1/r^2)$, implying that,
	\begin{align*}
	\Ex_{E=(\Prot_r,C^{>z},D^{>z},z)}\bracket{\tvd{\dist_{r-1}}{\mu_e(E)}} = o(1/r).
	\end{align*}
	This finalizes the proof. 
\end{proof}

The following claim for odd-round protocols is analogous to Claim~\ref{clm:dist-close-e} for even-round ones (again note the change in the order of conditioning and the distribution). 

\begin{claim}\label{clm:dist-close-o}
	For any odd integer $r$ and any $r$-round protocol $\prot_r$ with worst-case message length $o(N/r^2)$,  
	\[
	\Ex_{O=(\Prot_1,C^{<z},D^{<z},z)}\bracket{\tvd{\mu_o(O)}{\dist_{r-1}}} = o(1/r).
	\]
\end{claim}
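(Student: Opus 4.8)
The plan is to carry out the exact mirror image of the proof of Claim~\ref{clm:dist-close-e}: everywhere the even-case argument conditions on the ``suffix'' sub-instances $\rC^{>\rZ},\rD^{>\rZ}$ and invokes Lemma~\ref{lem:ds-e}, the odd-case argument conditions on the ``prefix'' sub-instances $\rC^{<\rZ},\rD^{<\rZ}$ and invokes Lemma~\ref{lem:ds-o} instead, and $\mu_e(E)$ is replaced throughout by $\mu_o(O)$. No new ingredients are needed beyond what is already available.

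Concretely, I would first use the identity between conditional mutual information and KL-divergence (Fact~\ref{fact:kl-info}) to write $\mi{(\rA_{\rZ},\rB_{\rZ})}{\rProt_1 \mid \rC^{<\rZ},\rD^{<\rZ},\rZ}$ as $\Ex_{O=(\Prot_1,C^{<z},D^{<z},z)}\bracket{\kl{\distribution{\rA_{\rZ},\rB_{\rZ}\mid O\setminus \Prot_1}}{\mu_o(O)}}$. The odd-round version of Observation~\ref{obs:construct} (a one-to-one correspondence between the special sub-instance $(\rA_{\rZ},\rB_{\rZ})$ and the original, pre-operator $(\rC_{\rZ},\rD_{\rZ})$ once $\rC^{<\rZ},\rD^{<\rZ}$ are fixed) lets me replace $\distribution{\rA_{\rZ},\rB_{\rZ}\mid O\setminus\Prot_1}$ by $\dist_{r-1}$, so the quantity equals $\Ex_{O}\bracket{\kl{\dist_{r-1}}{\mu_o(O)}}$. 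Next, applying Pinsker's inequality (Fact~\ref{fact:pinskers}) inside the expectation and then Jensen's inequality lower bounds this by $2\paren{\Ex_{O}\bracket{\tvd{\dist_{r-1}}{\mu_o(O)}}}^2$. Finally, since $\prot_r$ has worst-case message length $\ell = o(N/r^2)$, Lemma~\ref{lem:ds-o} gives $\mi{(\rA_{\rZ},\rB_{\rZ})}{\rProt_1 \mid \rC^{<\rZ},\rD^{<\rZ},\rZ} \le \ell/N = o(1/r^2)$; combining the two bounds yields $\Ex_{O}\bracket{\tvd{\mu_o(O)}{\dist_{r-1}}} = o(1/r)$, which is the claim.

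I do not expect a genuine obstacle here, since the heavy lifting — controlling the first message via a chain-rule/entropy argument and the removability of the conditioning on $\rZ=z$ — has already been done in Lemma~\ref{lem:ds-o}. The only point worth double-checking is that, for odd $r$, the first message $\rProt_1$ (now sent by Alice) is a function of $A=(A_1,\dots,A_{m_r})$ and all sub-instances $(C_1,D_1),\dots,(C_{m_r},D_{m_r})$, each of which is independent of $\zstar_r$ by Observation~\ref{obs:hidden-special} for $\OddI$; this is exactly what legitimizes dropping the event $\rZ=z$ in the chain-rule step, and it is already subsumed in the statement of Lemma~\ref{lem:ds-o}. Hence the proof is a routine transcription of Claim~\ref{clm:dist-close-e}.
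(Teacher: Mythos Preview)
Your proposal is correct and follows exactly the approach the paper intends: the paper states Claim~\ref{clm:dist-close-o} without proof, noting only that it is analogous to Claim~\ref{clm:dist-close-e} with the change in conditioning (prefix instead of suffix) and distribution ($\mu_o$ instead of $\mu_e$). Your transcription of the even-case proof, substituting Lemma~\ref{lem:ds-o} for Lemma~\ref{lem:ds-e} and invoking the odd-round parts of Observations~\ref{obs:construct} and~\ref{obs:hidden-special}, is precisely what is required.
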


Define the recursive function $\delta(k) = \delta(k-1) - o(1/r)$ with base case $\delta(1) = 1/4$ (here $r$ is the number of rounds). % that we want to prove in the lower bound). 
%We prove the following lemma inductively. 

\begin{lemma}\label{lem:cc-tci-induction}
	Any deterministic $\delta(r)$-error protocol $\prot_r$ on $\dist_r$ requires $\Omega(N/r^2)$ communication. 
\end{lemma}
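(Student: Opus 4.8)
The plan is to establish Lemma~\ref{lem:cc-tci-induction} by induction on the number of rounds, in the round-elimination style outlined in Section~\ref{sec:cc-tci-outline}. Fixing $r$ throughout, I would prove the (formally stronger) statement that for every $k \in \{1,\dots,r\}$, any deterministic $\delta(k)$-error $k$-round protocol on $\dist_k$ has communication $\Omega(N/r^2)$; the lemma is the case $k=r$. The base case $k=1$ is immediate: since $\delta(1)=1/4 < 1/3$, any $\delta(1)$-error $1$-round protocol for $\dist_1$ is in particular a $(1/3)$-error protocol, and Lemma~\ref{lem:cc-tci-1-round} gives it communication $\Omega(n_1)=\Omega(N) \geq \Omega(N/r^2)$.

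For the inductive step, assume the claim for $k-1$ and let $\prot_k$ be a deterministic $\delta(k)$-error $k$-round protocol on $\dist_k$ with worst-case message length $\ell$; I want to deduce $\ell=\Omega(N/r^2)$, so suppose $\ell = o(N/r^2)$ for contradiction. I treat even $k$ (so Bob sends the first message); the odd case is symmetric, with Lemma~\ref{lem:ds-o}, Claim~\ref{clm:dist-close-o}, and Proposition~\ref{prop:OddI-special} replacing their even counterparts and the conditioning on $\rC^{<\rZ},\rD^{<\rZ}$ replacing $\rC^{>\rZ},\rD^{>\rZ}$. By Lemma~\ref{lem:ds-e} and $\ell=o(N/r^2)$ we get $\mi{(\rA_{\rZ},\rB_{\rZ})}{\rProt_1 \mid \rC^{>\rZ},\rD^{>\rZ},\rZ} = o(1/r^2)$, and then Claim~\ref{clm:dist-close-e} gives $\Ex_{E}\bracket{\tvd{\mu_e(E)}{\dist_{k-1}}} = o(1/r)$, where $E=(\Prot_1,C^{>z},D^{>z},z)$.

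Now I would peel off the first round. Condition on $E$: what remains of $\prot_k$ is a deterministic $(k-1)$-round protocol (Alice speaks first, matching the convention for the odd number $k-1$ of remaining rounds) that must output the answer of $\TCI$ on the special sub-instance $(\rA_z,\rB_z)$ — which, by Proposition~\ref{prop:EvenI-special}, is exactly the answer of the whole $\dist_k$-instance — whose conditional law is $\mu_e(E)$. Averaging over $E$, the error of this continuation on $\mu_e(E)$ equals the error $\delta(k)$ of $\prot_k$; combining with the total-variation bound above and Fact~\ref{fact:tvd-small}, in expectation over $E$ the continuation errs with probability at most $\delta(k)+o(1/r) = \delta(k-1)$ on an input drawn from $\dist_{k-1}$. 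Fix an assignment $E^\star=(\Prot_1^\star,C^{>z^\star},D^{>z^\star},z^\star)$ attaining this. I then claim the players can, with shared randomness and no communication, turn a fresh input $(A',B') \sim \dist_{k-1}$ into a $\dist_k$-instance in which $(A',B')$ is the pre-transformation special sub-instance indexed by $z^\star$: they use the hard-wired $(C^{>z^\star},D^{>z^\star})$ for the higher-indexed sub-instances, shared coins for the lower-indexed ones, and the origin/slope-shift operators. By Observation~\ref{obs:construct}, knowing $z^\star$ and $(C^{>z^\star},D^{>z^\star})$ lets Alice map $A'$ to $A_{z^\star}$ and Bob map $B'$ to $B_{z^\star}$, and each player then fills in the remaining blocks on their own ($A_i$ by extending $A_{z^\star}$; $B_i = D_i$). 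By Observation~\ref{obs:hidden-special}, the first speaker, Bob, is oblivious to $z^\star$ given all this data, so hard-wiring $\Prot_1^\star$ as Bob's first message is consistent and the players run rounds $2,\dots,k$ of $\prot_k$. The result is a deterministic $(k-1)$-round $\delta(k-1)$-error protocol on $\dist_{k-1}$ with message length $\leq \ell = o(N/r^2)$, contradicting the inductive hypothesis.

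I expect the delicate point to be exactly that last step — checking that the embedding of $(A',B')$ as the special sub-instance is truly communication-free. The worry is that an origin/slope-shift applied to a block below $z^\star$ formally refers to the special block $C_{z^\star}$ (its left-most point, its largest slope), which only Alice holds; one must verify that these quantities are in fact determined, up to negligible ambiguity, by publicly available data — which is what the rigid, piecewise-linear $\Stepcurve$-based design (the slopes of $\Stepcurve(*,\alpha)$ are $\alpha+i+x_i$, hence known up to $\pm1$) and Observations~\ref{obs:construct}--\ref{obs:hidden-special} are engineered to guarantee — and likewise that the ``fooling'' side information expected by $\prot_k$ on each side is supplied correctly, for both parities of the construction. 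A secondary, purely bookkeeping matter is that the $o(1/r)$ losses incurred at each of the at most $r$ levels sum to $o(1)$, which is why $\delta(\cdot)$ is defined to decrease by $o(1/r)$ per level from $\delta(1)=1/4$, leaving $\delta(r)=1/4-o(1)$ bounded away from $1/2$. Finally, Yao's minimax principle~\cite{Yao83} lifts the bound to randomized protocols, and since an $O(1)$-fold parallel repetition with majority vote reduces the error of any $(1/3)$-error $r$-round protocol below $\delta(r)$ at an $O(1)$ blow-up in communication (and no blow-up in rounds), Theorem~\ref{thm:cc-tci} follows on substituting $N=n^{1/r}$.
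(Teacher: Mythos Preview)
Your proposal is correct and follows essentially the same round-elimination argument as the paper: the paper also inducts on the number of rounds, uses Lemmas~\ref{lem:ds-e}/\ref{lem:ds-o} and Claims~\ref{clm:dist-close-e}/\ref{clm:dist-close-o} to control the first message, embeds the $(k-1)$-round instance as the special sub-instance via the slope/origin-shift operators, runs $\prot_k$ from round~2, and derandomizes by averaging. The only cosmetic differences are that the paper samples $E$ with public randomness and then fixes it (rather than fixing a good $E^\star$ first) and has Bob sample the lower-indexed sub-instances with \emph{private} rather than shared randomness; your flagged concern about whether the embedding is truly communication-free is legitimate, and the paper handles it only by the terse assertion that ``Bob knows all of $(C_1,D_1),\ldots,(C_{m_r},D_{m_r})$''.
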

\begin{proof}
	The proof is by induction on the number of rounds. The base case for $r=1$ follows from Lemma~\ref{lem:cc-tci-1-round}.
	We now prove the induction step. 
	
	Suppose the lemma holds for all integers up to $r - 1$, we prove it for $r$-round protocols. Given a $r$-round protocol $\prot_r$ for $\dist_r$ that violates the 
	induction hypothesis, we construct a $(r-1)$-round protocol $\prot_{r-1}$ for $\dist_{r-1}$ that also violates the induction hypothesis, a contradiction. The protocol $\prot_{r-1}$ is constructed in two steps: we first construct a 
	randomized protocol $\prot'$ from $\prot_{r}$, and then fix the randomness of the protocol to achieve a deterministic protocol. 
	
	We now describe $\prot'$. For simplicity, we only give the protocol  for  even choices of $r$; the extension to 
	odd values is straightforward.  Given an instance $(A,B) \sim \dist_{r-1}$, protocol $\prot'$ works as follows: 
	\begin{tbox}
		\begin{enumerate}
			\item Using \underline{public randomness}, the players sample $(\Prot_1,C^{>z},D^{>z},\zstar_r)$ from the distribution $\dist_r$.
			\item Bob samples remaining coordinates $(C_j,D_j)$ for $j < \zstar_r$ using \underline{private randomness} from distribution $\dist_r \mid (\Prot_1,C^{>z},D^{>z},\zstar_r)$. 
			\item Alice sets $A_{\zstar_r} = A$ and Bob sets $B_{\zstar_r} = B$ by applying the appropriate slope-shift and origin-shift operators based on $C^{>z},D^{>z}$ which is known to both Alice and Bob. 
			\item The players then fill the rest of their input in $(A_1,\ldots,A_{m_r})$ and $(B_1,\ldots,B_{m_r})$; Bob knows all of $(C_1,D_1),\ldots,(C_{m_r},D_{m_r})$ and can perform the needed slope-shift and origin-shift operators, and Alice simply needs to 
			extend $A_{\zstar_r}$ across straight lines. 
			\item The players run $\prot_r$ on these new points from the second round onwards, assuming that the first communicated message was $\Prot_1$. They output the index returned by $\prot_r$.
		\end{enumerate}
	\end{tbox}
	Communication cost of $\prot'$ is clearly at most as the communication cost of $\prot_{r}$. We now prove the correctness of $\prot'$. 
	\begin{claim}\label{clm:prot'-correct}
		Assuming $\prot_r$ is a $\delta(r)$-error protocol for $\dist_r$, $\prot'$ will be a $(\delta(r)+o(1/r))$-error protocol for $\dist_{r-1}$. 
	\end{claim}
	\begin{proof}
		We have,
		\begin{align*}
		& \Pr_{\dist_{r-1}}\paren{\prot'~\errs} \\
		&= \Ex_{E=(\Prot_1,C^{>z},D^{>z},\zstar_r)}\bracket{\Pr_{\dist_{r-1}}\paren{\prot_r~\errs \mid E}} \tag{by Proposition~\ref{prop:EvenI-special}} \\
		&\leq \Ex_{E=(\Prot_1,C^{>z},D^{>z},\zstar_r)}\bracket{\Pr_{\mu_e(E)}\paren{\prot_r~\errs} + \tvd{\mu_e(E)}{\dist_{r-1}}} \tag{by Fact~\ref{fact:tvd-small}}\\
		&= \Pr_{\dist_r}\paren{\prot_r~\errs} + \Ex_{E=(\Prot_1,C^{>z},D^{>z},\zstar_r)}\bracket{\tvd{\mu_e(E)}{\dist_{r-1}}} \tag{by linearity of expectation} \\
		&\leq \delta(r) + o(1/r^2) \tag{$\prot_r$ is a $(\delta(r))$-error protocol, and by Claim~\ref{clm:dist-close-e}},
		\end{align*}
		finalizing the proof. 		
	\end{proof}
	
	We are now ready to complete the proof of Lemma~\ref{lem:cc-tci-induction}. By Claim~\ref{clm:prot'-correct}, $\prot'$ is a $(\delta(r)+o(1/r))$-error protocol for $\dist_{r-1}$. $\prot'$ is a randomized protocol. 
	However, by an averaging argument, we can fix the randomness of $\prot'$ to obtain a deterministic  $(\delta(r)+o(1/r))$-error protocol for $\dist_{r-1}$ with the same communication cost $o(N/r^2)$. As 
	$\delta(r) + o(1/r) = \delta(r-1)$, this contradicts the induction hypothesis. We thus have that the communication cost of $\prot_r$ is $\Omega(N/r^2)$, proving the induction step. This concludes the proof.
\end{proof}

Lemma~\ref{lem:cc-lb} now follows immediately from Lemma~\ref{lem:cc-tci-induction} as $\delta(r) = 1/4 + \sum_{k=1}^{r}o(1/r) = 1/4 + o(1) < 1/3$, and by the easy direction of Yao's minimax principle~\cite{Yao83}.

\subsubsection{Proof of Theorem~\ref{thm:cc-tci}}\label{sec:cc-tci-proof}

%We can now finally conclude the proof of Theorem~\ref{thm:cc-tci}.

\begin{proof}[Proof of Theorem~\ref{thm:cc-tci}]
	By Lemma~\ref{lem:cc-lb}, any $(1/3)$-error $r$-round protocol for $\TCI_n$ requires $\Omega(N/r^2)$ communication on instances of $\dist_r$. In these instances, $n=N^r$ by the construction of $\dist_r$. Plugging in $N = n^{1/r}$, we obtain 
	$\CCr{\TCI_n}{r} = \Omega(\frac{1}{r^2} \cdot n^{1/r})$.  
	
	We conclude this proof by making the following remark: A $r$-round instance of our problem consists of at most $N^{r-1}$ applications of $\Stepcurve$, each having a larger slope than the 
	previous one by an additive factor of $N$. As a result, the largest slope using in our construction is $N^{O(r)}$. This implies that the bit-complexity of the numbers we use is bounded by $\log{(N^{O(r)})}  = O(\log{n})$. 
\end{proof}

As a corollary of Theorem~\ref{thm:cc-tci}, using the connection between two-curve intersection problem and linear programming outlined in Section~\ref{sec:tci-problem}, we obtain the following. 

\begin{corollary}\label{cor:LP-cc}
	For any integer $r \geq 1$, any two-player $r$-round protocol for $2$-dimensional linear programming with $n$ constraints requires $\Omega(\frac{1}{r^2} \cdot n^{1/r})$ communication. 
	%+ \frac{\log{n}}{\log\log{n}}
\end{corollary}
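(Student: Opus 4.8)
The plan is to reuse, essentially verbatim, the reduction from $\TCI$ to $2$-dimensional linear programming already described in Section~\ref{sec:tci-problem} (and illustrated in Figure~\ref{fig:TCI-LP-b}), together with the single observation that this reduction is \emph{communication-free} and \emph{round-preserving}; the communication lower bound of Theorem~\ref{thm:cc-tci} then transfers immediately.

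In detail, given an instance of $\TCI_n$ in which Alice holds $\seq{a_1,\ldots,a_n}$ and Bob holds $\seq{b_1,\ldots,b_n}$, each player locally builds their share of an LP instance: Alice turns each of her $n-1$ segments $\big((i,a_i),(i+1,a_{i+1})\big)$ into the corresponding half-plane constraint of Figure~\ref{fig:TCI-LP-b}, and Bob does the same with his $n-1$ segments. The key point is that Alice's constraints depend only on $A$ and Bob's only on $B$, so the two players produce a jointly-held $2$-dimensional linear program with $2(n-1)=\Theta(n)$ constraints using \emph{no communication at all}. As spelled out in Section~\ref{sec:tci-problem}, the promised monotonicity and convexity of $A$ and $B$ guarantee that the feasible region of this LP is convex and that its lowest point in the $y$-direction is precisely the point at which Alice's curve first rises above Bob's; rounding down the $x$-coordinate of this optimal vertex recovers $\istar$. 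Consequently, any $r$-round protocol that lets Bob output an optimal vertex (equivalently, an optimal basis) of a $2$-dimensional LP can be converted, with the same number of rounds and the same number of communicated bits, into an $r$-round protocol for $\TCI_n$; hence $\CCr{\TCI_n}{r}\le\CCr{L_n}{r}$, where $L_n$ denotes $2$-dimensional linear programming with $\Theta(n)$ constraints.

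Plugging Theorem~\ref{thm:cc-tci} into this inequality yields $\CCr{L_n}{r}=\Omega\big(\tfrac{1}{r^2}\,n^{1/r}\big)$, and since replacing $\Theta(n)$ by $n$ constraints only rescales the bound by a factor $2^{-1/r}=\Theta(1)$, we get the claimed lower bound $\Omega\big(\tfrac{1}{r^2}\,n^{1/r}\big)$ for $2$-dimensional LP with $n$ constraints. Two properties inherited from the proof of Theorem~\ref{thm:cc-tci} should be recorded: its hard instances use numbers of bit-complexity $O(\log n)$, and a line through two such points again has $O(\log n)$-bit coefficients, so the lower bound already holds for LP instances with polynomially bounded coefficients; and the reduction is a fixed deterministic mapping, so it preserves error and the bound applies to randomized protocols with constant error (matching the $2/3$-success statement of Result~\ref{res:lower}).

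Finally, on where the difficulty lies: for this corollary there is essentially no obstacle, since all the technical weight sits in Theorem~\ref{thm:cc-tci}, which we are free to assume. The only points needing (routine) verification are that the geometry behaves exactly as claimed in Section~\ref{sec:tci-problem} --- i.e.\ the half-planes from $A$ (convex, increasing) and from $B$ (concave, decreasing) intersect in a convex region whose $y$-lowest vertex sits above the curve-crossing, and that the floor operation returns $\istar$ rather than $\istar\pm1$ under the $\TCI$ promise --- and that the reduction does not inflate the number of constraints beyond $\Theta(n)$; neither is a genuine difficulty.
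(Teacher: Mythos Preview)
Your proposal is correct and takes essentially the same approach as the paper: the paper's own proof is a one-line invocation of the communication-free, round-preserving reduction from $\TCI$ to $2$-dimensional LP described in Section~\ref{sec:tci-problem}, combined with Theorem~\ref{thm:cc-tci}. Your write-up simply spells out this reduction and its properties in more detail (and additionally records the bit-complexity and error-preservation remarks), but the argument is identical in substance.
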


\subsection{Lower Bounds for Linear Programming in Big Data Models}\label{sec:lb-implication}

We now give some straightforward applications of our communication complexity lower bound for linear programming to streaming and coordinator models, and formalize Result~\ref{res:lower}. 
%We also show further extension of our lower bound to higher dimensions. 

\paragraph{The Streaming Model.}\label{sec:lower-stream}
It is well-known that communication complexity lower bounds imply space lower bounds on the space complexity of streaming algorithms (see, e.g.~\cite{AMS99,GM08}). Using this connection in conjunction with Corollary~\ref{cor:LP-cc},
we have,  
to establish the following theorem. 

\begin{theorem}\label{thm:lower-stream}
	For any integer $r \geq 1$, any streaming algorithm that makes $r$ passes over the constraints of a $2$-dimensional linear program with $n$ constraints and finds the optimal solution with probability at least $2/3$ 
	requires $\Omega(\frac{1}{r^3} \cdot n^{1/2r})$ space. 
\end{theorem}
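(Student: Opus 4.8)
The plan is the classical simulation argument that turns a multi-pass streaming algorithm into a two-party communication protocol, combined with Corollary~\ref{cor:LP-cc}.

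First I would set up the reduction on the hard instances underlying Corollary~\ref{cor:LP-cc}, namely the $2$-dimensional linear programs obtained from $\TCI_n$ via the reduction of Section~\ref{sec:tci-problem}, where Alice holds $\Theta(n)$ constraints and Bob holds $\Theta(n)$ constraints. Given a purported $r$-pass streaming algorithm $\mathcal{A}$ for $2$-dimensional linear programming using space $s$, feed it the stream that lists Alice's block of constraints first and Bob's block second. Alice and Bob simulate $\mathcal{A}$ as follows: in pass $i$, Alice runs $\mathcal{A}$ on her block starting from the memory state at the end of pass $i-1$ (which she holds), then sends the resulting $s$-bit state to Bob; Bob continues $\mathcal{A}$ on his block and sends the $s$-bit state back; after the $r$-th pass Bob outputs $\mathcal{A}$'s answer. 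This is an $O(r)$-round protocol with total communication $O(r\cdot s)$ bits, and it errs with probability at most $1/3$ whenever $\mathcal{A}$ does, with the internal randomness of $\mathcal{A}$ playing the role of shared (public) randomness, which is permissible for the communication lower bound.

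Second I would invoke Corollary~\ref{cor:LP-cc} with $2r$ rounds: any $2r$-round protocol for $2$-dimensional linear programming with $\Theta(n)$ constraints requires $\Omega\big(\tfrac{1}{(2r)^2}\cdot n^{1/(2r)}\big) = \Omega\big(\tfrac{1}{r^2}\cdot n^{1/2r}\big)$ bits of communication. Comparing with the $O(r\cdot s)$ communication cost of our simulation gives $r\cdot s = \Omega\big(\tfrac{1}{r^2}\, n^{1/2r}\big)$, hence $s = \Omega\big(\tfrac{1}{r^3}\, n^{1/2r}\big)$, which is exactly the claimed bound.

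There is no genuine obstacle here; the only points needing (minor) care are: (i) the stream order is chosen by the reduction, so we may assume the constraints arrive as ``Alice's block, then Bob's block,'' and a correct streaming algorithm must in particular be correct on such streams; (ii) the constant blow-up — the linear program produced from $\TCI_n$ has $2n$ rather than $n$ constraints — only affects hidden constants since Corollary~\ref{cor:LP-cc} is already stated in terms of the number of constraints; and (iii) the round-parity convention of Section~\ref{sec:cc-background} — our simulation yields a protocol in which Alice speaks first (the odd-round convention), so one may equally apply Corollary~\ref{cor:LP-cc} with $2r-1$ rounds, which only improves the bound. The loss of the extra factor of $1/r$ relative to Corollary~\ref{cor:LP-cc} is precisely the combined cost of the $r$ passes (each contributing $\Theta(s)$ bits in each direction) and the $(2r)^2$ denominator coming from the $2r$-round version of the communication lower bound.
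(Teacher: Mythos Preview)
Your proposal is correct and follows exactly the paper's approach: simulate the $r$-pass, $s$-space streaming algorithm by a $2r$-round two-party protocol with $O(r\cdot s)$ communication, then apply Corollary~\ref{cor:LP-cc}. The paper's own proof is in fact a two-sentence sketch of precisely this standard reduction, so your write-up simply spells out the details (stream ordering, handling of randomness, the $2n$-versus-$n$ constant, and the round-parity convention) that the paper leaves implicit.
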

\begin{proof}
	It is a standard fact that any streaming algorithm with $p$-passes and $s$-space can be turned into a communication protocol in the two-party communication model with at most $2p$-rounds and $O(p \cdot s)$ communication; see, e.g.~\cite{GM08}.
	The lower bound on the space complexity of the streaming algorithms now follows from Corollary~\ref{cor:LP-cc}. 
\end{proof}
%%\begin{proof}
%%	It is a standard fact that any streaming algorithm with $p$-passes and $s$-space can be turned into a communication protocol in the two-party communication model with at most $2p$-rounds and $O(p \cdot s)$ communication; see, e.g.~\cite{GM08}.
%%	The lower bound on the space complexity of the streaming algorithms now follows from Corollary~\ref{cor:LP-cc}. 
%%\end{proof}

\paragraph{The Coordinator Model.}\label{sec:lower-coordinator}
Any $r$-round distributed protocol implies a $2r$-round protocol in our communication model. Hence, 

\begin{theorem}\label{thm:lower-coordinator}
	For any integer $r \geq 1$, any $r$-round algorithm that finds the optimal solution of a $2$-dimensional linear program with $n$ constraints partitioned across $k \geq 2$ sites in the coordinator model with probability at least $2/3$ requires 
	$\Omega(\frac{1}{r^2} \cdot n^{1/2r})$ communication. 
\end{theorem}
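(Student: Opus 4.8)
The plan is to reduce $r$-round coordinator protocols to two-party communication protocols and then invoke Corollary~\ref{cor:LP-cc}. First I would take any $r$-round coordinator-model algorithm $\mathcal{A}$ for $2$-dimensional linear programming that succeeds with probability at least $2/3$, and restrict attention to input instances in which all $n$ constraints are distributed across only the first two sites $P_1$ and $P_2$ (the remaining $k-2$ sites, if any, receive no constraints); a communication lower bound on this restricted family is a valid lower bound for the general $k \geq 2$ case. On such instances I would let Alice hold $\S_1$ (the constraints given to $P_1$) and Bob hold $\S_2$ (the constraints given to $P_2$), and have Bob additionally simulate the coordinator. This is possible because in our linear-programming instances the coordinator holds no private input and the function $f$ is public knowledge, so Bob can carry out everything the coordinator and $P_2$ do.

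Next I would translate the round structure. In each round of $\mathcal{A}$ the coordinator sends a message to each site and each site replies; all traffic between the coordinator and $P_2$ is now internal to Bob and costs nothing in the two-party model, so the only inter-player communication is the coordinator-to-$P_1$ message (a message from Bob to Alice) followed by the $P_1$-to-coordinator reply (a message from Alice to Bob). Hence one round of $\mathcal{A}$ becomes at most two rounds of two-party communication with Bob speaking first, which matches the even-round convention of the model in Section~\ref{sec:cc-background}; an $r$-round coordinator protocol therefore yields a $2r$-round two-party protocol, and the total number of bits exchanged by Alice and Bob is at most the total communication of $\mathcal{A}$. Since $\mathcal{A}$ solves $2$-dimensional linear programming with probability at least $2/3$, so does the resulting protocol.

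Finally I would apply Corollary~\ref{cor:LP-cc} with $2r$ rounds: the two-party protocol must exchange $\Omega\paren{\frac{1}{(2r)^2}\cdot n^{1/2r}} = \Omega\paren{\frac{1}{r^2}\cdot n^{1/2r}}$ bits, and this is therefore a lower bound on the total communication of $\mathcal{A}$, which is what the theorem asserts.

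There is no substantial obstacle here; the argument is a routine simulation, and all the real work — the round-elimination construction of the hard distribution $\dist_r$ and its information-theoretic analysis — has already been carried out in proving Theorem~\ref{thm:cc-tci} and Corollary~\ref{cor:LP-cc}. The only points requiring care are (i) confirming that the coordinator can be folded into one of the two players, which relies on the coordinator carrying no input and on $f$ being common knowledge in the LP-type setup; (ii) checking that assigning the extra $k-2$ sites empty input is legitimate, so that a $k=2$ hard instance certifies a bound for every $k \geq 2$; and (iii) observing that the factor-$2$ blow-up in the number of rounds is absorbed into the $\Omega(\cdot)$, changing the $\tfrac{1}{r^2}$ term only by a constant factor of $4$.
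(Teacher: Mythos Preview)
Your proposal is correct and follows essentially the same approach as the paper, which simply observes that any $r$-round coordinator protocol yields a $2r$-round two-party protocol with the same communication cost and then applies Corollary~\ref{cor:LP-cc}. Your write-up is in fact more detailed than the paper's one-sentence ``straightforward reduction,'' but the underlying simulation and the invocation of the corollary with $2r$ rounds are identical.
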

\begin{proof}
	One can turn any $r$-round algorithm in the coordinator model into a $(2r)$-round communication algorithm in the two-party communication model with the same communication cost using the straightforward reduction. 
	The lower bound on the communication cost of  algorithms in the coordinator model now follows from Corollary~\ref{cor:LP-cc}. 
\end{proof}

\section*{Acknowledgments} Qin Zhang would like to thank Yufei Tao for introducing the problem (as well as the two-curve intersection problem as a means toward proving a lower bound for linear programming).
%Nikolai Karpov and Qin Zhang are  
\bibliographystyle{plain}
\bibliography{paper,general}

%%\clearpage
%%\appendix

\end{document}